\documentclass{article}



     \usepackage[preprint,nonatbib]{neurips_2020}



\usepackage[utf8]{inputenc} 
\usepackage[T1]{fontenc}    
\usepackage{url}            
\usepackage{booktabs}       
\usepackage{amsfonts}       
\usepackage{nicefrac}       
\usepackage{microtype}      
\usepackage{caption}        
\usepackage{lscape}       

\usepackage{amsmath,amssymb}
\usepackage[mathscr]{euscript} 
\usepackage{amsthm} 
\usepackage{tikz} 
\usepackage{graphicx} 
\usepackage{framed} 
\usepackage{multirow}
\usepackage{color}
\usepackage{subfigure}
\usepackage{longtable}
\usepackage{MnSymbol}
\usepackage{xcolor}  \definecolor{shadecolor}{rgb}{.95,.95,.95}  
\usepackage[hidelinks]{hyperref} 
\bibliographystyle{plain}


\usepackage[ruled,vlined]{algorithm2e}

\newtheorem{myDefinition}{Definition}
\newtheorem{myTheorem}{Theorem}
\newtheorem{myLemma}{Lemma}


\newcommand{\bs}[1]{\boldsymbol{#1}}


\title{On the Identifiability of Phylogenetic \\ Networks
  under a
  Pseudolikelihood model}

%

\author{%
  Claudia Sol\'is-Lemus\thanks{corresponding author:
    solislemus@wisc.edu}\\
  Wisconsin Institute for Discovery \\
  Department of Plant Pathology \\
  University of Wisconsin-Madison\\
  \And
  Arrigo Coen\\
  Wisconsin Institute for Discovery \\
  University of Wisconsin-Madison\\
  \And
  C\'ecile An\'e \\
  Department of Statistics \\
  Department of Botany \\
  University of Wisconsin-Madison\\
}

\begin{document}

\maketitle

\begin{abstract}
  The Tree of Life is the graphical structure that represents the
  evolutionary process from single-cell organisms at the origin of
  life to the vast biodiversity we see today. Reconstructing this tree
  from genomic sequences is challenging due to the variety of
  biological forces that shape the signal in the data, and many of
  those processes like incomplete lineage sorting and hybridization
  can produce confounding information. Here, we present the
  mathematical version of the identifiability proofs of phylogenetic
  networks under the pseudolikelihood model in SNaQ
  \cite{Solis-Lemus2016}. We establish that the ability to detect
  different hybridization events depends on the number of nodes on the
  hybridization blob, with small blobs (corresponding to closely
  related species) being the hardest to be detected. Our work focuses
  on level-1 networks, but raises attention to the importance of
  identifiability studies on phylogenetic inference methods for
  broader classes of networks.
\end{abstract}

\section{Introduction}
\label{introSec}


\noindent \textbf{Background.} The Tree of Life is a massive graphical
structure which depicts all living organisms. Graphical structures
that represent evolutionary processes are denoted phylogenetic
trees. In mathematical terms, a phylogenetic tree is a fully
bifurcating tree in which internal nodes represent ancestral species
that over time differentiate into two separate species giving rise to
its two children nodes (see Figure \ref{netEx}). The evolutionary
process is then depicted by this bifurcating tree from the root (the
origin of life) to the external nodes of the tree (also denoted
leaves) which represent the immense biodiversity of living species in
present time.

Recently, scientists have challenged the notion that evolution can be
represented with a fully bifurcating process, as this process cannot
capture important biological realities like hybridization,
introgression or horizontal gene transfer, that require two fully
separated branches to join back. Thus, recent years have seen an
explosion of methods to reconstruct phylogenetic networks, which
naturally account for reticulate evolution (see \cite{Degnan2018} for
a review on network methods).

Estimating phylogenetic networks from genomic data is challenging
because the signal in the data is confounded by the myriad of
biological processes that shape it. Work on theoretical guarantees of
network estimation methods are still lacking (but see \cite{Pardi2015,
  Zhu2018, Francis2018, Banos2019}). Here, we characterize which types
of phylogenetic networks can be estimated from genomic data, thus
providing theoretical guarantees to phylogenetic inference methods.
In particular, we focus on the pseudolikelihood model in
\cite{Solis-Lemus2016}, and follow up on their work by providing more
mathematical details that had to be excluded for biological audiences.
Our rationale for the mathematical details of the proofs is to allow
further mathematical developments to be built upon what we
demonstrated.

\noindent \textbf{Main findings.} Hybridization events create blobs
(or cycles) in the network (see Figure \ref{netEx}). We prove that a
hybridization cycle can be detected under a pseudolikelihood model
only if it spans 4 or more nodes. This rules out gene flow between
closely related species, such as sister species.  Our work
provides theoretical guarantees to the pseudolikelihood estimation of
larger hybridization cycles, while bringing up attention to the need
for novel models and methods to estimate gene flow between closely
related species. The theoretical guarantees for the
pseudolikelihood estimation\cite{Solis-Lemus2016} are paramount since
it is a highly scalable and parallelizable method which has the
potential to analyze big genomic data with high accuracy.

\noindent \textbf{Comparison to \cite{Solis-Lemus2016}.}  In 2016,
\cite{Solis-Lemus2016} published the first pseudolikelihood estimation
method for phylogenetic networks, and this work studies the
detectability of hybridization events and the estimability of
numerical parameters for the restricted case of level-1 networks.
However, in this paper, the mathematical proofs of identifiability
could not presented with the necessary mathematical rigor given that
the manuscript was published in biological journal.  The lack of
mathematical details in \cite{Solis-Lemus2016} makes difficult to
expand the work to broader classes of networks.  Our work fills this
gap by providing the mathematical details of the identifiability
proofs in \cite{Solis-Lemus2016} in hopes to open the door to future
developments on this important topic.


\begin{figure}
\centering
\includegraphics[scale=0.12]{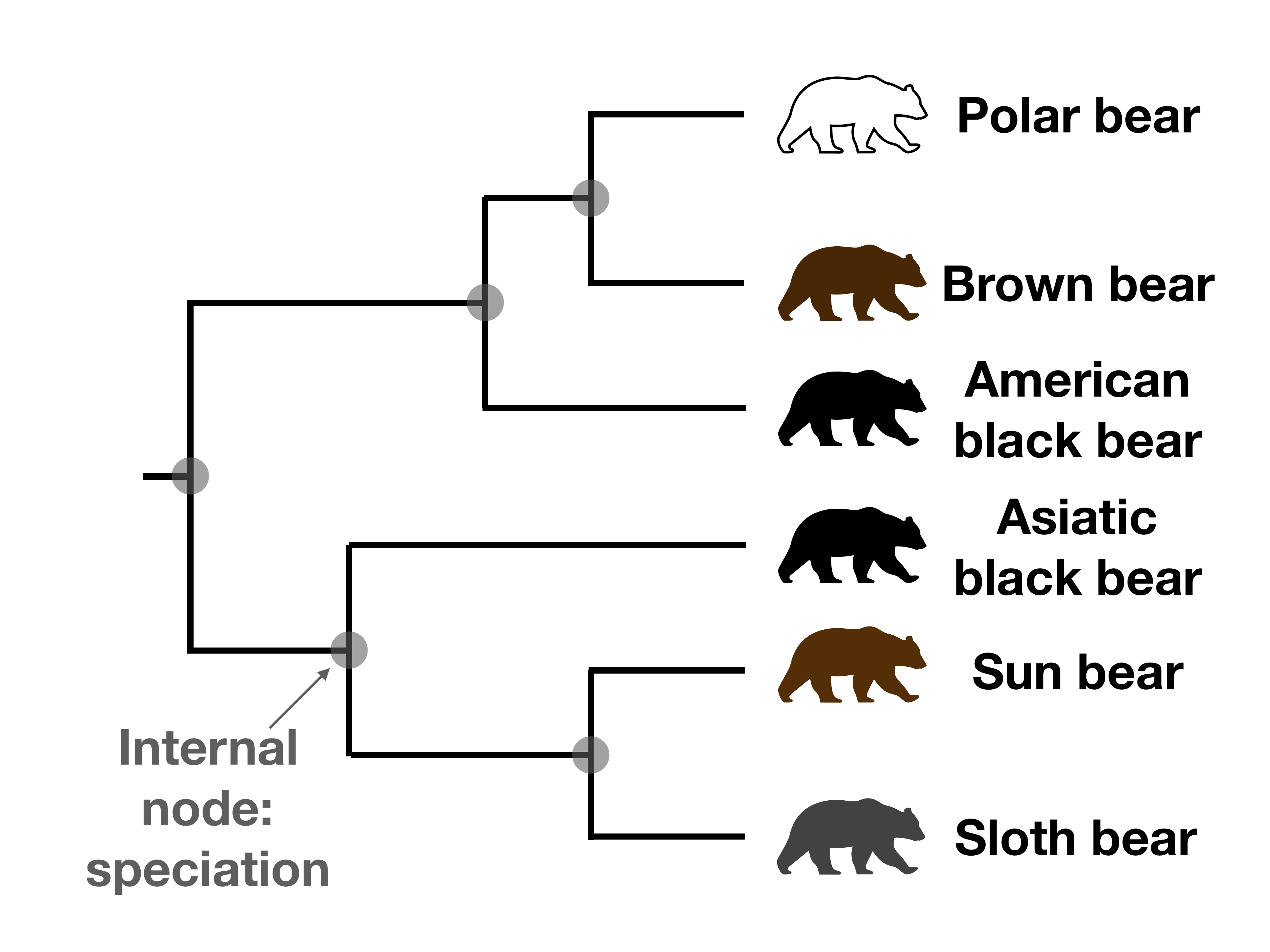}
\includegraphics[scale=0.12]{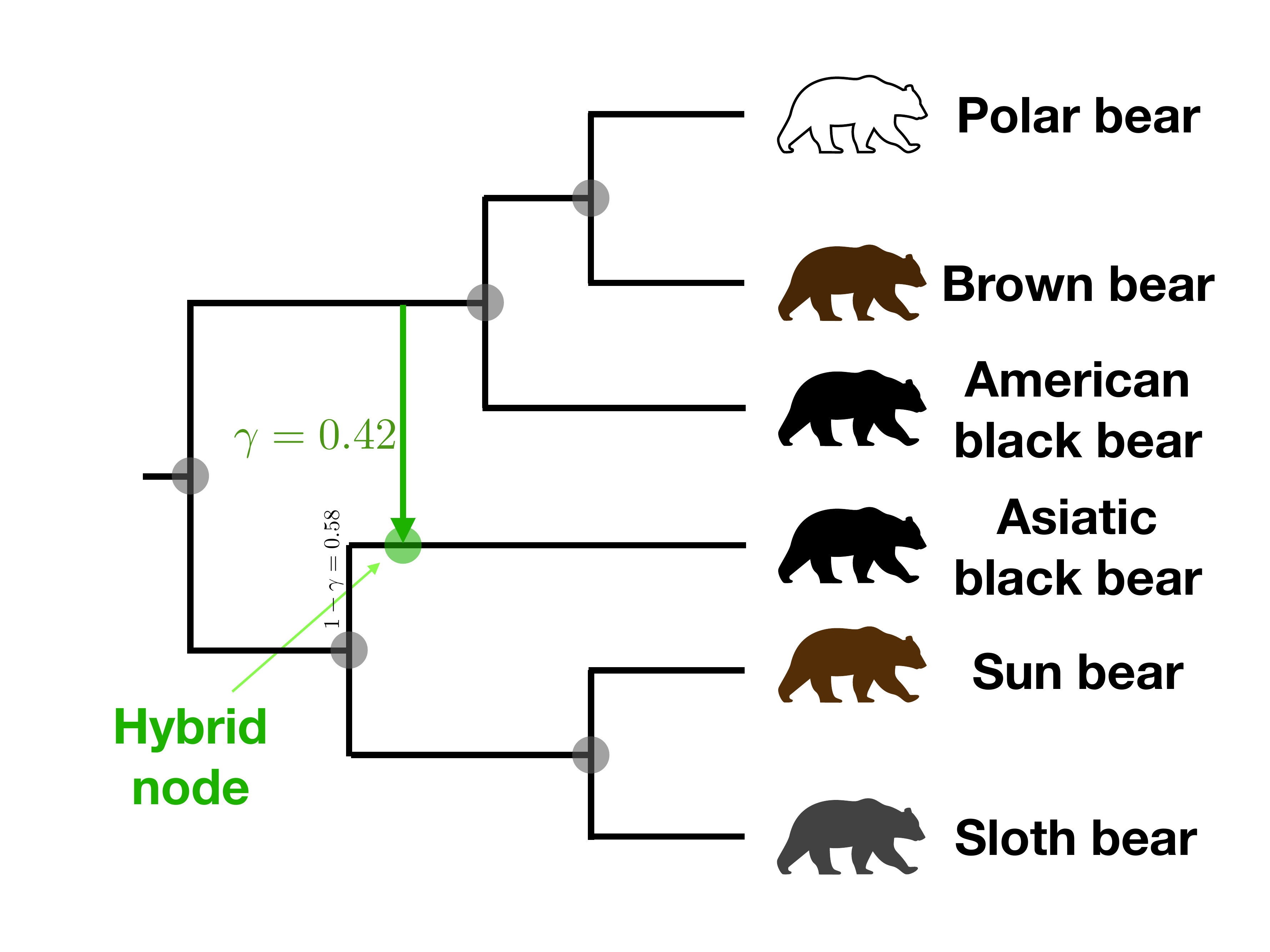}
\includegraphics[scale=0.12]{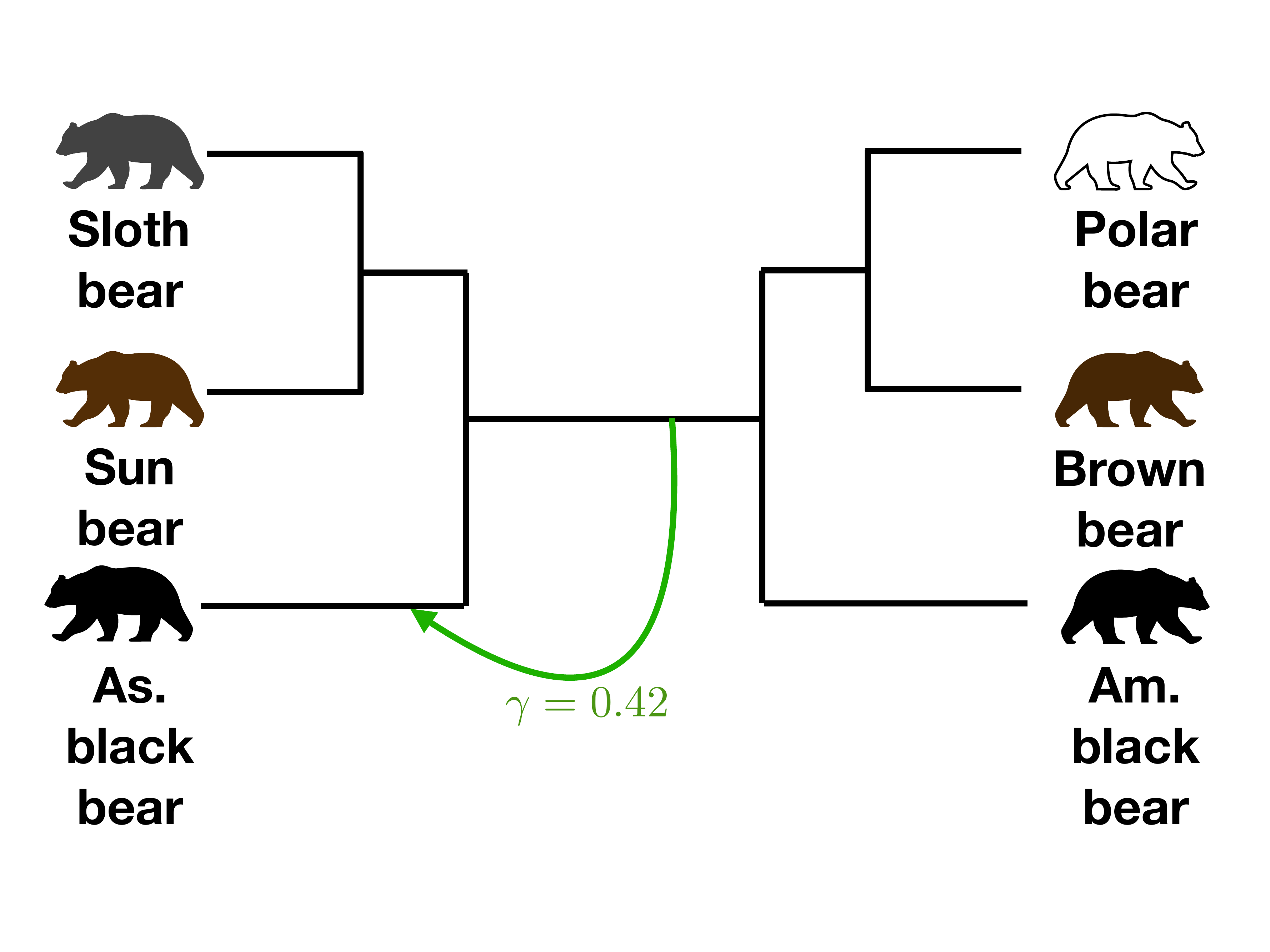}
\caption{Left: Rooted phylogenetic tree on bear
  species\cite{Kumar2017}. Internal nodes represent speciation events
  and are depicted with gray circles (usually omitted on phylogenetic
  trees).  Center: Rooted phylogenetic network on bear species
  \cite{Kumar2017} with $h=1$ hybridization event represented by the
  green arrow. Hybrid node represents a reticulation event and is
  depicted in green (usually omitted on phylogenetic networks). The
  hybrid node has two parent edges: minor hybrid edge in green labeled
  $\gamma=0.42$ and major hybrid edge in black labeled
  $1-\gamma=0.58$. The reticulation event can represent different
  biological processes: hybridization, horizontal gene transfer or
  introgression.  Right: Semi-directed network for the same biological
  scenario in center. Although the root location is unknown, its
  position is constrained by the direction of the hybrid edges. For
  example, the Asiatic black bear cannot be an outgroup.}
\label{netEx}
\end{figure}

\section{Topology identifiability: detectability of specific
  hybridizations}

Our main result characterizes which hybridization events in a level-1
phylogenetic network are detectable from a set of concordance factors
as input data under a pseudolikelihood model as in 
\cite{Solis-Lemus2016}.

Our main parameter of interest is the topology $\mathcal{N}$ of a
phylogenetic network along with the numerical parameters of the vector
of branch lengths ($\bs{t}$) and a vector of inheritance probabilities
($\bs{\gamma}$), describing the proportion of genes inherited by a
hybrid node from one of its hybrid parent (see Figure \ref{netEx}).
We assume this network is explicit, level-1 and semi-directed as in
\cite{Solis-Lemus2016}. See Supplementary Material (Section
\ref{parInt}) for more details on the network.

The data for our pseudolikelihood estimation method is a collection of
estimated gene trees $\{ G_i\}_{i=1}^{g}$ from $g$ loci (ortholog
region in genome with no recombination).  These gene trees are
unrooted and only topologies are considered (no branch lengths). To
account for estimation error in the gene trees, we do not consider the
gene trees directly as input data, but we summarize them into the
\textit{concordance factors}\cite{Baum2007}.  See the Supplementary
Material for more details on concordance factors (Section \ref{data}),
and on pseudolikelihood model (Section \ref{modelSec}).

\begin{myDefinition}[CF polynomials]
  Let $\mathcal{N}$ be $n$-taxon explicit level-1 semi-directed
  phylogenetic network with $h$ hybridizations. This network defines a
  set of $3 {n\choose 4}$ CF equations from the coalescent model with
  parameters $\bs{t}$ and $\bs{\gamma}$. Denote this system of
  equations as $CF(\mathcal{N},\bs{t},\bs{\gamma})$. If we change the
  variable of all branch lengths as $z_i=\exp{(-t_i)}$, then
  $CF(\mathcal{N},\bs{z},\bs{\gamma})$ is a system of polynomial
  equations.
  \label{cfeqs}
\end{myDefinition}

To show that a given hybridization is detectable, we compare the
theoretical formulas of the CFs given by a species network with $h$
hybridization events to those given by a species network without the
hybridization event of interest (that is, with $h-1$
hybridizations). We prove that these equations do not share any
feasible solutions, and thus, the same set of observed CFs cannot have
been generated by both the species network with $h$ hybridizations,
and the species network with $h-1$ hybridizations.

\begin{myDefinition}[Detectability]
  Let $\mathcal{N}$ be a $n$-taxon explicit level-1 semi-directed
  phylogenetic network with $h$ hybridizations. Let $\mathcal{N'}$ be
  a copy of network $\mathcal{N}$ without the $i^{th}$ hybridization.
\begin{itemize}
\item
  We say the $i^{th}$ hybridization event in
  $\mathcal{N}$ is \textbf{detectable} if the system of CFs from
  $\mathcal{N}$ does not match the system of CFs for $\mathcal{N'}$
  for any set of numerical parameters
  $(\bs{t},\bs{\gamma},\bs{t'},\bs{\gamma'})$.
 \item We say the $i^{th}$ hybridization event in $\mathcal{N}$ is
   \textbf{generically detectable} if the system of CFs from
   $\mathcal{N}$ matches the system of CFs for $\mathcal{N'}$ on a set
   of numerical parameters $(\bs{t},\bs{\gamma},\bs{t'},\bs{\gamma'})$
   of measure zero.
 \end{itemize}
\label{defdet}
\end{myDefinition}

Intuitively, Definition \ref{defdet} says that a hybridization is
detectable if the network that contains it and a network without it
(but identical to the original network in every other sense) cannot
produce the same set of CFs.

\begin{myDefinition}[$k$-cycle \cite{Huber2018}]
  Let $\mathcal{N}$ be a $n$-taxon explicit level-1 semi-directed
  phylogenetic network with $h=1$ hybridization. $\mathcal{N}$ is
  denoted a $\bs{k}$\textbf{-cycle network} if the
  cycle produced by the hybridization has $k$ nodes. The hybridization
  is also denoted a $k$-cycle hybridization.
\label{defcycle}
\end{myDefinition}

Figure \ref{kcycle} shows different $k$-cycle level-1 networks. The
subnetworks illustrated as triangles could have any number of
hybridization events as we will only focus on a given hybridization of
interest in the proofs of detectability.  That is, for a network with
$h>1$ hybridization events, as we focus on one hybridization at a time
to prove detectability (Definition \ref{defdet}), we will denote a
network $k$-cycle if the hybridization of interest has $k$ nodes on
the hybridization cycle.

\begin{figure*}
  \centering
  \includegraphics[scale=0.09]{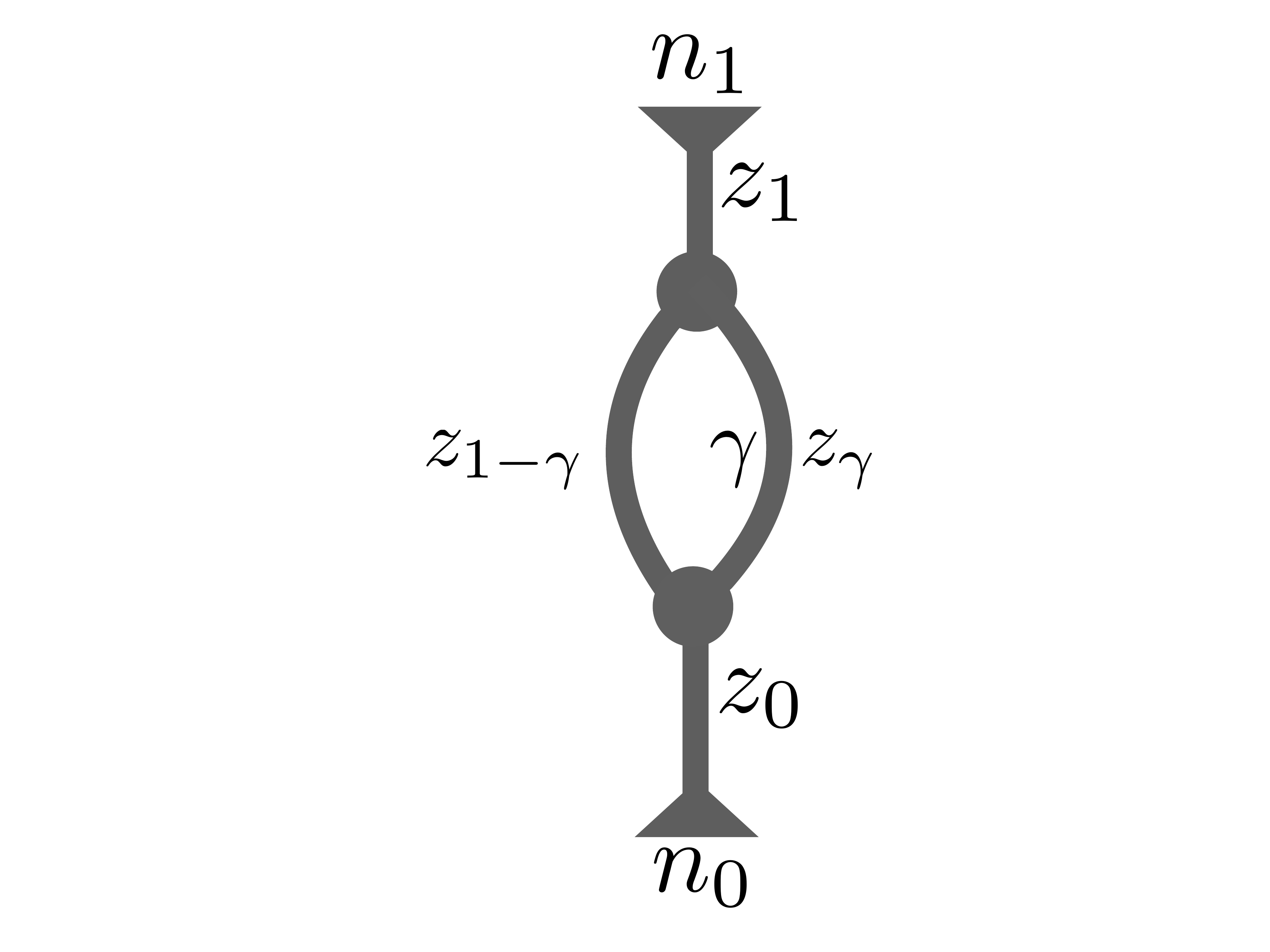}
  \includegraphics[scale=0.095]{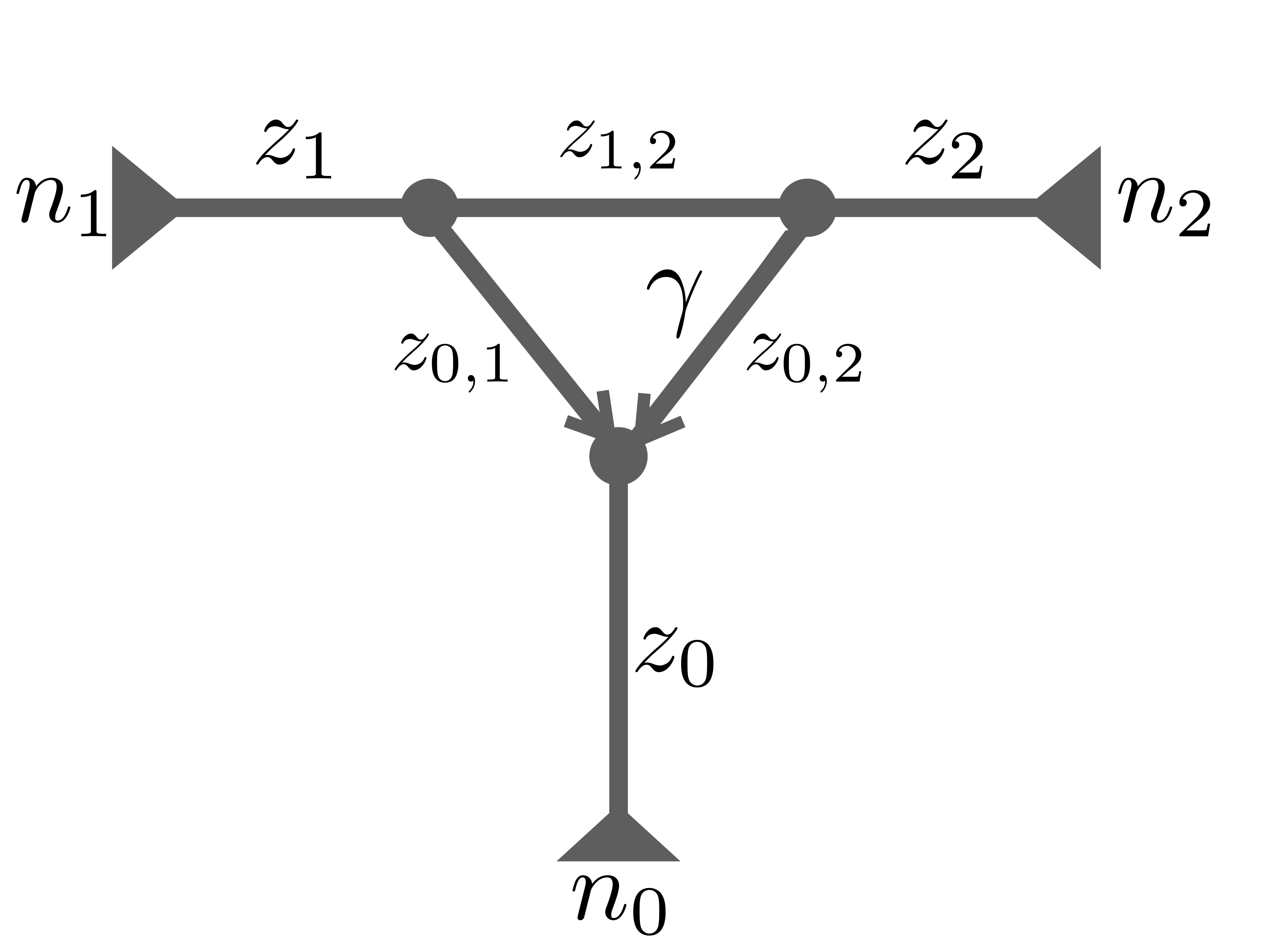}
  \includegraphics[scale=0.095]{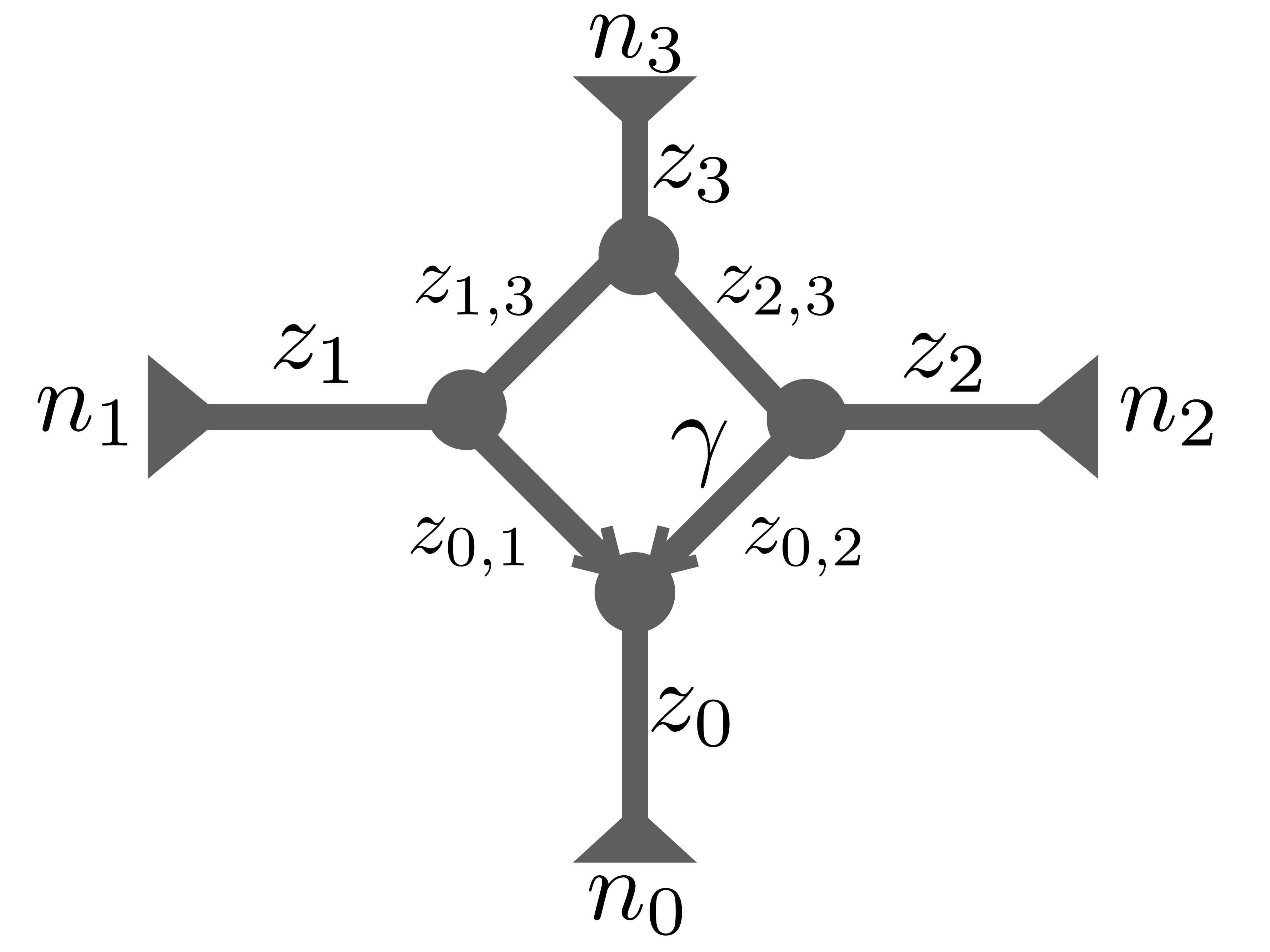}
  \includegraphics[scale=0.095]{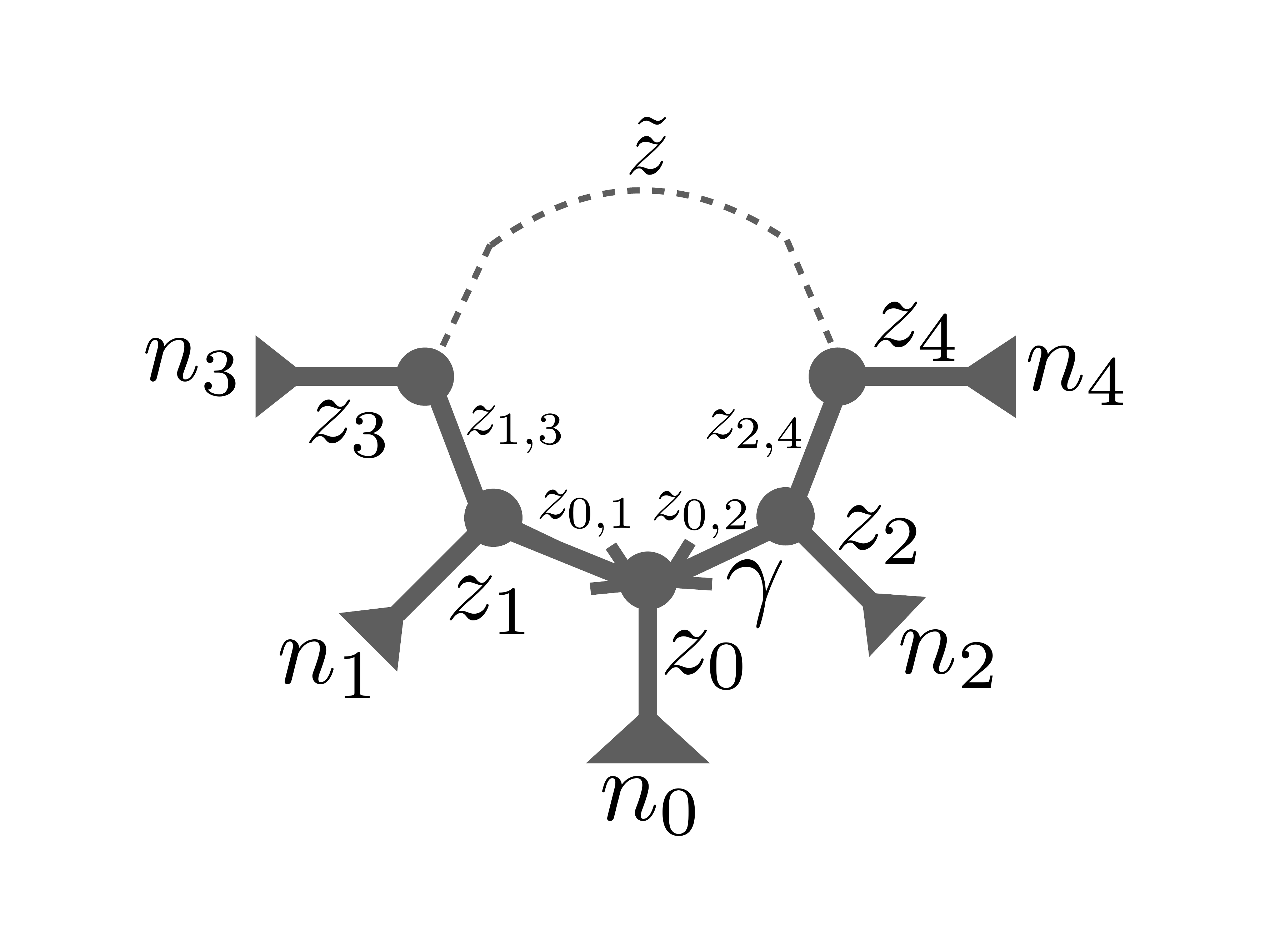}
  \begin{tabular}{cccc}
    $k=2$& \hspace{2cm} $k=3$ \hspace{2.5cm} & $k=4$ \hspace{2.5cm}
                                               & $k \geq 5$\\
  \end{tabular}
  \caption{$k$-cycle level-1 semi-directed networks with $n$
    leaves. Here we label the edges with $z_i=\exp(-t_i)$ as will be
    used in the CF polynomial equations.}
    \label{kcycle}
\end{figure*}
Next, we list two assumptions for the main theorem on detectability of
hybridizations. The assumptions refer to a $n$-taxon explicit level-1
semi-directed phylogenetic network $\mathcal{N}$ with $h$
hybridizations.

\begin{shaded*}
\vspace{-.4cm}
\begin{itemize}
\item[(A1)]
  All branch lengths $t_i \in (0,\infty)$ for $i=1,\dots,n_e$
  and all inheritance probabilities $\gamma_j \in (0,1)$ for
  $j=1,\dots,n_h$. Here, $n_e$ represents the number of edges and
  $n_h$ the number of hybrid edges (Section \ref{parInt}).
\item[(A2)] Let the hybridization of interest define a $k$-cycle that
  divides the taxa on $k$ subsets, each with $n_k$ leaves (see Figure
  \ref{kcycle}), then we assume that $n_k \geq 2$ for every subgraph
  $k$.
\end{itemize}
\vspace{-.2cm}
\end{shaded*}

Next, we present our main theorem on topology identifiability. In
short, Theorem \ref{topId} shows that hybridization events with 3 or
more nodes in the cycle can be detected with the set of CFs as long as
we rule out inheritance probabilities of $\gamma=0$ or $1$, hard
polytomies ($t=0$) and branches without ILS ($t=\infty$).  In
addition, we need to rule out single-taxon sampled ($n_k=1$) for all
the subnetworks defined by the hybridization cycle (see Figure
\ref{kcycle}). All these cases are ruled out by assumptions (A1) and
(A2).  We will explore the single-taxon cases in Section
\ref{badcases}.


\begin{framed}
\begin{myTheorem}
\label{topId}
Let $\mathcal{N}$ be a $n$-taxon explicit level-1 semi-directed
phylogenetic network with $h$ hybridizations. Let $\mathcal{N}$ be a
$k_i$-cycle network on the $i^{th}$ hybridization.
\begin{itemize}
\item If $k_i=2$, then the $i^{th}$ hybridization is not detectable.
\item If $k_i\geq 3$, and A1-A2 hold, then the
  $i^{th}$ hybridization is detectable.
\end{itemize}
\end{myTheorem}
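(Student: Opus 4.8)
The plan is to handle the two cases by opposite strategies: for $k_i=2$ I would exhibit an explicit parameter map making the CFs of $\mathcal{N}$ and $\mathcal{N}'$ coincide, whereas for $k_i\geq 3$ I would isolate a single quartet whose CF on $\mathcal{N}$ provably cannot be reproduced on $\mathcal{N}'$ under A1--A2. For the $2$-cycle, the cycle consists of the hybrid node $h$ and its unique parent $v$ joined by the two hybrid edges, with parameters $z_1,z_2$ (after the change of variables of Definition \ref{cfeqs}) and inheritance $\gamma$. The only way this structure enters any CF equation is through the probability that two lineages entering $h$ from below fail to coalesce before reaching $v$: with probability $\gamma^2+(1-\gamma)^2$ they travel up the same hybrid edge and may coalesce there, otherwise they are forced apart until $v$. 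First I would show that this effective non-coalescence probability equals $z_\tau=e^{-\tau}$ for a single effective length $\tau=\tau(z_1,z_2,\gamma)$, and that replacing the double edge in $\mathcal{N}'$ by one edge of length $\tau$ leaves every other coalescent transition untouched. Since this single number controls the $2$-cycle's contribution to all $3\binom{n}{4}$ CF equations simultaneously, the map fixing every other parameter and sending the double edge to $\tau$ makes $CF(\mathcal{N},\cdot)=CF(\mathcal{N}',\cdot)$ identically, so the hybridization is not detectable.

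For the detectable case $k_i\geq 3$ I would first reduce to quartets: if one $4$-taxon subset $Q$ has CF values on $\mathcal{N}$ that no parameter choice on $\mathcal{N}'$ can match, then the full systems cannot match and detectability follows. Using A2 ($n_k\geq 2$ in each of the $k$ blocks cut out by the cycle), I would select $Q$ with one representative from each of four distinct blocks when $k\geq 4$, and with two representatives from one block and one from each of two others when $k=3$, chosen so that the restriction of $\mathcal{N}$ to $Q$ retains the cycle of interest as its only reticulation while the restriction of $\mathcal{N}'$ to $Q$ is a tree. The $n_k\geq 2$ freedom is what lets me pick representatives whose paths to the cycle avoid any other cycle, so that $\mathcal{N}'|_Q$ is genuinely tree-like.

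Now on any tree the multispecies coalescent forces the two minor quartet CFs to be equal, so the CF triple of $\mathcal{N}'$ restricted to $Q$ must have two coinciding entries. I would then compute the three quartet CFs for the $3$- or $4$-cycle restriction of $\mathcal{N}$ as explicit polynomials in $\gamma$ and the cycle variables $z_i$, and show that the two entries that a tree would force equal instead differ by a quantity that factors into terms of the form $\gamma$, $1-\gamma$, $z_i$, and $1-z_i$, and is therefore strictly nonzero exactly when A1 holds. This contradiction --- two CF entries equal for $\mathcal{N}'$, the corresponding entries distinct for $\mathcal{N}$ --- rules out any common feasible solution and yields detectability.

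The main obstacle is the explicit CF computation for the $k_i\geq 3$ restriction together with the guarantee that $Q$ isolates the hybridization of interest. The coalescent on a cycle is not simply a $\gamma$-mixture of its two displayed trees, because lineages may coalesce while traversing the cycle edges; correctly bookkeeping these within-cycle coalescences, and then checking that the resulting minor discrepancy factors cleanly and never vanishes under A1--A2, is the delicate part. In addition the $k=3$ case, where two of the four chosen taxa share a block, must be verified separately from $k\geq 4$, and I would need to confirm that a representative avoiding all other cycles can indeed be chosen in each block even when the subnetworks carry their own hybridizations, so that the equality $CF_2'=CF_3'$ on $\mathcal{N}'|_Q$ genuinely holds.
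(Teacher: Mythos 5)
Your $2$-cycle argument is sound and matches the paper's in structure (the paper likewise exhibits an explicit matching solution, $w_{0,1}=z_1z_0(\gamma z_{1-\gamma}+(1-\gamma)z_\gamma)$ in its parametrization), and your $k\geq 4$ argument is correct but genuinely different from the paper's route: by taking one taxon in each of four blocks spanning both hybrid edges you obtain a type-3 quarnet whose two ``minor'' entries differ by $\gamma(1-z_{2,3})$ (or $\gamma(1-\tilde{z}z_{2,4})$-type factors for $k\geq 5$), strictly nonzero under A1, while the tree $\mathcal{N}'$ forces those two entries equal --- so that single quartet already rules out any common solution. This is essentially the mechanism the paper uses in Theorem \ref{singleThm} for the $k\geq 4$ single-taxon case, whereas its proof of Theorem \ref{topId} instead solves the \emph{entire} system $CF(\mathcal{N},\bs{z},\bs{\gamma})=CF(\mathcal{T},\bs{w})$ by computer algebra after reductions (Lemmas \ref{h1}, \ref{nk}, \ref{k5}) and checks that every solution branch violates A1--A2. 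Your witness-quartet argument is more elementary and human-checkable for $k\geq 4$; the paper's exhaustive solve additionally characterizes all degenerate loci.

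The genuine gap is at $k=3$, where your strategy cannot work at all: the discrepancy you propose to exhibit vanishes identically. For a $3$-cycle there are only three blocks, so any quartet meeting the cycle has counts $(2,1,1)$, $(1,2,1)$, $(1,1,2)$, $(2,0,2)$, or $(2,2,0)$, i.e.\ is a quarnet of type 1, 2, or 4 (see Table \ref{k3table}) --- and all of these types have \emph{equal} minor CFs, exactly like a tree. Indeed the paper's Lemma \ref{type1} proves that each such quarnet is matched exactly by a quartet tree with a properly chosen internal branch, so no single $4$-taxon subset can certify a $3$-cycle. (For some parameter values the required effective branch would need $\exp(-\tilde{t}_1)>1$ and the match fails locally, but since this occurs only on part of the parameter region it cannot yield detectability in the paper's ``for any $(\bs{t},\bs{\gamma},\bs{t}',\bs{\gamma}')$'' sense.) Detectability of the $3$-cycle is an inherently \emph{joint} phenomenon: the same shared tree parameters $\bs{w}$ must fit all quartets simultaneously, and it is this cross-quartet consistency that the paper's full-system solve exploits --- which is also why $3$-cycles degrade to generically detectable or undetectable under single-taxon sampling in Theorem \ref{singleThm}. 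A secondary repair you would need even for $k\geq 4$ when $h>1$: representatives ``whose paths to the cycle avoid any other cycle'' need not exist (a block may carry a cycle at its root through which every path passes); the paper handles this not by avoidance but by replacement, using Lemmas \ref{type1} and \ref{h1} to swap each foreign hybridization for a CF-equivalent tree.
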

\end{framed}

The proof of Theorem \ref{topId} is in Section \ref{proofthm1}. 

\subsection{Single-taxon cases}
\label{badcases}

When there is a single taxon sampled from any of the subnetworks
defined by the hybridization cycle, assumption A2 is violated. That
is, this is the case when $n_k=1$ for some $k$ in Figure
\ref{kcycle}. We found that $k$-cycle hybridizations for $k \geq 4$
can be detected even if only one taxon is sampled from all the
subnetworks. However, 3-cycle hybridizations are no longer detectable,
as is presented in the following theorem.

\begin{framed}
  \begin{myTheorem}
    \label{singleThm}
    Let $\mathcal{N}$ be a $n$-taxon explicit level-1 semi-directed
    phylogenetic network with $h$ hybridizations. Let $\mathcal{N}$ be
    a $k_i$-cycle network on the $i^{th}$ hybridization. If assumption
    A1 holds:
    \begin{itemize}
    \item For $k_i=3$,
      \begin{itemize}
      \item if $n_j=n_k=1$ for any $j,k \in \{0,1,2\}$ ($j \neq k$), then this
        3-cycle hybridization is \textbf{not detectable}.
      \item if $n_j>1,n_k>1,n_l=1$ for $j,k,l \in \{0,1,2\}$, then this
        3-cycle hybridization is \textbf{generically detectable}.
      \end{itemize}
    \item For $k_i \geq 4$, if $n_k \geq 1$ for all $k$, then the
      $k_i$-cycle hybridization is \textbf{detectable}.
    \end{itemize}
    
  \end{myTheorem}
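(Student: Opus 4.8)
The plan is to mirror the strategy behind Theorem \ref{topId}: in each case I compare the system $CF(\mathcal{N},\bs{z},\bs{\gamma})$ against $CF(\mathcal{N}',\bs{z}',\bs{\gamma}')$, where $\mathcal{N}'$ is obtained by deleting one of the two hybrid edges of the $i$-th cycle and suppressing the resulting degree-two node, and I ask whether the two systems can be made to coincide. Every quartet whose four taxa do not straddle the cycle has identical CF formulas in $\mathcal{N}$ and $\mathcal{N}'$, so those equations hold automatically and merely fix the branch lengths inside the subnetworks; the whole question reduces to the cycle-straddling quartets. The structural observation I would establish first is that for a $3$-cycle, any quartet whose taxa lie in the three cycle groups has the \emph{same} unrooted topology in both displayed trees (the two choices of retained hybrid edge differ by an NNI around the small cycle), so its network CF is a convex combination $\gamma\,\mathrm{CF}_{1}+(1-\gamma)\,\mathrm{CF}_{2}$ of two same-topology tree CFs and hence equals a single tree CF with an effective internal parameter $z^{\ast}=\gamma z_{1}+(1-\gamma)z_{2}$. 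By contrast, for $k\ge 4$ a quartet spanning four distinct cycle nodes has displayed trees of two \emph{different} unrooted topologies.

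For $k_i=3$ with two singleton groups, the two singletons force essentially a single shape of cross-cycle quartet (for instance $\{a,a',b,c\}$ with $a,a'$ in the non-singleton group and $b,c$ the two singletons), and all of its instances see the cycle through the same ``gate.'' I would then construct $\mathcal{N}'$ explicitly, copying all of $\mathcal{N}$'s branch lengths on the non-cycle edges and placing $z^{\ast}$ on the collapsed cycle edge, and verify by direct substitution that every one of the $3\binom{n}{4}$ CF equations holds. Since a feasible parameter set with $z^{\ast}\in(0,1)$ exists, the two systems coincide on an open set and the hybridization is \textbf{not detectable}.

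For $k_i=3$ with exactly one singleton, the two non-singleton groups now yield two genuinely different shapes of cross-cycle quartet: a ``type X'' quartet carrying a cherry in one group and a ``type Y'' quartet carrying a cherry in the other. Each is individually matchable by an effective branch length as above, but the two effective lengths are the CFs of two \emph{different} internal branches of the single tree $\mathcal{N}'$, while both are tied to the same $(z_{1},z_{2},\gamma)$. Requiring $\mathcal{N}'$ to match both simultaneously produces an over-determined system; eliminating the tree branch lengths leaves a nontrivial polynomial relation $F(\bs{z},\bs{\gamma})=0$ among $\mathcal{N}$'s parameters. I would (i) exhibit explicit feasible parameters with $F\neq 0$, proving that generically the systems do not match, and (ii) show that $\{F=0\}$ is a nonempty proper subvariety on which a matching tree does exist. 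Together these give matching on a set of measure zero, i.e.\ \textbf{generic detectability}. This case is the main obstacle: writing $F$ requires carrying the full coalescent CF formulas for the straddling quartets through the $\gamma$-mixture and the elimination, and then arguing both that $F\not\equiv 0$ and that its vanishing locus is genuinely realizable by a tree.

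For $k_i\ge 4$ the argument is cleaner. Since $n_k\ge 1$ for every group, I can choose a quartet $\{a,b_1,b_2,o\}$ with one taxon below each of four distinct cycle nodes; its two displayed trees realize two different unrooted topologies, so the network CF is a $\gamma$-mixture that elevates two of the three quartet topologies simultaneously, whereas any tree CF elevates exactly one (its other two being equal). Under A1, with $\gamma\in(0,1)$ and every $z_i\in(0,1)$ strictly, both displayed trees contribute with positive weight, and a short case analysis over the three candidate tree topologies shows the resulting three CF values cannot match any tree on these four taxa; hence no $\mathcal{N}'$ reproduces this single quartet and the hybridization is \textbf{detectable}. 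For $k_i\ge 5$ I would pick four taxa below four distinct nodes of the cycle and reduce to the $4$-cycle computation, the intervening cycle branches only rescaling the $z_i$ without collapsing the two-topology mixture.
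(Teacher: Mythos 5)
Your overall skeleton matches the paper's: reduce to cycle-straddling quartets (the paper's Lemmas \ref{nk} and \ref{h1}), then ask when the network's CF system can be matched by the tree obtained by deleting the minor hybrid edge, exhibiting an explicit matching solution in the non-detectable cases and a measure-zero matching locus in the generically detectable ones. Your treatment of $k_i\geq 4$, however, is a genuinely different and more elementary route: the paper solves the full system $CF(\mathcal{N},\bs{z},\bs{\gamma})=CF(\mathcal{T},\bs{w})$ in Mathematica and checks that every solution branch ($\gamma=0$, $z_{2,3}=1$, etc.) violates A1, whereas you isolate a single quartet with one taxon per cycle node --- a type 3 quarnet in Figure \ref{5quartets} --- and argue it alone cannot be matched by any quartet tree. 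That argument is sound: writing the type 3 CFs as $a=(1-\gamma)(1-\tfrac{2}{3}z_1)+\tfrac{\gamma}{3}z_2$, $b=\tfrac{1-\gamma}{3}z_1+\gamma(1-\tfrac{2}{3}z_2)$, $c=\tfrac{1-\gamma}{3}z_1+\tfrac{\gamma}{3}z_2$, a tree CF vector has two equal entries with the remaining (major) entry at least $1/3$; but $b=c$ forces $\gamma(1-z_2)=0$, $a=c$ forces $(1-\gamma)(1-z_1)=0$, and $a=b$ leaves the major candidate $c<1/3$, so all three placements fail under A1. This buys a computer-free proof of the $k_i\geq4$ case and makes the structural dichotomy (3-cycles display only quarnet types 1, 2, 4, 5, all tree-matchable by the paper's Lemma \ref{type1}; cycles with $k\geq4$ display a type 3 quarnet) the visible engine of the theorem, where the paper leaves it implicit in its symbolic computations.

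Two inaccuracies need repair. First, your claim that every 3-cycle straddling quartet has CF equal to the convex combination $\gamma\,CF_1+(1-\gamma)\,CF_2$ with effective parameter $z^{*}=\gamma z_1+(1-\gamma)z_2$ holds only when exactly one sampled lineage passes through the hybrid node (type 2). When two lineages are sampled below the hybrid node (types 1 and 4), each chooses a parent edge independently, the mixture weights are $(1-\gamma)^2$, $2\gamma(1-\gamma)$, $\gamma^2$, and the correct effective branch is quadratic in $\gamma$ --- compare Lemma \ref{type1} and the paper's explicit solution $w_0=z_0(\gamma^2 z_{0,1}+\gamma^2 z_{0,2}+\gamma^2 z_{1,2}-2\gamma z_{0,1}-\gamma z_{1,2}+z_{0,1}-3\gamma^2+3\gamma)$ for $n=(2,1,1)$; your direct-substitution check with the linear $z^{*}$ would fail in exactly the two-singleton case you need it for. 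Second, in the one-singleton 3-cycle case the outcome depends on \emph{which} subgraph is the singleton: when the singleton is the child of the hybrid node ($n_0=1$, $n_1=n_2=2$ in the paper's labeling) every matching condition violates A1, so that sub-case is outright detectable and your step (ii) --- exhibiting a nonempty realizable locus $\{F=0\}$ --- is impossible there; in the other two placements the surviving locus is precisely the codimension-one relation the paper records, e.g.\ $(1-\gamma)(z_{0,1}-1)=\gamma(z_{0,2}+z_{1,2}-2)$. Since detectable implies generically detectable this asymmetry does not endanger the theorem as stated, but your elimination must handle three straddling quartet shapes (types 1, 2 and 4), not two, and the realizability claim must be restricted accordingly. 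Finally, the algebra you defer (``carrying the full coalescent CF formulas through the elimination'') is exactly the content the paper outsources to Mathematica, so the proposal is a correct plan rather than a complete proof until those eliminations are executed.
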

\end{framed}

The proof of Theorem \ref{singleThm} is in Section \ref{proofthm2}. In
short, there can be single taxon sampled from any of the subgraphs in
a $k$-cycle hybridization and the hybridization is still detectable
for $k \geq 4$. We note that the 3-cycle with only one single taxon
sampled on two out of the three subnetworks cannot be detected, and
this case corresponds to gene flow between sister taxa.

\section{Numerical parameters finite identifiability}

For the hybridization events that satisfy the conditions on Theorems
\ref{topId} and \ref{singleThm} and are thus detectable, we study
whether we can estimate the associated numerical parameters (Figure
\ref{kcycle}).


Parameter identifiability implies that the CFs equations have a unique
solution $(\bs{t}^*,\bs{\gamma}^*)$. That is, there are not multiple
parameter values that can produce the same set of CFs. Proving
uniqueness of solution in polynomial equations is
challenging. Therefore, we will only prove \textit{finite
  identifiability}\cite{Pimentel2016}, which means that the set of CF
equations have finitely many solutions.

\begin{myDefinition}
  Let $\mathcal{N}$ be $n$-taxon explicit level-1 semi-directed
  phylogenetic network with $h$ hybridizations and let
  $\{\bs{t}_{h_i},\bs{\gamma}_{h_i}\}$ be the subset of numerical
  parameters $\{\bs{t},\bs{\gamma}\}$ in $\mathcal{N}$ that appear in
  the hybridization cycle defined by the $i^{th}$ hybridization. We
  say that these parameters are \textbf{finitely
    identifiable}\cite{Pimentel2016} if there are finitely many
  parameter values that can produce the set of CFs.
\end{myDefinition}

Next, we present our main theorem in parameter identifiability, which
describes that parameters in 3-cycle networks are not finitely
identifiable, but parameters in $k$-cycle networks for $k \geq 4$ are.

\newpage

\begin{framed}
\begin{myTheorem}
\label{parId}
Let $\mathcal{N}$ a $n$-taxon explicit level-1 semi-directed
phylogenetic network with a $k_i$-cycle on its $i^{th}$
hybridization. Let $k_i \geq 3$ to satisfy hybridization detectability
in Theorem \ref{topId}.  Let
$\theta_i = \{\bs{t}_{h_i},\bs{\gamma}_{h_i}\}$ be the subset of
numerical parameters associated with this hybridization cycle. Under
A1-A2:
\begin{itemize}
\item If $k_i=3$, then the numerical parameters in $\theta_i$ are \textbf{not finitely identifiable}.
\item If $k_i\geq 4$, then the numerical parameters in $\theta_i$ are \textbf{finitely identifiable}.
\end{itemize}
\end{myTheorem}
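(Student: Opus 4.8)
\noindent\emph{Proof plan.} My plan is to work directly with the CF polynomial system of Definition~\ref{cfeqs}, restricted to those quartets whose quartet topology ``crosses'' the $k_i$-cycle, since only these carry information about $\theta_i=\{\bs{t}_{h_i},\bs{\gamma}_{h_i}\}$; the CFs of quartets confined to a single subnetwork are independent of the cycle parameters. Recall from the proof of Theorem~\ref{topId} that each informative quartet CF is a $\gamma$-weighted mixture of the two displayed-tree quartet CFs, where each displayed-tree CF is affine in the relevant products of the edge variables $z_i=\exp(-t_i)$ (major topology $1-\tfrac23\prod z$, each minor topology $\tfrac13\prod z$). Finite identifiability of $\theta_i$ is then exactly the statement that the fiber of this CF map over an observed CF value is a finite set, so the whole argument reduces to deciding whether these mixture equations admit a positive-dimensional solution set in the cycle parameters.

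For $k_i\ge 4$ I would exploit that the cycle separates the taxa into at least four subnetworks (Figure~\ref{kcycle}), so I can choose a quartet with one leaf drawn from each of four distinct cycle positions. Writing the CF equations for a family of such quartets, obtained by sliding the four chosen positions around the cycle, produces a system whose number of algebraically independent equations meets or exceeds the number of cycle parameters (the cycle branch variables together with $\gamma$). I would then certify that this polynomial system is zero-dimensional: concretely, by exhibiting an elimination order that expresses each $z_i$ and $\gamma$ as a root of a nonconstant univariate polynomial whose coefficients are functions of the CFs, so that each parameter takes only finitely many values and hence $\theta_i$ does too.

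For $k_i=3$ there are only three subnetworks, so every quartet necessarily places at least two of its four leaves in the same subnetwork, and the resulting informative quartet CFs depend on the three cycle branch variables and $\gamma$ only through a reduced collection of combinations. I would make this precise by writing out the finitely many informative quartet CF formulas and then exhibiting an explicit one-parameter family $(\bs{t}(s),\bs{\gamma}(s))$, with $s$ ranging over an interval and all A1--A2 constraints preserved, along which every CF is constant; the confounding is a product-type trade-off between $\gamma$ and a cycle branch length. Because this family is a continuum of parameter values mapping to the same CFs, the fiber is positive-dimensional and $\theta_i$ is not finitely identifiable.

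The main obstacle is the finiteness claim for $k_i\ge 4$. A Jacobian-rank computation (full column rank of the CF map on the cycle parameters) would only yield finiteness at generic parameters, whereas the theorem asserts it throughout the open region cut out by A1--A2; upgrading generic finiteness to finiteness everywhere requires either the explicit univariate resolvents described above, valid on the whole region, or a properness/degree argument excluding positive-dimensional fibers at every point of the region. For $k_i=3$ the delicate point is the reverse --- verifying that the constructed family is a genuine curve rather than an isolated discrete symmetry, which I would confirm by checking that the differential of the CF map drops rank identically along the cycle-parameter directions.
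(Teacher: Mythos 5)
There is a genuine gap in your plan for $k_i\ge 4$: the quartets you propose --- one leaf drawn from each of four distinct cycle positions --- are structurally incapable of identifying a substantial part of $\theta_i$. With a single lineage sampled below the hybrid node, that lineage enters one hybrid edge or the other but cannot coalesce inside it, so the hybrid-edge lengths drop out of the CF formulas entirely: in the paper's notation, the $n=(1,1,1,1)$ quartet of the 4-cycle (Table~\ref{k4table}, type~3) involves only $z_{1,3}$, $z_{2,3}$ and $\gamma$, never $z_{0,1}$ or $z_{0,2}$. Likewise the edges $z_0,\dots,z_3$ subtending the cycle nodes become external edges in such quartets and are invisible to CFs (this is exactly the content of Lemma~\ref{nk}). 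Hence no family of one-leaf-per-position quartets, however you slide the positions around the cycle, can yield nonconstant univariate resolvents for $z_{0,1},z_{0,2},z_0,\dots,z_3$: the system is rank-deficient in those directions by construction, not merely hard to eliminate. The repair --- and what the paper actually does --- is to invoke A2 and sample \emph{two} taxa from one subgraph at a time: after reducing $k\ge 5$ to $k=4$ via Lemma~\ref{k5par} (which plays the role your sliding argument was meant to play), the proof runs through the four diamond sub-samplings of Figure~\ref{k4nettree} and, crucially, chains them, since the bad-diamond systems are individually under-determined (invariant-ideal dimension 3 against 4 parameters, and 5 against 6): $\{z_{1,3},z_{2,3},z_1,\gamma\}$ come from good diamond I, then $z_2$ from good diamond II, $z_3$ from bad diamond I, and finally $\{z_{0,1},z_{0,2},z_0\}$ from bad diamond II once $\{z_{1,3},z_{2,3},\gamma\}$ are already pinned down. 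Your proposal has no mechanism for this cross-quartet bootstrapping, and in particular two taxa below the hybrid node are indispensable for the hybrid-edge lengths.

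For $k_i=3$ your route is legitimately different from the paper's: you would exhibit an explicit one-parameter family $(\bs{t}(s),\bs{\gamma}(s))$ along which all CFs are constant, whereas the paper eliminates the 7 parameters from the 18 CF polynomials in Macaulay2, obtains 12 phylogenetic invariants whose ideal has dimension 6, and concludes from $6<7$ that the parameters are not finitely identifiable. Your construction would be more explicit if carried out, but as written it is only announced --- the entire burden of that half of the theorem is producing the curve and verifying constancy of all 18 CFs, which you defer. Finally, your concern about upgrading generic finiteness to finiteness on the whole A1--A2 region asks for more than the paper itself establishes: its dimension counts (``same number of algebraically independent equations as unknown parameters'') certify finiteness of generic fibers only, so this obstacle should not be treated as blocking the intended conclusion, though flagging it is fair.
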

\end{framed}

The proof of Theorem \ref{parId} is in Section \ref{proofthm34}. In
short, Theorem \ref{parId} shows that the CFs contain enough
information to estimate the numerical parameters of hybridization
cycles with 4 or more nodes. Note that this result depends on
assumption (A2). We show in Section \ref{badcasespar} that 4-cycle
hybridizations can become not finitely identifiable when single taxa
are sampled from the subgraphs.

For the cases when the numerical
parameters are not finite identifiable ($3$-cycle networks and
single-taxon cases in Section \ref{badcasespar}), we propose a
reparametrization that will allow us to estimate a subset of the
numerical parameters in SNaQ (see \cite{Solis-Lemus2016}).

\subsection{Single-taxon cases}
\label{badcasespar}

It turns out that A2 can be violated in some cases, and parameters can
remain finitely identifiable. We only explore 4-cycle hybridizations,
since we already showed that parameters in a 3-cycle hybridization are
not even finitely identifiable under A2, and parameters for $k$-cycle
hybridizations ($k \geq 5$) can be estimated via sub-sampling with
4-cycle hybridizations (see Lemma \ref{k5par} in the Supplementary
Material Section \ref{auxsec}).
\begin{framed}
\begin{myTheorem}
\label{parIdSingle}
Let $\mathcal{N}$ be a $n$-taxon explicit level-1 semi-directed
phylogenetic network with a 4-cycle hybridization with parameters
associated to this cycle $\theta =
\{\bs{t}_{h},\bs{\gamma}_{h}\}$. Under A1:
\begin{itemize}
\item If $n_1>1$ or $n_2>1$ (Figure \ref{kcycle}), then the numerical
  parameters in $\theta$ are finitely identifiable. We denote these
  cases \textbf{good diamonds} \cite{Solis-Lemus2016}.
\item If $n_0=n_1=n_2=1,n_3>1$ (\textbf{bad diamond I}
  \cite{Solis-Lemus2016}), or if $n_0>1,n_1=n_2=n_3=1$ (\textbf{bad
    diamond II} \cite{Solis-Lemus2016}) then the numerical parameters
  in $\theta$ are not finitely identifiable.
\end{itemize}
\end{myTheorem}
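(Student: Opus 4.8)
The plan is to analyze the polynomial system $CF(\mathcal{N},\bs{z},\bs{\gamma})$ of Definition \ref{cfeqs}, restricted to the quartets that traverse the $4$-cycle, and to decide in each regime whether the resulting solution set for $\theta$ is finite (zero-dimensional) or infinite (positive-dimensional). First I would write the diamond's CF formulas explicitly. Under the coalescent an unrooted quartet CF depends only on the internal edges of the induced quartet topology and, through the hybrid edges, on $\gamma$; pendant edges to leaves contribute nothing, so at the outset every branch leading to one of the singleton corners drops out. Setting $z_i=\exp(-t_i)$ and writing each hybrid-quartet CF as the $\gamma$-weighted combination over the hybrid-edge choices of the lineages entering the hybrid node, I get for every informative quartet a triple of polynomial equations (summing to one) in the cycle parameters $\theta=(\bs{z}_h,\gamma)$. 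The bookkeeping step that drives the whole proof is to list, in each of the regimes, which quartets are informative---namely those drawing taxa from distinct subnetworks around the cycle---since this fixes how many independent CF equations are available to constrain $\theta$.

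For the good-diamond case ($n_1>1$ or $n_2>1$) the plan is to exhibit enough independent equations to force a finite solution set. Without loss of generality take $n_1>1$, so two taxa $b,b'$ lie in the subnetwork attached to one hybrid parent. Besides the quartet selecting one taxon from each subnetwork, I now also have quartets of the form $\{b,b',\cdot,\cdot\}$ placing two taxa on the same side of the reticulation; these resolve the internal edge leading into that parent and break the symmetry that otherwise couples the hybrid-edge parameters with $\gamma$. I would then argue that the polynomial map from $\theta$ to the vector of informative CFs is finite-to-one: solving triangularly---first recovering the same-side internal edges from the $\{b,b',\cdot,\cdot\}$ quartets, then the hybrid-edge parameters and $\gamma$ from the mixed quartets---and showing at each stage that only finitely many preimages occur, equivalently that the Jacobian has full column rank at a generic point so the variety is zero-dimensional. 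This parallels the argument behind Theorem \ref{parId} for $k\ge 4$ under A1--A2, with the extra taxon in $n_1$ or $n_2$ playing the role filled there by the second subnetwork taxon.

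For the bad diamonds the plan is the opposite: to produce an explicit positive-dimensional family of parameters with identical CFs. In bad diamond I ($n_0=n_1=n_2=1$, $n_3>1$) the three singleton corners force every informative quartet to take its remaining two taxa from the clade $n_3$; these two lineages enter the hybrid node independently and each follows a parent edge with probability $\gamma$ or $1-\gamma$, so the CFs involve the weights $\gamma^2,\ (1-\gamma)^2,\ 2\gamma(1-\gamma)$ together with the cycle $z_i$. I expect the restricted CFs to depend on $(\gamma,\bs{z}_h)$ only through a strictly smaller set of combinations (expressions of the form $\gamma^2 z_i+(1-\gamma)^2 z_j$ sharing a common product along the cycle paths), and I would make these explicit and then produce a one-parameter curve $s\mapsto(\bs{z}_h(s),\gamma(s))$ staying inside A1 along which every such combination, hence every CF, is constant. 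Bad diamond II ($n_0>1$, $n_1=n_2=n_3=1$) is treated analogously: here the clade $n_0$ sits on the opposite side of the cycle and a single lineage passes through the hybrid, collapsing a different pair of internal edges, and I would identify the corresponding hidden combination and the resulting constant-CF family.

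The main obstacle I anticipate is the good-diamond finiteness claim: putting the mixed polynomial system into a form from which zero-dimensionality is visible---equivalently, checking that the Jacobian of the CF map has full column rank at a generic parameter---requires tracking how $\gamma$ enters each mixture and excluding the degenerate loci removed by A1. The bad-diamond direction is, once the hidden combinations are found, only a verification; the real work there is discovering the correct reparametrization, which I would obtain by inspecting exactly which monomials in $(\gamma,\bs{z}_h)$ survive in the restricted CF polynomials and reading off the missing degree of freedom.
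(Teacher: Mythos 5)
Your overall architecture is sound and genuinely different from the paper's proof. The paper handles all four sampling configurations uniformly by computing, in Macaulay2, the elimination ideal $I$ of phylogenetic invariants and comparing $\mathrm{dim}(I)$ to the number of cycle parameters surviving in the equations ($\mathrm{dim}(I)=4$ with four unknowns for each good diamond; $\mathrm{dim}(I)=3$ versus four unknowns for bad diamond I; $\mathrm{dim}(I)=5$ versus six for bad diamond II). Your good-diamond plan (triangular solving plus a generic Jacobian rank check) is a by-hand equivalent of that dimension count and works: with $n_1=2$ and singleton corners, the quartets of Tables \ref{k4table} and \ref{k4table-2} yield exactly four independent combinations in the four unknowns $\{z_{2,3},z_{1,3},z_1,\gamma\}$. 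Your bad-diamond plan (exhibit an explicit constant-CF curve) is more constructive than the paper's dimension argument and does succeed in principle: for bad diamond I the informative CFs depend on $(\gamma,z_{1,3},z_{2,3})$ only through $c_1=(1-\gamma)+\gamma z_{2,3}$ and $c_2=\gamma+(1-\gamma)z_{1,3}$ (together with the directly identified $z_3$), so letting $\gamma$ vary while solving these two relations for $z_{1,3},z_{2,3}$ traces the desired one-parameter family inside A1.

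However, you have swapped the hybrid-lineage combinatorics of the two bad diamonds, and this would derail the computations as you planned them. In the paper's labeling (Figure \ref{k4nettree}), $n_0$ is the subgraph below the hybrid node and $n_3$ is attached to the cycle node opposite the hybrid. In bad diamond I ($n_0=n_1=n_2=1$, $n_3>1$) the two taxa drawn from $n_3$ do \emph{not} pass through the hybrid node; only the single $n_0$ lineage does, so every informative CF is a linear mixture $(1-\gamma)(\cdot)+\gamma(\cdot)$ (quarnet types 2 and 3 of Figure \ref{5quartets}) --- the weights $\gamma^2$, $2\gamma(1-\gamma)$, $(1-\gamma)^2$ you predict never appear there, and the hidden combinations are the linear ones above, not expressions of the form $\gamma^2 z_i+(1-\gamma)^2 z_j$. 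Conversely, in bad diamond II ($n_0>1$) it is \emph{two} lineages, both from $n_0$, that enter the hybrid node, so the quadratic weights do appear (type-1 quarnets, with $z_{0,1},z_{0,2}$ in the $(1-\gamma)^2$- and $\gamma^2$-weighted terms), contrary to your single-lineage description. Since your stated method is to read the hidden combinations off the surviving monomials, starting from the wrong monomial structure is a genuine gap: the search in bad diamond I would look for quadratic terms that do not exist, and the analysis of bad diamond II would miss the quadratic terms the reparametrization must absorb. The gap is repairable --- recompute the CF tables for each configuration before hunting for invariant combinations --- but the proposal as written would fail at exactly the step you identify as the real work.
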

\end{framed}

The proof of this theorem is combined with the proof of Theorem
\ref{parId} in Section \ref{proofthm34}. We note that the bad diamond
I had already been studied in \cite{Pickrell2012} in addition to
\cite{Solis-Lemus2016}.

\section{Discussion}
\label{conclusionsSec}

\noindent \textbf{Main findings.} We prove that hybridization cycles
of different sizes vary in their detectability potential. Cycles of 4
or more nodes are easily detected from concordance factors under a
pseudolikelihood model, while cycles of 2 nodes are totally
undetectable. 3-cycle hybridizations can be detected under certain
sampling schemes. In particular, we found that gene flow between
sister species -- common in real-life biological data -- cannot be
detected at all. We also prove that we can estimate numerical
parameters on the network (branch lengths and inheritance
probabilities) for hybridization cycles of 4 or more nodes.

Our work provides theoretical guarantees to the pseudolikelihood
estimation of larger hybridization cycles, while bringing up attention
to the need for novel models and methods to estimate gene flow between
closely related species (small cycles). Our work allows us to promote
pseudolikelihood estimation as a theoretically sound method, which is
also highly scalable and parallelizable to meet the evergrowing needs
of big genomic data.

\noindent \textbf{Limitations and Future work.} We restricted our work
to the class of level-1 networks. Our findings cannot be extended
directly to a broader class of networks, like level-k or tree-child
networks\cite{Huson2010}, because this assumption is crucial for the
proof of Lemma \ref{h1} which allows us to study hybridization events
separately from each other. More work needs to be done in order to
characterize identifiable networks of broader classes, as
hybridization events would more readily interact with one another,
creating a messier pattern in the CF equations. In addition, our work
does not tackle the identifiability of the network topology as a
whole, but rather the detectability of hybridization events. Future
work will try to provide theoretical guarantees for identifiability on
the whole topology which involve comparing whether different networks
can produce similar signal in the data.

Finally, our work was restricted to the pseudolikelihood model due to
its scalability potential. Yet the identifiability of the likelihood
model remains under studied to this day. In the future, we will
explore the identifiability of the likelihood model, which we suspect
can have the potential to estimate smaller cycles (3-cycles) that the
pseudolikelihood cannot.

\noindent \textbf{Comparison to \cite{Solis-Lemus2016} and
  \cite{Banos2019}.} As already mentioned, in 2016,
\cite{Solis-Lemus2016} published the first pseudolikelihood estimation
method for phylogenetic networks. In this work, we studied the
detectability of hybridization events and the estimability of
numerical parameters for the restricted case of level-1 networks. Many
of the mathematical proofs and derivations had to be excluded from the
manuscript because it was published in a biological journal. The lack
of mathematical details made it almost impossible to build upon that
work to extend to broader classes of networks. Our manuscript fills
this gap by providing the mathematical details of the identifiability
proofs in \cite{Solis-Lemus2016} in hopes to open the door to future
developments on this important topic.  In 2019, \cite{Banos2019}
published identifiability proofs for level-1 network
topologies. \cite{Banos2019} falsely claims that
\cite{Solis-Lemus2016} ``[...] arguments do not include investigations
on network properties such as cycle sizes'', yet \cite{Banos2019}
corroborates the results in \cite{Solis-Lemus2016} under a different
theoretical approach (graph theory), and identifies a combinatorial
strategy for inferring networks which later appeared in
\cite{Allman2019}. In addition to topology identifiability,
\cite{Solis-Lemus2016} and this present work prove finite
identifiability of numerical parameters (branch lengths and
inheritance probabilities), whereas \cite{Banos2019} focus exclusively
on topology identifiability.


\section{Proofs}

\subsection{Proof of Theorem \ref{topId}: detectability of
  hybridizations}
\label{proofthm1}

\begin{proof}[Proof of Theorem \ref{topId}.]
  By Lemma \ref{h1} (Section \ref{auxsec}), we know that we can focus
  on one hybridization at a time, and thus, the system of CF
  polynomial equations depend only on the parameters around the chosen
  hybridization. Without loss of generality, choose the $i^{th}$
  hybridization in $\mathcal{N}$. Let $\mathcal{N}$ be a $k_i$-cycle
  network on the $i^{th}$ hybridization. Let $\mathcal{T}$ be the
  subnetwork of $\mathcal{N}$ obtained by removing the minor hybrid
  edge on the $i^{th}$ hybridization. The gist of the proof is that we
  will find under which parameters (branch lengths and inheritance
  probabilities) does the system of equations
  $CF(\mathcal{N},\bs{z},\bs{\gamma})=CF(\mathcal{T},\bs{w})$ has a
  solution with aid of the Mathematica software \cite{Mathematica}
  (see the Reproducibility section for information on the
  scripts). Intuitively, if there exist a set in parameter space for
  which the network and the subnetwork produce the same CFs, then the
  hybridization is not detectable.

  For ease of reading, we placed all large figures and tables in
  Section \ref{figstabs}.

  \begin{proof}[Proof for 2-cycle]
    If $k_i=2$, then we obtain the following system of polynomial
    equations by setting
    $CF(\mathcal{N},\bs{z},\bs{\gamma})=CF(\mathcal{T},\bs{w})$. Recall
    that $z_i=\exp(-t_i)$ for $t_i$ branch length in $\mathcal{N}$,
    and $w_i=\exp(-t'_i)$ for $t'_i$ branch length in $\mathcal{T}$.

  \begin{minipage}{.3\textwidth}
    \centering
    \includegraphics[scale=.1]{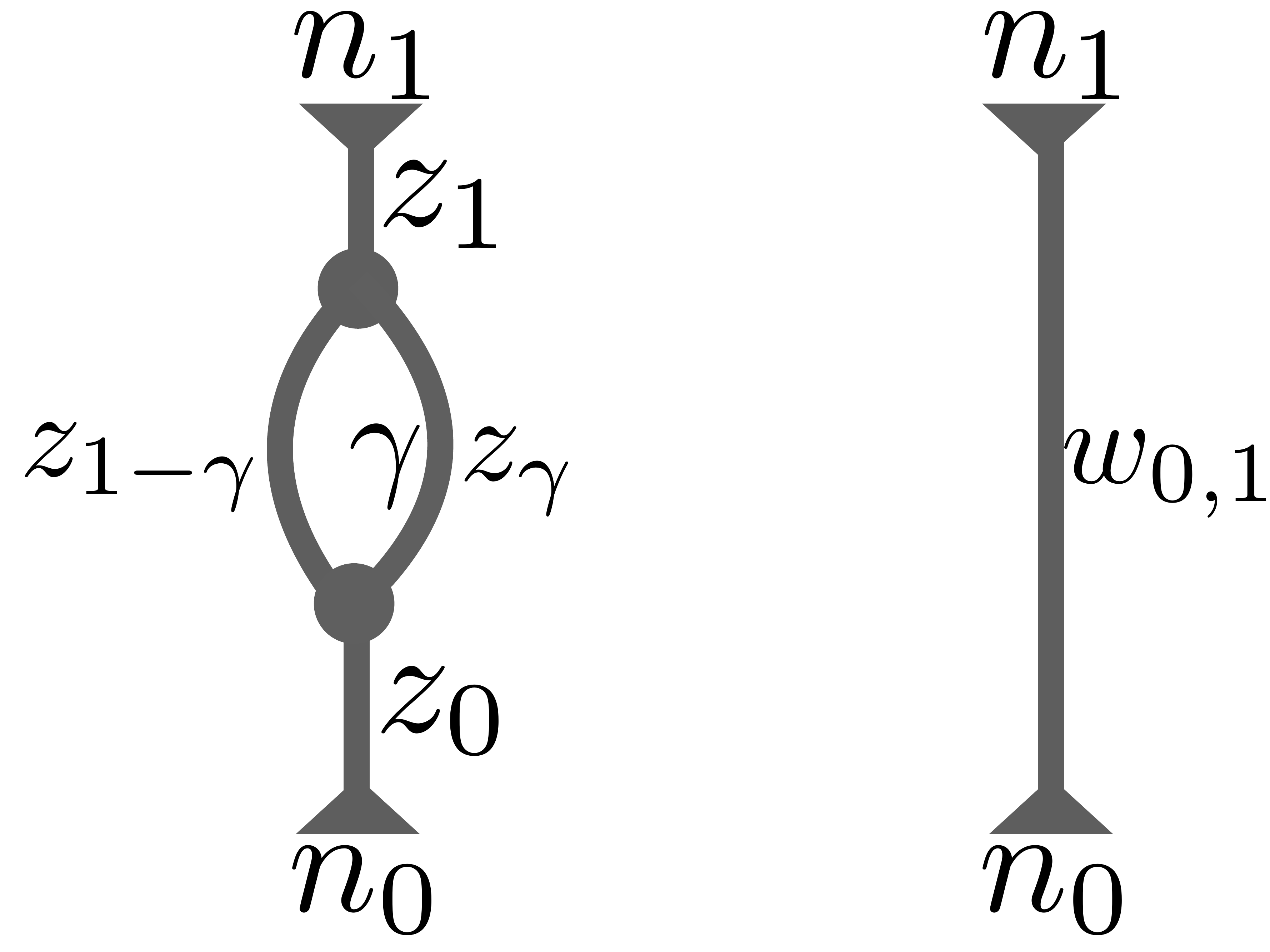}
  \end{minipage}%
  \begin{minipage}{.7\textwidth}
    \begin{align*}
      \gamma\left(1-\frac{2}{3}z_1z_{1-\gamma}z_0\right)+(1-\gamma)\left(1-\frac{2}{3}z_1z_{\gamma}z_0\right)&=1-\frac{2}{3}w_{0,1} \\ 
      \gamma\frac{1}{3}z_1z_{1-\gamma}z_0+(1-\gamma)\frac{1}{3}z_1z_{\gamma}z_0&=\frac{1}{3}w_{0,1} \\ 
      \gamma\frac{1}{3}z_1z_{1-\gamma}z_0+(1-\gamma)\frac{1}{3}z_1z_{\gamma}z_0&=\frac{1}{3}w_{0,1} 
    \end{align*}
  \end{minipage}

  This system has a solution when
  $w_{0,1} = \gamma z_1z_{1-\gamma}z_0-\gamma
  z_1z_{\gamma}z_0+z_1z_{\gamma}z_0$ for any values of
  $(\bs{z},\bs{\gamma})$. Thus, both $\mathcal{N}$ and $\mathcal{T}$
  can yield the same set of CFs for a properly chosen branch length
  $w_{0,1}$ for any values of $(\bs{t},\bs{\gamma})$. Therefore, the
  hybridization in this case is \textbf{not detectable}.
\end{proof}

For $k_i \geq 3$, by Lemma \ref{nk} (Section \ref{auxsec}), we only
need to consider the CF equations when $n_{k}=0,1,2$ in the $k_i$
subgraphs defined by the hybridization cycle. By Lemma \ref{h1}, these
subgraphs can be considered subtrees. By assumption A2, we know that
there are at least 2 taxa in each subtree, so all CF equations are
well-defined (but see Section \ref{badcases} for cases when there is
only a single taxon sampled from some of the subtrees).

  \begin{proof}[Proof for 3-cycle]
    For $k_i=3$ (Figure \ref{k3nettree}), the equations
    $CF(\mathcal{N},\bs{z},\bs{\gamma})=CF(\mathcal{T},\bs{w})$ (Table
    \ref{k3table}) have a solution if and only if
    \begin{itemize}
    \item $ \gamma=0 $ or $ \gamma=1 $ (which are the trivial cases), 
    \item or for $ \gamma\in(0,1) $, under scenarios that convert the
      3-cycle into a 2-cycle or that violate assumptions A1-A2 (see
      Figure \ref{k3path}).
\end{itemize}

That is, Figure \ref{k3path} shows that the 3-cycle hybridizations
that are not detectable for $\gamma \in (0,1)$ are those that i) by
contracting a branch on the cycle, one ends up with a 2-cycle
hybridization, or ii) by expanding certain branches to infinity, the
branch becomes ILS-free. This implies that all individuals that are
descendent to this branch (say $n_1$ in Figure \ref{k3path} top right)
will necessarily coalesce, so this scenario is equivalent to having
only one individual $n_1=1$, and this violates assumption
A2. Similarly, all non-detectable 3-cycle hybridizations violate the
A1 assumption.

\begin{shaded}
  We conclude that all non-detectable 3-cycle hybridizations
  correspond to 2-cycle or violations of assumptions A1-A2.
  Thus, assuming A1-A2, 3-cycle
  hybridizations are \textbf{detectable}.
\end{shaded}

\end{proof}
  
    \begin{proof}[Proof for 4-cycle]
      For $k_i=4$ (Figure \ref{k4nettree}), the equations
      $CF(\mathcal{N},\bs{z},\bs{\gamma})=CF(\mathcal{T},\bs{w})$
      (Tables \ref{k4table} and \ref{k4table-2}) have a solution if
      and only if
\begin{itemize}
      \item $ \gamma=0 $ or $ \gamma=1 $ (which are the trivial
        cases),
      \item or for $ \gamma\in(0,1) $, we need to have $z_{2,3}=1$
        ($t_{2,3}=0$), which necessarily takes us back to the 3-cycle
        case (see Figure \ref{k4path} left), in addition to other
        conditions (listed below) which simply result in the already
        studied undetectable cases in 3-cycle hybridizations.
\end{itemize}
      
\textbf{Set 1: 4-cycle with $\mathbf{t_{2,3}=0, \gamma \in (0,1)}$.}
The first set of additional conditions has $z_{1,3}=1$ ($t_{1,3}=0$)
which takes us back to the 2-cycle case (which we already knew was
undetectable):
\begin{enumerate}
  \item $z_{1,3}=1$
\item $z_{1,3}=1,  z_0=0 $
  \item $z_{1,3}=1,  z_2=0 $
  \item $z_{1,3}=1,  z_0=0, z_2=0$
  \item $z_{1,3}=1, \gamma^2 z_{0,2}=(1-\gamma) (\gamma z_{0,1}-2
    \gamma-z_{0,1}) $
  \item $z_{1,3}=1, z_2=0, \gamma^2 z_{0,2}=(1-\gamma) (\gamma z_{0,1}-2
    \gamma-z_{0,1}) $
  \end{enumerate}

  \textbf{Set 2: 4-cycle with $\mathbf{t_{2,3}=0, \gamma \in (0,1)}$.} The
  second set of additional conditions has $z_1=0$ ($t_1=\infty$),
  which takes us back to an undetectable 3-cycle hybridization in
  Figure \ref{k3path} (bottom right). The other condition in number 1:
  $\gamma (z_{0,2}-1)=(1-\gamma)(z_{1,3}+z_{0,1}-2)$ is equivalent to
  the one in Figure \ref{k3path} (bottom right) by replacing $z_{1,3}$
  with the corresponding branch in the 3-cycle case of $z_{1,2}$. The
  other condition on $z_{0,2}$ in number 2 is equivalent to the one in
  number 1 under the additional restriction
  (i.e. $z_{0,1}=\frac{\gamma}{\gamma-1}$).
  \begin{enumerate}
  \item $z_1=0, \gamma
    (z_{0,2}-1)=(1-\gamma)(z_{1,3}+z_{0,1}-2)$\\
  \item $z_1=0,z_{0,1}=\frac{\gamma}{\gamma-1},\gamma
    z_{0,2}=(1-\gamma)(z_{1,3}-2)$ \\
  \end{enumerate}

  \textbf{Set 3: 4-cycle with $\mathbf{t_{2,3}=0, \gamma \in (0,1)}$.} The
  third set of additional conditions has only one: $z_0=0,z_1=0$ which
  takes us to an undetectable 3-cycle case in Figure \ref{k3path}
  (center right).

  \textbf{Set 4: 4-cycle with $\mathbf{t_{2,3}=0, \gamma \in (0,1)}$.} The
  fourth set of additional conditions has a combination of
  $z_1=0,z_2=0$ which takes us to an undetectable 3-cycle case in
  Figure \ref{k3path} (top right).
  \begin{enumerate}
  \item $z_1=0, z_2=0, \gamma
    (z_{0,2}-1)=(1-\gamma)(z_{1,3}+z_{0,1}-2)$\\
  \item $z_1=0,z_2=0, z_{0,1}=\frac{\gamma}{\gamma-1},\gamma
    z_{0,2}=(1-\gamma)(z_{1,3}-2)$ \\
  \end{enumerate}

  \textbf{Set 5: 4-cycle with $\mathbf{t_{2,3}=0, \gamma \in (0,1)}$.} Finally,
  the last option for additional condition is equivalent to the
  $z_0=z_1=0$ case (Set 3 4-cycle): $z_0=0, z_1=0, z_2=0$. Even if
  $z_2=0$, since there are two subgraphs connected to the same cycle
  node (see Figure \ref{k4path} left), we are still in the case of
  $z_0=z_1=0$ in Figure \ref{k3path} (center right).

\begin{shaded}  
  We conclude that all non-detectable 4-cycle hybridizations
  correspond to 3-cycle or 2-cycle non-detectable cases, which cannot
  happen under assumptions A1-A2. Thus, assuming A1-A2, 4-cycle
  hybridizations are \textbf{detectable}.
\end{shaded}
  
\end{proof}

\begin{proof}[Proof for $k$-cycle ($k \geq 5$)]
  By Lemma \ref{k5} (Section \ref{auxsec}), we only need to focus on
  the case of $k=5$ to proof hybridization detectability for any case
  of $k \geq 5$. For $k_i=5$ (Figure \ref{k5nettree}), the equations
  $CF(\mathcal{N},\bs{z},\bs{\gamma})=CF(\mathcal{T},\bs{w})$ (Tables
  \ref{k5table-1}, \ref{k5table-2}, \ref{k5table-3}, \ref{k5table-4},
  \ref{k5table-5}) have a solution if and only if
\begin{itemize}
\item $ \gamma=0 $ or $ \gamma=1 $ (which are the trivial cases),
\item or for $ \gamma\in(0,1) $, we need to have $\tilde{z}=1$
  ($\tilde{t}=0$), and $z_{2,4}=1$ ($t_{2,4}=0$) which necessarily
  takes us back to the 3-cycle case (see Figure \ref{k4path} right),
  in addition to other conditions (listed below) which simply result
  in the already studied undetectable cases in 3-cycle hybridizations.
\end{itemize}

\textbf{Set 1: 5-cycle with
  $\mathbf{\tilde{t}=t_{2,4}=0, \gamma \in (0,1)}$.}  As in the
4-cycle, the first set of additional conditions has $z_{1,3}=1$
($t_{1,3}=0$) which takes us back to the 2-cycle case (which we
already knew was undetectable):
  \begin{enumerate}
  \item $z_{1,3}=1$
  \item $z_{1,3} = 1,z_2 = 0$
  \item $z_{1,3} = 1,z_2=0,\gamma^2 z_{0,2} = (1-\gamma) (\gamma z_{0,1}-2 \gamma-z_{0,1})$
\item $z_{1,3} = 1,z_0 = 0$
  \item $z_{1,3} = 1,z_0 = 0,z_2 = 0$
\item $z_{1,3} = 1,\gamma^2 z_{0,2} = (1-\gamma) (\gamma z_{0,1}-2
  \gamma-z_{0,1})$
    \end{enumerate}
    Note that these equations are comparable to the first set of
    additional conditions in the 4-cycle case (Set 1 4-cycle), with
    the difference that here we do not need the $z_2=1$ condition
    ($t_2=0$) which simply states that there are no coalescent events
    on this branch, as in the 5-cycle case, when we collapse the
    branches inside the cycle, three subgraphs are connected to the
    same node ($n_2,n_3,n_4$ in Figure \ref{k4path} right). So, in the
    4-cycle case, we required the $z_2=1$ so that more than two
    individuals would arrive at the cycle node. In the 5-cycle case, even
    if $z_3=z_4=z_2=1$, there would be three individuals reaching the
    cycle node. The extra condition on $z_{0,2}$ in numbers 3 and 6
    match that one in Set 1 4-cycle (number 3) as well.

    \textbf{Set 2: 5-cycle with
      $\mathbf{\tilde{t}=t_{2,4}=0, \gamma \in (0,1)}$.} The second
    set of additional conditions has $z_1=0$ ($t_1=\infty$), which
    takes us back to an undetectable 3-cycle hybridization in Figure
    \ref{k3path} (bottom right).  The condition
    $\gamma (z_{0,2}-1)=(1-\gamma)(z_{1,3}+z_{0,1}-2)$ matches the Set
    2 (4-cycle) and is similarly equivalent to the one in Figure
    \ref{k3path} (bottom right) by replacing $z_{1,3}$ with the
    corresponding branch in the 3-cycle case of $z_{1,2}$.
  \begin{enumerate}
\item $z_1=0,z_{0,1} = \frac{\gamma}{\gamma-1},\gamma z_{0,2} = (1-\gamma)(z_{1,3}-2)$
\item $z_1=0,\gamma(z_{0,2}-1) = (1-\gamma)(z_{1,3}+z_{0,1}-2)$
\end{enumerate}

\textbf{Set 3: 5-cycle with
  $\mathbf{\tilde{t}=t_{2,4}=0, \gamma \in (0,1)}$.}  Similarly to
4-cycle, the third set of additional conditions has a combination of
$z_1=0,z_0=0$ which takes us to an undetectable 3-cycle case in Figure
\ref{k3path} (center right).

\textbf{Set 4: 5-cycle with
  $\mathbf{\tilde{t}=t_{2,4}=0, \gamma \in (0,1)}$.} Similarly to
4-cycle, the fourth set of additional conditions has a combination of
$z_1=0,z_2=0$ which takes us to an undetectable 3-cycle case in Figure
\ref{k3path} (top right).
    \begin{enumerate}
\item $z_1 = 0,z_2 = 0,\gamma(z_{0,2}-1) = (1-\gamma)(z_{1,3}+z_{0,1}-2)$
\item $z_1 = 0,z_2 = 0,z_{0,1} = \frac{\gamma}{\gamma-1},\gamma
  z_{0,2} = (1-\gamma)(z_{1,3}-2)$
  \end{enumerate}

  \textbf{Set 5: 5-cycle with
    $\mathbf{\tilde{t}=t_{2,4}=0, \gamma \in (0,1)}$.} Finally, the
  last option for additional condition is equivalent to the
  $z_0=z_1=0$ case (Set 3): $z_1=0, z_2=0, z_0=0$. 
  Note that we are still in the case of $z_0=z_1=0$ in Figure \ref{k3path} (center right)
  even if $z_2=0$. This is because there are three subgraphs connected to the
  same cycle node (see Figure \ref{k4path} right).

\begin{shaded}  
  We conclude that all non-detectable 5-cycle hybridizations
  correspond to 3-cycle or 2-cycle non-detectable cases, which cannot
  happen under assumptions A1-A2. Thus, assuming A1-A2, 5-cycle
  hybridizations are \textbf{detectable}. By Lemma \ref{k5}, we
  conclude that all $k$-cycle hybridizations are \textbf{detectable}
  for $k \geq 5$.
\end{shaded}

\end{proof}

Given that the solutions to the systems
$CF(\mathcal{N},\bs{z},\bs{\gamma})=CF(\mathcal{T},\bs{w})$ for all
$k\geq 3$ depend on hard polytomies ($t_i=0$), ILS-free branches
($t_i=\infty$), or single taxon sampled for some subgraphs ($n_k=1$),
by assuming A1-A2, we can guarantee that the hybridizations are
\textbf{detectable} for $k\geq 3$.

\end{proof}


\subsection{Proof of Theorem \ref{singleThm}: detectability of
  hybridizations under single-taxon sampling}
\label{proofthm2}

\begin{proof}
  By Lemma \ref{h1} (Section \ref{auxsec}), we know that we can focus
  on one hybridization at a time, and thus, the system of CF
  polynomial equations depend only on the parameters around the chosen
  hybridization. Without loss of generality, choose the $i^{th}$
  hybridization in $\mathcal{N}$. Let $\mathcal{T}$ be the major tree
  of $\mathcal{N}$ obtained by removing the minor hybrid edge. The
  gist of the proof is that we will find under which parameters
  (branch lengths and inheritance probabilities) does the system of
  equations
  $CF(\mathcal{N},\bs{z},\bs{\gamma})=CF(\mathcal{T},\bs{w})$ has a
  solution (with aid of the Mathematica software
  \cite{Mathematica}). Intuitively, if there exist a set in parameter
  space for which the network and the tree produce the same CFs, then
  the hybridization is not detectable.

  By Lemma \ref{nk} (Section \ref{auxsec}), we only need to consider
  the CF equations when $n_{k} \leq 2$ in the $k_i$ subgraphs defined
  by the hybridization cycle.
  
  \begin{proof}[Proof for 3-cycle]
    Following the notation of Figure \ref{k3nettree} for $n_k$ for
    $k=0,1,2$, when two of the three subgraphs have only one taxon,
    the 3-cycle hybridization is not detectable because the system
    $CF(\mathcal{N},\bs{z},\bs{\gamma})=CF(\mathcal{T},\bs{w})$ has a
    solution regardless of the values of $\bs{t},\bs{\gamma}$:
    
    \begin{minipage}{.3\textwidth}
      $n_0=2,n_1=1,n_2=1$
    \end{minipage}%
    \begin{minipage}{.7\textwidth}
      The system has
      a solution when\\
      $w_0= z_0 (\gamma ^2 z_{0,1}+\gamma ^2 z_{0,2}+\gamma ^2
        z_{1,2}-2 \gamma z_{0,1}-\gamma z_{1,2}+z_{0,1}-3 \gamma ^2+3
        \gamma)$
      \end{minipage}

      \begin{minipage}{.3\textwidth}
        $n_0=1,n_1=2,n_2=1$
      \end{minipage}%
      \begin{minipage}{.7\textwidth}
        The system has a solution when $w_1= z_1 (\gamma  z_{1,2}-\gamma +1)$
      \end{minipage}
      
      \begin{minipage}{.3\textwidth}
        $n_0=1,n_1=1,n_2=2$
      \end{minipage}%
      \begin{minipage}{.7\textwidth}
        The system has a solution when $w_2= -\gamma  z_2 z_{1,2}+z_2 z_{1,2}+\gamma  z_2$
      \end{minipage}
      
      Note that these undetecable hybridization events rule out the
      potential to detect gene flow between sister taxa.

      \vspace{0.5cm}
      When only one of the three subgraphs has only
      one taxon, detectability depends on the sampling (that is, on
      which subgraph has only one sampled taxon).

      \vspace{0.25cm}
      For $n_0=2, n_1=2, n_2=1$, the equations
      $CF(\mathcal{N},\bs{z},\bs{\gamma})=CF(\mathcal{T},\bs{w})$ have
      a solution if and only if:
      \begin{itemize}
      \item $ \gamma=0 $ or $ \gamma=1 $ (which are the trivial cases),
      \item or for $ \gamma\in(0,1) $, we need to have one of the following:
        \begin{enumerate}
        \item $z_{1,2}=0$ which takes us back to the undetectable
          2-cycle hybridization,
        \item $z_1=0$ which is the same undetectable 3-cycle case as
          Figure \ref{k3path} bottom right
        \item $z_0=0$ which is the same undetectable 3-cycle case as
          Figure \ref{k3path} center left (since $n_2=1$ is equivalent
          to $z_2=0$)
        \item $(1-\gamma)(z_{0,1}-1)=\gamma(z_{0,2}+z_{1,2}-2)$ which
          is the same undetectable 3-cycle case as Figure \ref{k3path}
          bottom left (since $n_2=1$ is equivalent to $z_2=0$)
        \end{enumerate}
        The first three conditions are violated under assumption A1,
        but not the fourth one. Since the condition in number 4 has
        measure zero in parameter space, we conclude that the 3-cycle
        hybridization with taxa sampled $n_0=2, n_1=2, n_2=1$ is
        \textbf{generically detectable}.
      \end{itemize}

      \vspace{0.25cm}
      For $n_0=2, n_1=1, n_2=2$, the equations
      $CF(\mathcal{N},\bs{z},\bs{\gamma})=CF(\mathcal{T},\bs{w})$ have
      a solution if and only if:
      \begin{itemize}
      \item $ \gamma=0 $ or $ \gamma=1 $ (which are the trivial cases),
      \item or for $ \gamma\in(0,1) $, we need to have one of the
        following:
        \begin{enumerate}
        \item $z_{1,2}=0$ which takes us back to the undetectable
          2-cycle hybridization,
        \item $z_2=0$ which is the same undetectable 3-cycle case as
          Figure \ref{k3path} bottom left
        \item $z_0=0$ which is the same undetectable 3-cycle case as
          Figure \ref{k3path} center right (since $n_1=1$ is
          equivalent to $z_1=0$)
        \item $\gamma(z_{0,2}-1)=(1-\gamma)(z_{0,1}+z_{1,2}-2)$ which
          is the same undetectable 3-cycle case as Figure \ref{k3path}
          bottom right (since $n_1=1$ is equivalent to $z_1=0$)
        \end{enumerate}
        The first three conditions are violated under assumption A1,
        but not the fourth one. Since the condition in number 4 has
        measure zero in parameter space, we conclude that the 3-cycle
        hybridization with taxa sampled $n_0=2, n_1=1, n_2=2$ is
        \textbf{generically detectable}. Note that this case is
        symmetric to the previously described ($n_0=2, n_1=2, n_2=1$),
        so it makes sense that both are generically detectable.
      \end{itemize}

      \vspace{0.25cm}
      For $n_0=1, n_1=2, n_2=2$, the equations
      $CF(\mathcal{N},\bs{z},\bs{\gamma})=CF(\mathcal{T},\bs{w})$ have
      a solution if and only if:
      \begin{itemize}
      \item $ \gamma=0 $ or $ \gamma=1 $ (which are the trivial
        cases),
      \item or for $ \gamma\in(0,1) $, we need to have one of the
        following:
        \begin{enumerate}
        \item $z_{1,2}=0$ which takes us back to the undetectable
          2-cycle hybridization,
        \item $z_2=0$ which is the same undetectable 3-cycle case as
          Figure \ref{k3path} bottom left
        \item $z_1=0$ which is the same undetectable 3-cycle case as
          Figure \ref{k3path} bottom right
        \end{enumerate}
        All conditions are violated under assumption A1, and thus, we
        conclude that the 3-cycle hybridization with taxa sampled
        $n_0=1, n_1=2, n_2=2$ is \textbf{detectable}.
      \end{itemize}
    \end{proof}

    \begin{proof}[Proof for 4-cycle]
      We only need to prove that the case $n_0=n_1=n_2=n_3=1$ is
      detectable.  For $n_0=n_1=n_2=n_3=1$, the equations
      $CF(\mathcal{N},\bs{z},\bs{\gamma})=CF(\mathcal{T},\bs{w})$ have
      a solution if and only if:
      \begin{itemize}
      \item $ \gamma=0 $ (which is the trivial case),
      \item or for $ \gamma\in(0,1) $, we need to have $z_{2,3}=1$
        which takes us back to the undetectable 3-cycle case with
        $n_0=n_1=1,n_2=2$.
      \end{itemize}
      Given that the condition is violated under assumption A1, we
      conclude that the 4-cycle with only one taxon sampled from each
      subgraph is \textbf{detectable}. Note that this case does not
      have the $\gamma=1$ as possible solution. If we see the
      equations for $n=(1,1,1,1)$ in Table \ref{k4table}, we see that
      only $\gamma=0$ yields a solution in this case.
    \end{proof}

    \begin{proof}[Proof for $k$-cycle, $k \geq 5$]
      By Lemma \ref{k5}, we only need to prove detectability for the
      case $k=5$, and we only need to prove that the case
      $n_0=n_1=n_2=n_3=n_4=1$ is detectable.  For
      $n_0=n_1=n_2=n_3=n_4=1$, the equations
      $CF(\mathcal{N},\bs{z},\bs{\gamma})=CF(\mathcal{T},\bs{w})$ have
      a solution if and only if:
      \begin{itemize}
      \item $ \gamma=0 $ (which is the trivial case),
      \item or for $ \gamma\in(0,1) $, we need to have $z_{2,4}=1$ and
        $\tilde{z}=1$ which takes us back to the undetectable 3-cycle
        case with $n_0=n_1=1,n_2>1$.
      \end{itemize}
      
      Given that the conditions are violated under assumption A1, we
      conclude that the $k$-cycle with only one taxon sampled from
      each subgraph is \textbf{detectable} for every $k \geq 5$. Note
      that this case does not have the $\gamma=1$ as possible
      solution. This is true because the 4-cycle case is a subset of
      the 5-cycle case, and thus, similarly as for the 4-cycle case,
      the equations for $n=(1,1,1,1)$ in Table \ref{k4table} only have
      a matching solutions with $\gamma=0$.

    \end{proof}

    \begin{shaded}
      We conclude that any hybridization cycle with 4 or more nodes
      ($k \geq 4$) is \textbf{detectable} even if only one taxon is
      sampled from each of the subgraphs. 3-cycle hybridizations are
      \textbf{detectable} if the one taxon sampled is a child of the
      hybrid node (and more than one taxon sampled from the other two
      subgraphs), it is \textbf{generically detectable} if there are
      two or more taxa sampled from the child of the hybrid node, and
      only one in either of the other subgraphs. Finally, the 3-cycle
      hybridization is \textbf{not detectable} of two subgraphs
      contain only one taxon. This last condition rules out gene flow
      between sister taxa.
    \end{shaded}
    
  \end{proof}


\subsection{Proof of Theorems \ref{parId} and \ref{parIdSingle}:
  numerical parameters finite identifiability}
\label{proofthm34}


\begin{proof}
  By Lemma \ref{h1} (Section \ref{auxsec}), we know that we can focus
  on one hybridization at a time. Without loss of generality, choose
  the $i^{th}$ hybridization on $\mathcal{N}$. Let
  $\mathcal{N}$ be a $k_i$-cycle network on the $i^{th}$
  hybridization. Let $\theta_i = \{\bs{t}_{h_i},\bs{\gamma}_{h_i}\}$
  be the subset of numerical parameters associated with this
  hybridization cycle.  We use the Macaulay2 software \cite{Grayson}
  for the proof as detailed below. See the Reproducibility section for
  information on the scripts.

\begin{proof}[Proof for $k_i=3$]
  The system of CF equations for a 3-cycle hybridization has 18
  polynomial equations (Table \ref{k3table}) in 7 numerical parameters
  $\{z_{0}, z_{1},z_{2}, z_{0,1},z_{1,2}, z_{0,2},\gamma\}$. We know
  from algebraic geometry \cite{Cox2007} that a system with the same
  number of algebraically independent equations as unknown parameters
  has finitely many solutions, and thus, the numerical parameters are
  finitely identifiable.

  We define the ideal $J$ of the 18 CF polynomials on 25 variables:
  the $\{a_i\}_{i=1}^{18}$ that correspond to the CF values, and the 7
  numerical parameters
  $\{z_{0},z_{1},z_{2},z_{0,1},z_{1,2},z_{0,2},\gamma\}$.

\begin{align*}
J=\{  1-\frac{2}{3} z_{1}z_{1,2}z_{2} &- a_1, \\
  \frac{1}{3} z_{1}z_{1,2}z_{2} &- a_2, \\
  \frac{1}{3} z_{1}z_{1,2}z_{2}&- a_3, \\
  (1- \gamma )\left(1-\frac{2}{3} z_{2}z_{1,2}
  \right) + \gamma \left(1-\frac{2}{3} z_{2}
  \right)&- a_4, \\
  (1- \gamma )\frac{1}{3} z_{2}z_{1,2} + \gamma \frac{1}{3}
  z_{2}&- a_5, \\
  (1- \gamma )\frac{1}{3} z_{2}z_{1,2} + \gamma \frac{1}{3}
  z_{2}&- a_6, \\
  (1- \gamma )\left(1-\frac{2}{3} z_{1}
  \right) + \gamma \left(1-\frac{2}{3}
  z_{1}z_{1,2} \right)&- a_7, \\
  (1- \gamma )\frac{1}{3} z_{1} + \gamma \frac{1}{3}
  z_{1}z_{1,2}&- a_8, \\
  (1- \gamma )\frac{1}{3} z_{1} + \gamma \frac{1}{3}
  z_{1}z_{1,2}&- a_9, \\
  (1- \gamma )^2\left(1-\frac{2}{3}
  z_{2}z_{0}z_{1,2}z_{0,1} \right) +2 \gamma (1-
  \gamma )\left(1-\frac{2}{3} z_{2}z_{0} \right)
  + \gamma ^2\left(1-\frac{2}{3}
  z_{2}z_{0}z_{0,2} \right)&- a_{10}, \\
  (1- \gamma )^2\frac{1}{3} z_{2}z_{0}z_{1,2}z_{0,1}
  +2 \gamma (1- \gamma )\frac{1}{3} z_{2}z_{0} + \gamma
  ^2\frac{1}{3} z_{2}z_{0}z_{0,2} &- a_{11}, \\
  (1- \gamma )^2\frac{1}{3} z_{2}z_{0}z_{1,2}z_{0,1}
  +2 \gamma (1- \gamma )\frac{1}{3} z_{2}z_{0} + \gamma
  ^2\frac{1}{3} z_{2}z_{0}z_{0,2} &- a_{12}, \\
  (1- \gamma )^2\left(1-\frac{2}{3} z_{0}z_{0,1} \right) +2 \gamma (1-
  \gamma )\left(1- z_{0} +\frac{1}{3}
  z_{0}z_{1,2} \right) + \gamma
  ^2\left(1-\frac{2}{3} z_{0}z_{0,2} \right)&- a_{13}, \\ 
  (1- \gamma )^2\frac{1}{3} z_{0}z_{0,1} + \gamma (1-
  \gamma ) z_{0} \left(1-\frac{1}{3} z_{1,2} \right) +
  \gamma ^2\frac{1}{3} z_{0}z_{0,2} &- a_{14}, \\
  (1- \gamma )^2\frac{1}{3} z_{0}z_{0,1} + \gamma (1-
  \gamma ) z_{0} \left(1-\frac{1}{3} z_{1,2} \right) +
  \gamma ^2\frac{1}{3} z_{0}z_{0,2} &- a_{15}, \\
  (1- \gamma )^2\left(1-\frac{2}{3} z_{1}z_{0}z_{0,1} \right) +2 \gamma
  (1- \gamma )\left(1-\frac{2}{3} z_{1}z_{0}
  \right) + \gamma ^2\left(1-\frac{2}{3}
  z_{1}z_{0}z_{1,2}z_{0,2} \right)&- a_{16}, \\
  (1- \gamma )^2\frac{1}{3} z_{1}z_{0}z_{0,1} +2
  \gamma (1- \gamma )\frac{1}{3} z_{1}z_{0} + \gamma
  ^2\frac{1}{3} z_{1}z_{0}z_{1,2}z_{0,2} &- a_{17}, \\
  (1- \gamma )^2\frac{1}{3} z_{1}z_{0}z_{0,1} +2
  \gamma (1- \gamma )\frac{1}{3} z_{1}z_{0} + \gamma
  ^2\frac{1}{3} z_{1}z_{0}z_{1,2}z_{0,2} &- a_{18}\} \\
\end{align*}

It is evident that this system of equations is not algebraically
independent. For example, given that the three CFs for a given 4-taxon
subset need to sum up to one, we have that $a_1+a_2+a_3=1$ for the
first set of three. Furthermore, for tree-like 4-taxon subset, the two
minor CFs must be equal: $a_2=a_3$. Thus, focusing on the first three
CF equations, we see that there is only one degree of freedom in
$\{a_1,a_2,a_3\}$, which is enough to estimate the internal branch in
a quartet, but not enough by itself to estimate branch length and
inheritance probabilities on quarnets.

The two equations on the CF values ($a_1+a_2+a_3=1, a_2=a_3$) are two
examples of \textit{phylogenetic invariants} that the CFs need to
satisfy.  In order to know the number of algebraically independent
equations, we first need to identify the set of phylogenetic
invariants in the system.

To obtain the set of phylogenetic invariants, we use Macaulay2 to
obtain the ideal generated by eliminating the numerical
parameters. The resulting ideal is given by:
$I=\{a_1 +2 a_3 -1, a_2 -a_3, a_4 +2 a_6 -1, a_5 -a_6, a_7 +2 a_9 -1 ,
a_8 -a_9, a_{10} +2 a_{12} -1, a_{11} -a_{12}, a_{13} +2 a_{15} -1,
a_{14} -a_{15}, a_{16} +2 a_{18} -1, a_{17} -a_{18}\}$.

We denote $I$ as the set of 12 phylogenetic invariants that the CFs
need to satisfy on a 3-cycle hybridization. With Macaulay2, we obtain
that the dimension of $I$ is equal to 6, restricted to the space of
the $\{a_i\}_{i=1}^{18}$. The dimension of an ideal is the maximum
length of a regular sequence: set of algebraically independent
polynomials\cite{Cox2007}.

Thus, we have 6 algebraically independent equations in 7 parameters
$\{z_{0},z_{1},z_{2},z_{0,1},z_{1,2},z_{0,2},\gamma\}$, and thus, the
numerical parameters in the 3-cycle hybridization are \textbf{not
  finitely identifiable}.




\end{proof}

\begin{proof}[$k_i$-cycle for $k_i \geq 4$]
  By Lemma \ref{k5par} (Section \ref{auxsec}), it is enough to prove
  finite identifiability of the parameters of a 4-cycle hybridization
  because any $k$-cycle hybridization ($k \geq 4$) can be reduced to a
  4-cycle by sub-sampling of taxa. We will show that by sampling two
  taxa from each of the four subgraphs $n_0, n_1, n_2, n_3$, we have
  enough equations to estimate all the parameters in the 4-cycle
  hybridization:
  $\{z_0,z_1,z_2,z_3,z_{0,1},z_{1,3},z_{2,3},z_{0,2},\gamma\}$. In the
  following proofs, we follow the same structure as the proof for
  $k_i=3$, but not as detailed. See proof of $k_i=3$ for more details.
  
  \noindent \textbf{Good diamond I:} $n_0=1,n_1=2,n_2=1,n_3=1$.  Here,
  we sample two individuals from the subgraph $n_1$ in Figure
  \ref{k4nettree}, and only one from the remaining subgraphs.

  We have 12 CF polynomials in the ideal on 16 variables:
  
  $\{a_{13}, a_{14}, a_{15}, a_{28}, a_{29}, a_{30},a_{34}, a_{35},
  a_{36},a_{37}, a_{38}, a_{39}\}$ and
  $\{z_{2,3}, z_{1,3}, z_1, \gamma\}$.

  \begin{align*}
    J=\{
 1-(2/3)z_{1,3}z_1-a_{13},\\
     (1/3)z_{1,3}z_1-a_{14},\\
     (1/3)z_{1,3}z_1-a_{15},\\
     (1-\gamma)(1-(2/3)z_{1,3})+\gamma(1/3)z_{2,3}-a_{28},\\
     (1-\gamma)(1/3)z_{1,3}+\gamma(1-(2/3)z_{2,3})-a_{29},\\
     (1-\gamma)(1/3)z_{1,3}+\gamma(1/3)z_{2,3}-a_{30},\\
     (1-\gamma)(1-(2/3)z_1)+\gamma(1-(2/3)z_{2,3}z_{1,3}z_1)-a_{34},\\
     (1-\gamma)(1/3)z_1+\gamma(1/3)z_{2,3}z_{1,3}z_1-a_{35},\\
     (1-\gamma)(1/3)z_1+\gamma(1/3)z_{2,3}z_{1,3}z_1-a_{36},\\
     (1-\gamma)(1-(2/3)z_1)+\gamma(1-(2/3)z_{1,3}z_1)-a_{37},\\
     (1-\gamma)(1/3)z_1+\gamma(1/3)z_{1,3}z_1-a_{38},\\
     (1-\gamma)(1/3)z_1+\gamma(1/3)z_{1,3}z_1-a_{39} \}
  \end{align*}
  We note that the set of $a$ values is not consecutive. This is
  because we use the order in Tables \ref{k4table} and \ref{k4table-2}
  to enumerate the $a_i$ so that they are consistent in all the
  4-cycle sampling cases.

  In this case, we have a set of 8 phylogenetic invariants:
\begin{align*}
  I=\{a_{38} - a_{39}, \\
  a_{37} + 2a_{39} - 1,\\
  a_{35} - a_{36},\\
  a_{34} + 2a_{36} - 1,\\
  a_{28} + a_{29} + a_{30} - 1,\\
  a_{14} - a_{15}, \\
  a_{13} + 2a_{15} - 1, \\
  a_{15}a_{29} - a_{15}a_{30} + a_{36} - a_{39}\}
\end{align*}
which result in four algebraically independent equations
($\mathrm{dim}(I)=4$ with Macaulay2), which are enough to solve for
the four numerical parameters $\{z_{2,3},z_{1,3},z_1,\gamma\}$.  Thus,
we are able to estimate these four parameters with this 4-cycle
sampling scheme.

\noindent \textbf{Good diamond II:} $n_0=1,n_1=1,n_2=2,n_3=1$.  Here,
we sample two individuals from the subgraph $n_2$ in Figure
\ref{k4nettree}, and only one from the remaining subgraphs.

We have 12 CF polynomials in the ideal on 16 variables:

$\{a_4,a_5,a_6,a_{19},a_{20},a_{21},a_{25},a_{26},a_{27},a_{28},a_{29},a_{30}\}$
and $\{z_{2,3},z_{1,3},z_2,\gamma\}$.

\begin{align*}
    J=\{
    1-(2/3)z_{2,3}z_2-a_4,\\
     (1/3)z_{2,3}z_2-a_5,\\
     (1/3)z_{2,3}z_2-a_6,\\
     (1-\gamma)(1-(2/3)z_{2,3}z_2)+\gamma(1-(2/3)z_2)-a_{19},\\
     (1-\gamma)(1/3)z_{2,3}z_2+\gamma(1/3)z_2-a_{20},\\
     (1-\gamma)(1/3)z_{2,3}z_2+\gamma(1/3)z_2-a_{21},\\
     (1-\gamma)(1-(2/3)z_{1,3}z_{2,3}z_2)+\gamma(1-(2/3)z_2)-a_{25},\\
     (1-\gamma)(1/3)z_{1,3}z_{2,3}z_2+\gamma(1/3)z_2-a_{26},\\
     (1-\gamma)(1/3)z_{1,3}z_{2,3}z_2+\gamma(1/3)z_2-a_{27},\\
     (1-\gamma)(1-(2/3)z_{1,3})+\gamma(1/3)z_{2,3}-a_{28},\\
     (1-\gamma)(1/3)z_{1,3}+\gamma(1-(2/3)z_{2,3})-a_{29},\\
     (1-\gamma)(1/3)z_{1,3}+\gamma(1/3)z_{2,3}-a_{30} \}
  \end{align*}

  In this case, we have a set of 8 phylogenetic invariants:
  \begin{align*}
  I=\{  a_{28} + a_{29} + a_{30} - 1, \\
    a_{26} - a_{27}, \\
    a_{25} + 2a_{27} - 1, \\
    a_{20} - a_{21}, \\
    a_{19} + 2a_{21} - 1, \\
    a_{5} - a_6, \\
    a_{4} + 2a_6 - 1, \\
    a_{6}a_{29} + 2a_6a_{30} - a_6 + a_{21} - a_{27} \}
\end{align*}
which result in four algebraically independent equations
($\mathrm{dim}(I)=4$), which are enough to solve for the four
numerical parameters $\{z_{2,3},z_{1,3},z_2,\gamma\}$. Since we
already estimated $\{z_{2,3},z_{1,3},\gamma\}$ from the Good Diamond
I, we only need to estimate $z_2$ here.

\noindent \textbf{Bad diamond I:} $n_0=1,n_1=1,n_2=1,n_3=2$.  Here, we
sample two individuals from the subgraph $n_3$ in Figure
\ref{k4nettree}, and only one from the remaining subgraphs.

We have 12 CF polynomials in the ideal on 16 variables:

$\{a_7,a_8,a_9,a_{22},a_{23},a_{24},a_{28},a_{29},a_{30},a_{31},a_{32},a_{33}\}$
and $\{z_{2,3},z_{1,3},z_3,\gamma\}$.

\begin{align*}
J=\{  1-(2/3)z_3-a_7,\\
     (1/3)z_3-a_8,\\
     (1/3)z_3-a_9,\\
     (1-\gamma)(1-(2/3)z_3)+\gamma(1-(2/3)z_{2,3}z_3)-a_{22},\\
     (1-\gamma)(1/3)z_3+\gamma(1/3)z_{2,3}z_3-a_{23},\\
     (1-\gamma)(1/3)z_3+\gamma(1/3)z_{2,3}z_3-a_{24},\\
     (1-\gamma)(1-(2/3)z_{1,3})+\gamma(1/3)z_{2,3}-a_{28},\\
     (1-\gamma)(1/3)z_{1,3}+\gamma(1-(2/3)z_{2,3})-a_{29},\\
     (1-\gamma)(1/3)z_{1,3}+\gamma(1/3)z_{2,3}-a_{30},\\
     (1-\gamma)(1-(2/3)z_{1,3}z_3)+\gamma(1-(2/3)z_3)-a_{31},\\
     (1-\gamma)(1/3)z_{1,3}z_3+\gamma(1/3)z_3-a_{32},\\
     (1-\gamma)(1/3)z_{1,3}z_3+\gamma(1/3)z_3-a_{33} \}
  \end{align*}

  In this case, we have a set of 10 phylogenetic invariants:
\begin{align*}
 I-\{ a_{32} - a_{33}, \\
  a_{31} + 2a_{33} - 1, \\
  a_{28} + a_{29} + a_{30} - 1, \\
  a_{23} - a_{24}, \\
  a_{22} + 2a_{24} - 1, \\
  a_{8} - a_{9}, \\
  a_{7} + 2a_{9} - 1, \\
  3a_9a_{30} + a_9 - a_{24} - a_{33}, \\
  a_{24}a_{29} + 2a_{24}a_{30} + a_{29}a_{33} - a_{30}a_{33} - a_{33}, \\
  3a_{9}a_{29} - 2a_9 + 2a_{24} - a_{33}\}
  \end{align*}
  which result in only three algebraically independent equations
  ($\mathrm{dim}(I)=3$), not enough to solve for the four numerical
  parameters $\{z_{2,3},z_{1,3},z_3,\gamma\}$.  However, given that we
  already estimated $\{z_{2,3},z_{1,3}, \gamma\}$ from Good Diamond I,
  we only need to estimate $z_{3}$.

  \noindent \textbf{Bad diamond II:} $n_0=2,n_1=1,n_2=1,n_3=1$.  Here,
  we sample two individuals from the subgraph $n_0$ in Figure
  \ref{k4nettree}, and only one from the remaining subgraphs.

  We have 12 CF polynomials in the ideal on 18 variables:

  $\{a_{28},a_{29},a_{30},a_{43},a_{44},a_{45},a_{49},a_{50},a_{51},a_{52},a_{53},a_{54}\}$
  and $\{z_{2,3},z_{1,3},z_{0,1},z_{0,2},z_0,\gamma\}$.

  \begin{align*}
J=\{  (1-\gamma)(1-(2/3)z_{1,3})+\gamma(1/3)z_{2,3}-a_{28},\\
     (1-\gamma)(1/3)z_{1,3}+\gamma(1-(2/3)z_{2,3})-a_{29},\\
     (1-\gamma)(1/3)z_{1,3}+\gamma(1/3)z_{2,3}-a_{30},\\
     (1-\gamma)^2(1-(2/3)z_0z_{1,3}z_{0,1})+2\gamma(1-\gamma)(1-z_0+(1/3)z_0z_{2,3})+\gamma^2(1-(2/3)z_0z_{0,2})-a_{43},\\
     (1-\gamma)^2(1/3)z_0z_{1,3}z_{0,1}+\gamma(1-\gamma)z_0(1-(1/3)z_{2,3})+\gamma^2(1/3)z_0z_{0,2}-a_{44},\\
     (1-\gamma)^2(1/3)z_0z_{1,3}z_{0,1}+\gamma(1-\gamma)z_0(1-(1/3)z_{2,3})+\gamma^2(1/3)z_0z_{0,2}-a_{45},\\
     (1-\gamma)^2(1-(2/3)z_0z_{0,1})+2\gamma(1-\gamma)(1-z_0+(1/3)z_0z_{2,3}z_{1,3})+\gamma^2(1-(2/3)z_0z_{0,2})-a_{49},\\
     (1-\gamma)^2(1/3)z_0z_{0,1}+\gamma(1-\gamma)z_0(1-(1/3)z_{2,3}z_{1,3})+\gamma^2(1/3)z_0z_{0,2}-a_{50},\\
     (1-\gamma)^2(1/3)z_0z_{0,1}+\gamma(1-\gamma)z_0(1-(1/3)z_{2,3}z_{1,3})+\gamma^2(1/3)z_0z_{0,2}-a_{51},\\
     (1-\gamma)^2(1-(2/3)z_0z_{0,1})+2\gamma(1-\gamma)(1-z_0+(1/3)z_0z_{1,3})+\gamma^2(1-(2/3)z_0z_{0,2}z_{2,3})-a_{52},\\
     (1-\gamma)^2(1/3)z_0z_{0,1}+\gamma(1-\gamma)z_0(1-(1/3)z_{1,3})+\gamma^2(1/3)z_0z_{0,2}z_{2,3}-a_{53},\\
     (1-\gamma)^2(1/3)z_0z_{0,1}+\gamma(1-\gamma)z_0(1-(1/3)z_{1,3})+\gamma^2(1/3)z_0z_{0,2}z_{2,3}-a_{54}\}
  \end{align*}

  In this case, we have a set of 7 phylogenetic invariants:
  \begin{align*}
 I\{   a_{53} - a_{54}, \\
    a_{52} + 2a_{54} - 1, \\
    a_{50} - a_{51}, \\
    a_{49} + 2a_{51} - 1, \\
    a_{44} - a_{45}, \\
    a_{43} + 2a_{45} - 1, \\
    a_{28} + a_{29} + a_{30} - 1 \}
  \end{align*}
  which result in only five algebraically independent equations
  ($\mathrm{dim}(I)=5$), not enough to solve for the six numerical
  parameters $\{z_{2,3}, z_{1,3}, z_{0,1},z_{0,2},z_0,\gamma\}$.
  However, given that we already estimated
  $\{z_{2,3},z_{1,3}, \gamma\}$ from Good Diamond I, we only need to
  estimate $z_{0,1},z_{0,2},z_{0}$.

  Thus, we can estimate the numerical parameters from different
  sub-sampling in the 4-cycle hybridization:
\begin{itemize}
\item Good diamond I: $z_{1,3},z_{2,3},z_1,\gamma$
\item Good diamond II: $z_2$
\item Bad diamond I: $z_3$
\item Bad diamond II: $z_{0,1},z_{0,2},z_0$
\end{itemize}

\end{proof}

\begin{shaded}
  We conclude that with the exception of 3-cycle hybridizations and
  bad diamonds, all numerical parameters associated with any level-1
  network are \textbf{finitely identifiable}.
\end{shaded}
  
\end{proof}

\subsection{Auxiliary lemmas for the proofs of theorems}
\label{auxsec}

The lemmas below are used in the proofs of the topology
identifiability and numerical parameter finite identifiability
theorems.  With Lemma \ref{nk}, we show that we only need to sample
two individuals for each subgraph defined by the hybridization cycle
(Figure \ref{kcycle}) in order to have enough polynomial equations for
all the numerical parameters in the cycle.  With Lemma \ref{type1}, we
show that some hybridizations on 4-taxon subnetworks (quarnets)
produce CFs that can be matched with a properly chosen internal branch
on a quartet tree. That is, these quarnets are not distinguishable
from a quartet tree with the CFs.  With Lemma \ref{h1}, we show that
with the level-1 assumption, we can focus on the detectability of each
hybridization at a time, without considering other hybridizations in
the network.  With Lemma \ref{k5}, we show that proving detectability
of a cycle of $k=5$ nodes is enough to prove detectability of any
cycle of $k \geq 5$ nodes.  Finally, with Lemma \ref{k5par}, we show
that we can prove parameter identifiability of any hybridization cycle
of $k \geq 4$ nodes by showing parameter identifiability of
hybridization cycles of $k=4$ nodes.

\begin{myLemma}
  Let $\mathcal{N}$ be a $n$-taxon explicit level-1 semi-directed
  phylogenetic network with $h$ hybridizations. Let
  $CF(\mathcal{N}, \bs{t}, \bs{\gamma})$ be the system of CF equations
  defined by the coalescent model (Definition \ref{cfeqs}).  Let $h_i$
  be the $i^{th}$ hybrid node in $\mathcal{N}$, which defines a
  $k_i$-cycle with $k_i$ subgraphs attached, each with $n_k$ taxa for
  $k=1,\dots,k_i$ (see Figure \ref{kcycle}).  Let
  $\theta_i = \{\bs{t}_{h_i},\bs{\gamma}_{h_i}\}$ be the subset of
  numerical parameters associated with this hybridization cycle. Let
  $CF_{h_i}(\mathcal{N}) = CF(\mathcal{N},
  \bs{t}_{h_i},\bs{\gamma}_{h_i})$ be the subset of CF equations that
  involve $\theta_i$.

  Then, we only need to sample $n_k=2$ from each subgraph to define
  all the CF equations in $CF_{h_i}(\mathcal{N})$.
  \label{nk}
\end{myLemma}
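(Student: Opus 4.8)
The plan is to exploit that every CF equation is a statement about a single $4$-taxon subset (a quartet), so the whole system $CF(\mathcal{N})$ is indexed by the $4$-taxon subsets together with their three possible resolutions. First I would record which parameters make up $\theta_i$: the $k_i$ base edges $z_k$ that attach each subgraph to the cycle, the edges lying on the cycle itself, and the inheritance probability $\gamma$. A quartet contributes an equation to $CF_{h_i}(\mathcal{N})$ --- i.e. an equation actually involving $\theta_i$ --- precisely when its four taxa are distributed among the subgraphs so that the coalescent histories must traverse the cycle; a quartet with all four taxa in one subgraph never reaches the cycle and hence carries no information about $\theta_i$.

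Next I would analyze the role of the base edge of a fixed subgraph $k$. Tracing lineages back in time, $z_k = e^{-t_k}$ enters a quartet CF only through a possible coalescence on that edge, and such a coalescence can occur only if the quartet places at least two taxa in subgraph $k$: with a single taxon exactly one lineage sits on the edge, no coalescence is possible, and $z_k$ cancels out. This explains why two (rather than one) taxon per subgraph is the relevant threshold, and it tells me how to build the sufficient sampling. For sufficiency I would take, in each subgraph, two taxa forming a cherry whose internal edge is exactly the base edge $z_k$; then the only subgraph-internal parameter appearing in any quartet CF is $z_k$ itself, so every such quartet CF is a polynomial purely in $\theta_i$ and the CF values.

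I would then check that distributing the four taxa of a quartet over the subgraphs with at most two per subgraph realizes every pattern --- $(1,1,1,1)$, $(2,1,1)$, $(2,2)$, $(2,1)$, $(1,1)$ --- in which at least two distinct subgraphs feed the cycle, and that these quartets collectively exhibit every cycle edge, every base edge $z_k$, and $\gamma$. Together with Lemma \ref{h1}, which lets me treat each subgraph as a subtree and study this hybridization in isolation, this shows that two taxa per subgraph already generate the full list of $\theta_i$-equations used in the detectability and identifiability theorems.

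The main obstacle is to argue that sampling three or more taxa from a subgraph produces no genuinely new constraint on $\theta_i$. The difficulty is incomplete lineage sorting: with three taxa in one subgraph, as many as three lineages can enter the cycle and be routed independently by the hybrid node, so a priori such a quartet might constrain $\theta_i$ in a new way. I would resolve this by noting that any such quartet CF couples $\theta_i$ with the additional internal coalescent parameters of that subgraph, which are not part of $\theta_i$, and that, because the network is level-$1$ (a single cycle routing each incoming lineage through the hybrid node with the same $\gamma$), the dependence on $\theta_i$ factors through the same base-edge and cycle-traversal quantities already exhibited by the two-taxon cherries. Consequently no relation among the $\theta_i$-parameters arises beyond those from the $n_k \le 2$ configurations. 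Establishing this redundancy of the $n_k \ge 3$ quartets is the crux; once it is in place the lemma follows.
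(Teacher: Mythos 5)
You correctly isolate the crux of Lemma \ref{nk} --- showing that 4-taxon subsets drawing three or more taxa from a single subgraph contribute nothing --- but you leave exactly that step open, and the mechanism you sketch for closing it is not the right one. You posit that such a quartet's CFs \emph{do} involve $\theta_i$, merely ``factoring through'' the same base-edge and cycle-traversal quantities already exhibited by two-taxon cherries, and you concede that establishing this redundancy is what remains. Note that even if made rigorous, this framing would prove a different statement from the lemma: Lemma \ref{nk} asserts that every CF equation \emph{involving} $\theta_i$ is already produced by subsets with at most two taxa per subgraph, not that the leftover equations are functionally redundant given the others; your version would require an additional functional-dependence analysis that the sketch does not supply.

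The paper closes the gap with a stronger and simpler fact that dissolves your ILS worry: a quartet with three taxa in one subgraph and one taxon elsewhere involves \emph{no} parameter of $\theta_i$ at all. The lone outside taxon contributes a single lineage, so the entire path connecting it to the other three --- which is precisely where the base edges, the cycle edges and $\gamma$ would enter --- is an \emph{external} (pendant) edge of the induced quartet, and under the coalescent only internal edges appear in quartet CF formulas. This is the same principle you already invoked to drop the base edge $z_k$ of a singly-sampled subgraph, applied one step further. Equivalently: although up to three lineages may enter the cycle and be routed independently by the hybrid node, once the three co-subgraph lineages pass their subgraph's root they are exchangeable, so any first coalescence among them on a cycle edge assigns probability $1/3$ to each of the three resolutions, and all dependence on $\gamma$ and the cycle branch lengths cancels; the CFs then depend only on subgraph-internal branch lengths, which lie outside $\theta_i$ (compare the equations in Table \ref{k3table}, where a base edge appears only when two taxa are sampled from its subgraph). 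Hence the three-plus-one and four-in-one quartets carry no $\theta_i$-equation whatsoever, the $\theta_i$-equations are exactly those from subsets with at most two taxa per subgraph, and sampling $n_k=2$ per subgraph suffices. With this step supplied, the rest of your structure (cherries realizing the base edges, enumeration of the at-most-two-per-subgraph patterns, and the appeal to Lemma \ref{h1}) matches the paper's proof, which carries the argument out for the 3-cycle without loss of generality.
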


\begin{proof}
  Without loss of generality, we choose a hybridization cycle with 3
  nodes ($k_i=3$). See Figure \ref{kcycle}. This 3-cycle defines 3
  subgraphs with number of taxa $n_0,n_1,n_2$ respectively. The
  complete system of CF equations of the network is produced by taking
  every possible 4-taxon subset $\{a,b,c,d\}$ and write its
  corresponding three CF equations: $CF_{ab|cd},CF_{ac|bd},CF_{ad|bc}$
  as described in Section \ref{modelSec}. However, we are interested
  only in the CF equations that include the parameters related to the
  $i^{th}$ hybridization event:
  $\theta_i=\{z^{(i)}_1, z^{(i)}_{1,2}, z^{(i)}_2,
  z^{(i)}_{0,1},z^{(i)}_{0,2},z^{(i)}_0, \gamma^{(i)}\}$.

  Note that if we take 4 taxa from the same subgraph (say the one
  labeled $n_0$), then none of the CF equations will involve any
  parameters in $\theta_i$. Similarly, if we take 3 taxa from the same
  subgraph (say $n_0$), and one taxon from a different subgraph (say
  $n_1$), then the edge connecting the one taxon in $n_1$ to the
  others will indeed contain parameters in $\theta_i$
  ($z^{(i)}_0, z^{(i)}_{0,1}, z^{(i)}_1$). However, this edge will be
  an external edge, and as we know from the coalescent model, only
  internal edges play a role in the CF formulas. Thus, no parameters
  in $\theta_i$ will actually appear in the CF formulas when taking
  all 4 taxa from the same subgraph, or when taking 3 taxa from one
  subgraph and one from another.  This means that only the CF
  equations from 4-taxon subsets in which at most 2 taxa belong to the
  same subgraph include the parameters in $\theta_i$.  Thus, the
  subset of CF equations corresponding to the hybridization event
  ($CF_{h_i}(\mathcal{N})$) can be fully defined by having
  $n_0=n_1=n_2=2$.
\end{proof}

\begin{myLemma}
  Let $\mathcal{Q}$ be a 4-taxon explicit level-1 semi-directed
  network with one hybridization event. Assume that $\mathcal{Q}$ is
  of type 1, 2, 4 or 5 in Figure \ref{5quartets}. Let
  $CF(\mathcal{Q},\bs{t},\gamma)$ be the set of three CF equations for
  $\mathcal{Q}$. Let $\mathcal{T}_Q$ be the major quartet obtained
  by removing the minor hybrid edge ($\gamma<0.5$) in
  $\mathcal{Q}$, and let $CF(\mathcal{T}_Q,\bs{\tilde{t}})$ be the set of
  three CF equations for $\mathcal{T}_Q$.

  Then, $CF(\mathcal{Q},\bs{t},\gamma) = CF(\mathcal{T}_Q,\tilde{t})$
  for a properly chosen branch length $\tilde{t}$ in $\mathcal{T}_Q$.
  \label{type1}
\end{myLemma}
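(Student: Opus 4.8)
The plan is to prove Lemma~\ref{type1} by direct computation of the three concordance factors on each of the four relevant quarnet types under the coalescent model, and then exhibiting an explicit branch length $\tilde{t}$ on the major quartet tree $\mathcal{T}_Q$ that reproduces them. The key structural observation, which I would establish first, is that any 4-taxon tree has exactly one internal edge, and its three CFs are completely determined by a single free parameter, namely the length of that internal edge: the major CF equals $1-\tfrac{2}{3}\exp(-\tilde{t})$ and the two minor CFs each equal $\tfrac{1}{3}\exp(-\tilde{t})$. Thus on the tree side there is a one-dimensional family of attainable CF triples, parametrized by $\tilde{z}=\exp(-\tilde{t})\in(0,1)$, and every such triple automatically satisfies the two tree invariants: the three CFs sum to $1$, and the two minor CFs are equal. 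My goal therefore reduces to showing that for types 1, 2, 4, 5 the quarnet CFs also satisfy these same two invariants, so that the quarnet CF triple lands exactly on the tree curve and $\tilde{z}$ can be read off from the major CF.

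Concretely, I would proceed as follows. First, for each quarnet type I would write the CF as a $\gamma$-weighted mixture over the two displayed trees obtained by following the major hybrid edge (weight $1-\gamma$) or the minor hybrid edge (weight $\gamma$), as is done throughout the paper's tables. Because the quarnet is level-1 with a single reticulation on four taxa, removing either hybrid edge yields a quartet tree, and each displayed tree contributes its own CF triple of the $(1-\tfrac{2}{3}z,\ \tfrac{1}{3}z,\ \tfrac{1}{3}z)$ form against a common unrooted topology. Second, I would verify that for the four admissible types the two displayed trees share the same unrooted topology, so the mixture keeps the major/minor labeling consistent across both components; this is exactly the property that fails for the excluded type (type~3), where the two displayed trees disagree on which split is major. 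Granting that the topologies agree, the mixture is again of the form $(1-\tfrac{2}{3}\,\bar z,\ \tfrac{1}{3}\,\bar z,\ \tfrac{1}{3}\,\bar z)$ for an effective $\bar z$ that is a convex/polynomial combination of the two branch-length monomials, so the two invariants hold identically.

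Having reduced to this form, the final step is immediate: set $\tilde{z}=\bar z$, equivalently choose $\tilde{t}=-\log\bar z$, and note that $\bar z\in(0,1)$ under assumption (A1) so $\tilde{t}\in(0,\infty)$ is a legitimate branch length; then $CF(\mathcal{T}_Q,\tilde{t})$ matches $CF(\mathcal{Q},\bs{t},\gamma)$ in all three coordinates. I would present the explicit $\tilde z$ for at least one representative type (e.g. type~1) to make the construction concrete, and remark that the remaining types are handled by the same mixture computation with the roles of the branches permuted, consistent with the Mathematica-verified formulas already used in the $k$-cycle proofs.

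The main obstacle I anticipate is the second step: verifying that the two displayed trees of the quarnet induce the \emph{same} unrooted quartet topology for precisely types 1, 2, 4, 5 and not for type~3. This is the crux of why the lemma excludes type~3, and it is a case-by-case topological check rather than a single slick argument. I would organize it by tracking where the hybrid node sits relative to the four-taxon split induced by the cycle: when both parents of the hybrid lie on the same side of the resolving internal edge, the major and minor displayed trees coincide in topology and only their internal branch lengths differ, giving the clean mixture above; when they straddle the split (type~3) the two displayed trees realize different quartet topologies, the mixture is genuinely three-dimensional, the invariant $a_2=a_3$ breaks, and no single $\tilde t$ can absorb it. Making this dichotomy precise from the level-1 semi-directed structure of each type in Figure~\ref{5quartets} is where the real work lies.
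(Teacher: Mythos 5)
Your overall route --- check that the quarnet CF triple satisfies the tree invariants (coordinates summing to one, equal minor CFs, major split agreeing with that of $\mathcal{T}_Q$) and then solve $\frac{1}{3}\tilde z = \text{minor CF}$ for $\tilde z = \exp(-\tilde t)$ --- is in substance the same direct verification the paper performs; the paper simply records the resulting expression for $\exp(-\tilde t_1)$ explicitly for types 1, 2 and 4 and dismisses type 5 as trivial. The genuine gap is in your second step. Under the multispecies coalescent on networks, the CFs are a two-component mixture over the displayed trees with weights $(1-\gamma,\gamma)$ only when a \emph{single} lineage enters the hybrid node, which is the situation in types 2, 3 and 5. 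In types 1 and 4, two lineages enter the hybrid node and choose parents independently, so the CFs are quadratic in $\gamma$: a three-scenario mixture with weights $(1-\gamma)^2$, $2\gamma(1-\gamma)$, $\gamma^2$, as the formulas in Figure \ref{5quartets} show. For type 4 all three scenarios contribute triples of the form $(1-\frac{2}{3}u,\ \frac{1}{3}u,\ \frac{1}{3}u)$ with the same major split, so your argument survives with minor repair; but for type 1 the cross term $2\gamma(1-\gamma)(1-z_1+\frac{1}{3}z_1z_3)$ is not the CF of any displayed tree, and the equality of the two minor resolutions in that scenario --- each picking up $\gamma(1-\gamma)z_1(1-\frac{1}{3}z_3)$, i.e.\ an exchangeable triple with effective $u=z_1(\frac{3}{2}-\frac{1}{2}z_3)$, which can exceed $1$ --- is precisely the extra fact your two-displayed-trees framing silently assumes. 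The ``same unrooted topology of the two displayed trees'' dichotomy is a fine heuristic for why type 3 is excluded (there the minors are generically unequal, breaking the invariant), but it is not by itself the computation that proves the lemma for types 1 and 4.

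Separately, your closing claim that $\bar z\in(0,1)$ under (A1), so that $\tilde t\in(0,\infty)$ is a legitimate branch length, is false for type 1. Taking $t_1,t_2,t_4$ small and $t_3$ large --- all strictly positive and finite, so (A1) holds --- the type-1 minor CF approaches $\frac{1}{3}+\frac{1}{3}\gamma(1-\gamma)>\frac{1}{3}$, whereas any quartet tree has minor CFs equal to $\frac{1}{3}e^{-\tilde t}\le\frac{1}{3}$. On that open region of parameter space the matching $\tilde t$ is necessarily negative; correspondingly, the paper's own expression $\exp(-\tilde t_1)=1+\gamma(1-\gamma)z_3-\gamma^2z_4-(1-\gamma)^2z_2$ exceeds $1$ on an open set of (A1)-admissible parameters. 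The lemma must therefore be read as a \emph{formal} matching of CF systems, with $\tilde z$ allowed outside $(0,1)$ --- which is all that its application in Lemma \ref{h1} requires --- so you should either drop the positivity claim or restate the conclusion as a formal identity of CF expressions.
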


\begin{proof}
If $\mathcal{Q}$ is a type 5 network, the proof is trivial as the
CF match directly those from a quartet tree.

For type 1, 2 and 4, below we see the formulas to choose the internal
branch in a quartet tree ($\tilde{t}_1$ in the Figures) to match the
CFs of these quarnets. As in the remainder of the paper, here
$z_i=\exp(-t_i)$.

\noindent\begin{minipage}{0.35\textwidth}
\centering
\includegraphics[scale=0.1]{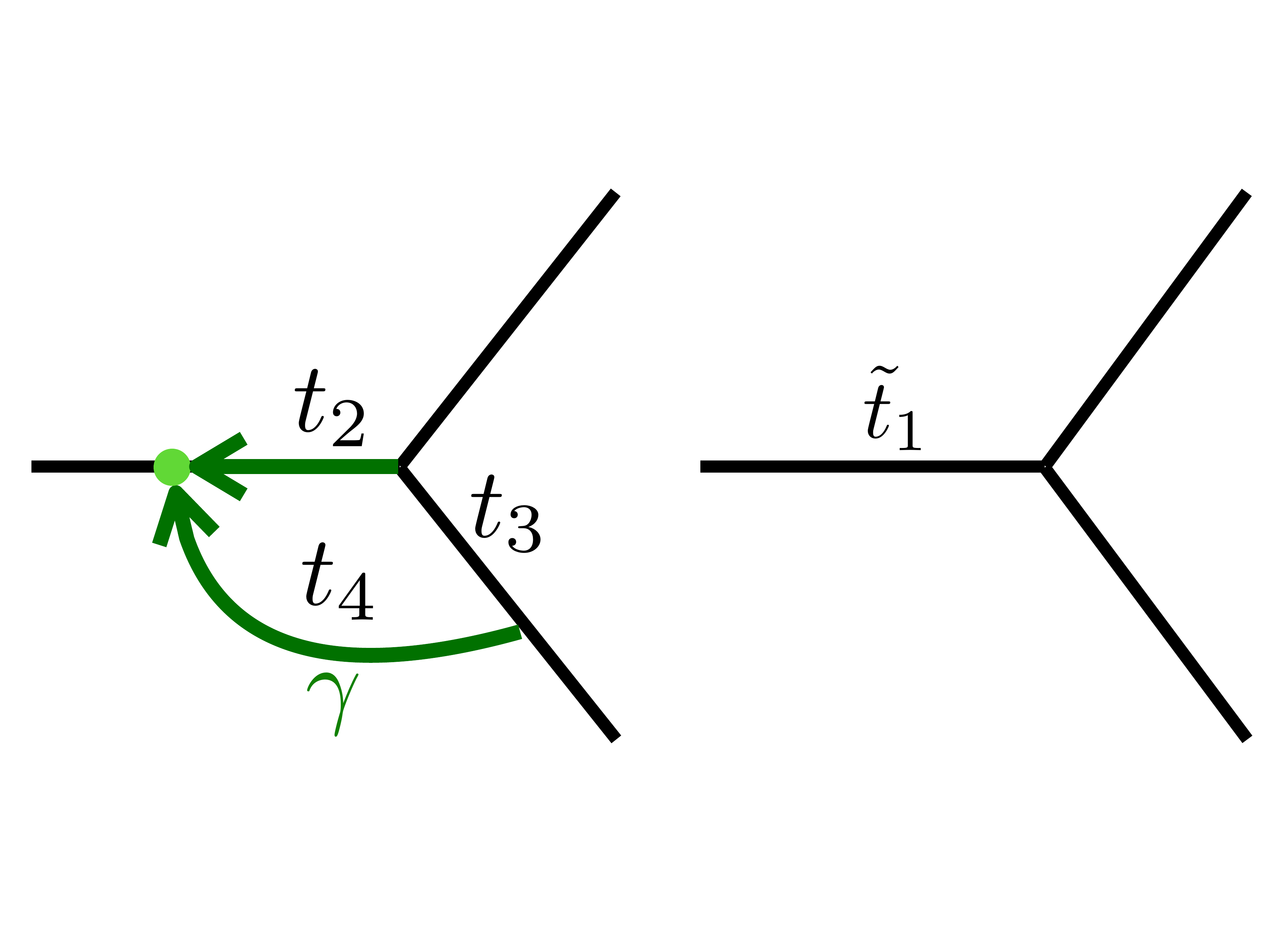}
\\Type 1
\end{minipage}
\begin{minipage}{0.55\textwidth}
\centering{\footnotesize
$\exp(-\tilde{t}_1):=1+\gamma z_3-\gamma^2 z_4-\gamma^2 z_3-(1-\gamma)^2 z_2$}
\end{minipage}

\noindent\begin{minipage}{0.35\textwidth}
\centering
\includegraphics[scale=0.1]{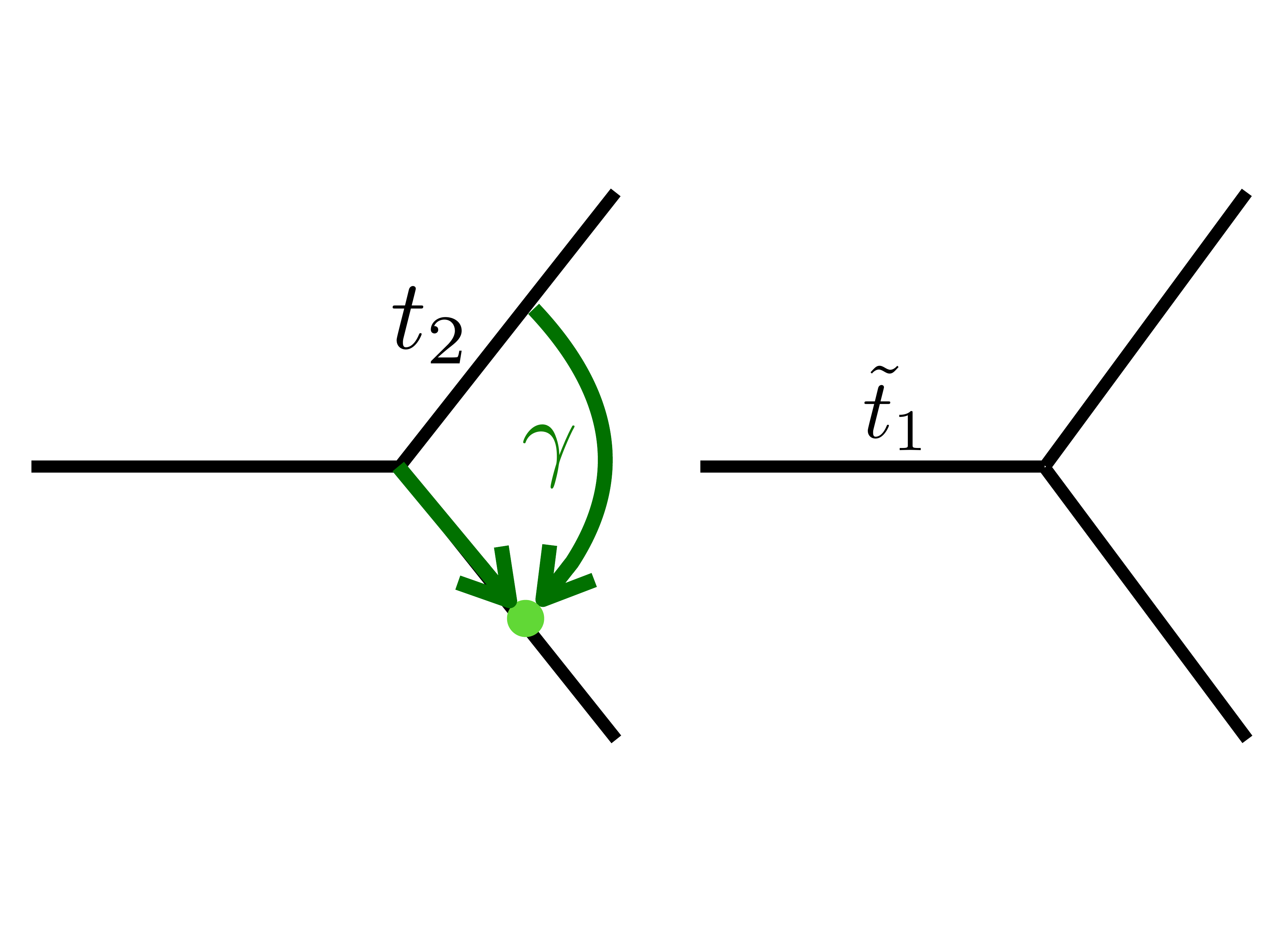}
\\Type 2
\end{minipage}
\begin{minipage}{0.55\textwidth}
\centering{\footnotesize
$\exp(-\tilde{t}_1):=1-\gamma z_2$}
\end{minipage}

\noindent\begin{minipage}{0.35\textwidth}
\centering
\includegraphics[scale=0.1]{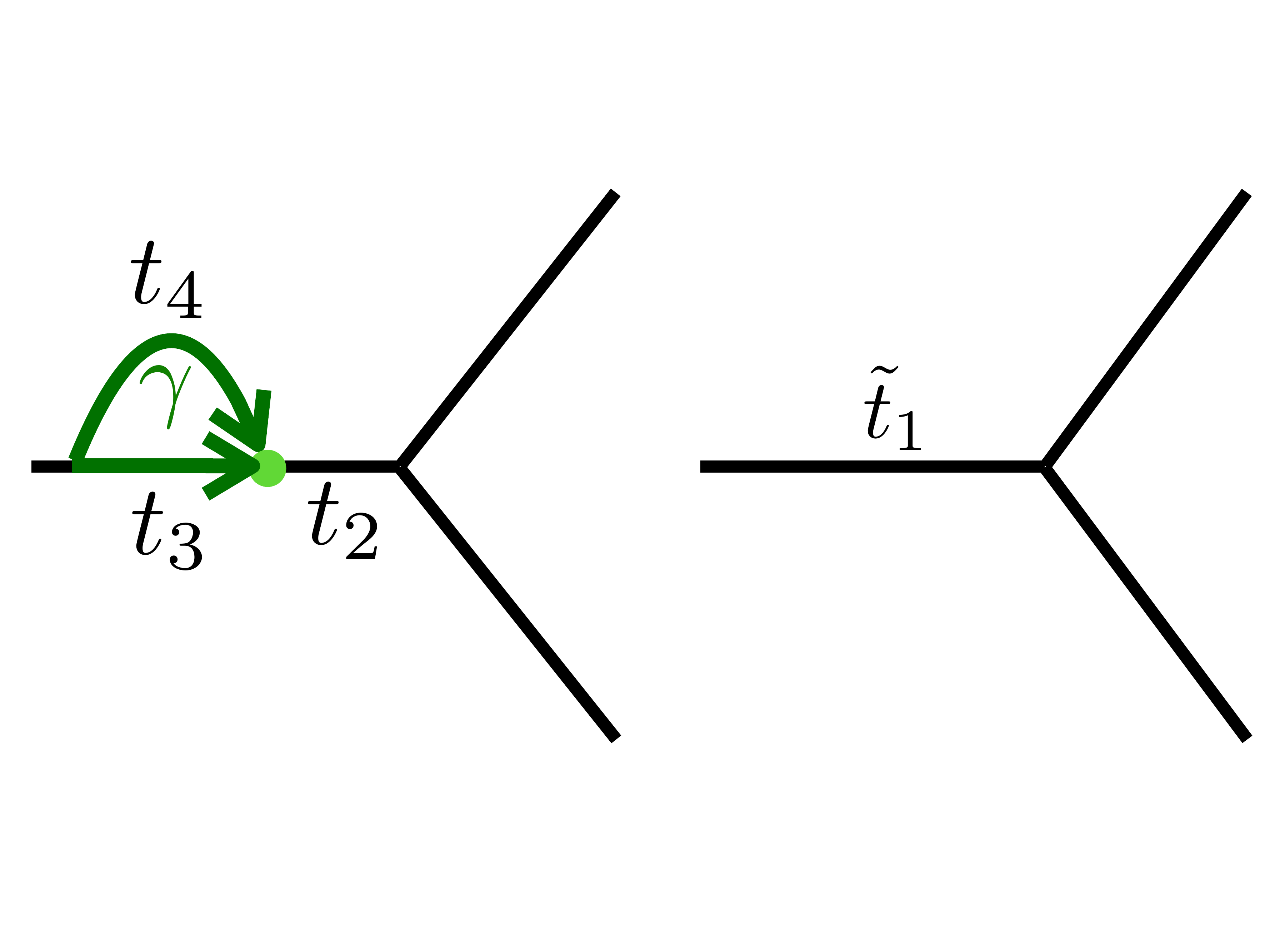}
\\Type 4
\end{minipage}
\begin{minipage}{0.55\textwidth}
\centering{\footnotesize
$\exp(-\tilde{t}_1):=\exp(-t_2)[1-\gamma^2 z_4 -(1-\gamma)^2 z_3]$}
\end{minipage}
  
\end{proof}

\begin{myLemma}
  Let $\mathcal{N}$ be a $n$-taxon explicit level-1 semi-directed
  phylogenetic network with $h$ hybridizations. Let
  $CF(\mathcal{N},\bs{t},\bs{\gamma})$ be the system of CF equations
  defined by the coalescent model (Definition \ref{cfeqs}).  Let $h_i$
  be the $i^{th}$ hybrid node of interest, which defines a
  $k_i$-cycle, and let $\theta_i = \{\bs{t}_{h_i},\bs{\gamma}_{h_i}\}$
  be the subset of numerical parameters associated with this
  hybridization cycle.  Define
  $CF_{h_i}(\mathcal{N})=CF(\mathcal{N},\bs{t}_{h_i},\bs{\gamma}_{h_i})$
  as the subset of CF equations that involve $\theta_i$.  Let
  $\mathcal{N}_{h_i}$ be a network with only $h=1$ hybridization, the
  hybridization of interest in $\mathcal{N}$, and define
  $CF_{h_i}(\mathcal{N}_{h_i})=CF(\mathcal{N}_{h_i},\bs{t}_{h_i},\bs{\gamma}_{h_i})$
  as the CF equations concerning the one hybridization in
  $\mathcal{N}_{h_i}$.

  Then, $CF_{h_i}(\mathcal{N}_{h_i})=CF_{h_i}(\mathcal{N})$. That is,
  the CF equations corresponding to the $i^{th}$ hybridization on the
  whole network $\mathcal{N}$ are the same as the CF equations from
  $\mathcal{N}_{h_i}$ (network with only one hybridization).
  \label{h1}
\end{myLemma}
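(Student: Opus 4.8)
The plan is to exploit the defining feature of level-1 networks — that every edge lies in at most one cycle, so the blobs (hybridization cycles) are edge-disjoint and are joined to one another only through cut vertices and cut edges — together with the fact that the CF of a $4$-taxon subset depends only on the quarnet it induces. Concretely, I would first record the standard reduction: for any $4$-taxon subset $Q$, the CF equation of $\mathcal{N}$ on $Q$ coincides with the CF equation of the subnetwork $\mathcal{N}|_Q$ obtained by restricting $\mathcal{N}$ to $Q$ and suppressing degree-$2$ nodes, since under the coalescent the gene-tree quartet distribution is determined only by the internal edges and reticulations lying on the paths among the four chosen taxa (pendant edges and unsampled hybrid nodes marginalize out). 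Thus it suffices to control which cycles of $\mathcal{N}$ survive in $\mathcal{N}|_Q$, and with which parameters.

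Next I would pass to the tree of blobs of $\mathcal{N}$: contracting each cycle to a node yields a tree whose internal nodes are the blobs and cut vertices and whose leaves are the taxa. A cycle $C_j$ contributes its parameters $\theta_j$ to the CF of $Q$ only if it survives as a nontrivial reticulation in $\mathcal{N}|_Q$, which by the counting argument of Lemma \ref{nk} requires a sampled taxon descending from the hybrid node of $C_j$ together with sampled taxa witnessing both of its parental paths. The key structural step is to show that, because $C_i$ is separated from every other blob by a cut vertex, the portion of $\mathcal{N}|_Q$ that carries $\theta_i$ — the surviving cycle $C_i$ and the cut edges attaching the relevant subtrees — is inherited verbatim from $\mathcal{N}$ and is untouched by deleting the minor hybrid edges of the remaining hybridizations. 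Forming $\mathcal{N}_{h_i}$ only resolves the blobs sitting inside the subgraphs hanging off $C_i$ (and elsewhere) into trees, and by the cut-vertex separation this deletion commutes with restriction to $Q$ as far as the $\theta_i$-carrying part is concerned. Hence every CF equation that involves $\theta_i$ has the same $\theta_i$-dependence in $\mathcal{N}$ as in $\mathcal{N}_{h_i}$, giving $CF_{h_i}(\mathcal{N})=CF_{h_i}(\mathcal{N}_{h_i})$.

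I expect the main obstacle to be the case in which a quartet that activates $C_i$ simultaneously straddles a second cycle $C_j$ nested inside one of the subgraphs of $C_i$ (say two taxa $c,d$ drawn from that subgraph, one a hybrid descendant of $C_j$ and one witnessing the other parental path): here the CF of $Q$ in $\mathcal{N}$ also averages over the inheritance probability of $C_j$, whereas in $\mathcal{N}_{h_i}$ that averaging is absent. Handling this is exactly where the level-1 hypothesis is indispensable, and it is the technical heart of the proof. The plan is to argue, via the cut vertex joining $C_j$ to the rest of the network, that in the parental-tree expansion of the coalescent the choices made at $C_j$ factor through the single ancestral lineage of $\{c,d\}$ that enters that cut vertex; marginalizing over those choices (whose $\gamma$-weights sum to one) should affect only the branch-length bookkeeping interior to the subgraph and leave the $\theta_i$-dependence — the routing of that lineage around $C_i$ and the cycle-$i$ branch lengths — matching $\mathcal{N}_{h_i}$. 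Making this marginalization-and-factorization precise, i.e. verifying that the constraints these equations place on $\theta_i$ are identical to those from $\mathcal{N}_{h_i}$, is the delicate part; the remaining quartets, in which no other cycle survives in $\mathcal{N}|_Q$, are immediate because $\mathcal{N}|_Q$ is then literally a quarnet of $\mathcal{N}_{h_i}$.
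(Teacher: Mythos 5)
Your outline follows the same road map as the paper's proof: reduce each CF equation to the quarnet induced on a 4-taxon subset, use the level-1 hypothesis (your cut-vertex/tree-of-blobs separation) to confine every other hybridization cycle $C_j$ inside a single subgraph hanging off the $k_i$-cycle, and then absorb the confined cycle into effective branch lengths so that the $\theta_i$-dependence is untouched. The paper obtains your first reduction from Lemma \ref{nk} (4-taxon subsets with at most two taxa per subgraph suffice to define $CF_{h_i}$), which also replaces your survival/activation bookkeeping. One caution on your ``key structural step'': forming $\mathcal{N}_{h_i}$ by literally \emph{deleting} the minor hybrid edges of the other hybridizations would change the CFs (that is precisely what detectability means); the correct construction, and the one the paper uses, replaces each subgraph by a subtree with \emph{re-chosen} branch lengths, so the claimed equality is one of CF systems under a reparametrization of the subgraph-internal lengths, not edge-by-edge invariance.

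The genuine gap is exactly the step you flag and defer: the assertion that marginalizing over the routing at a confined cycle ``should affect only the branch-length bookkeeping'' is the entire mathematical content of the lemma, and it does not follow merely from the $\gamma$-weights summing to one. When two sampled lineages enter the confined blob they may coalesce inside it, so the resulting CF is a $\gamma$-weighted mixture of three \emph{distinct} quartet patterns, and one must verify that this mixture collapses back onto the one-parameter tree family (two equal minor CFs governed by a single effective internal edge). The paper proves this as a standalone statement, Lemma \ref{type1}: after the Lemma \ref{nk} reduction, a hybridization confined to one subgraph and seen by at most two of the four sampled taxa can only induce a quarnet of type 1, 2, 4 or 5 of Figure \ref{5quartets} --- type 3 is excluded because it would require the confined cycle to involve three of the four taxa, contradicting containment in a single subgraph --- and for each such type there is an explicit closed-form effective internal edge $\tilde{t}_1$ matching the quarnet CFs to quartet-tree CFs. (Your cut-vertex factoring is essentially the right reason for the type-3 exclusion, but you never draw that conclusion or compute the collapse.) A further subtlety both your sketch and a careful reading of Lemma \ref{type1} must absorb: for type 1 quarnets the required $\exp(-\tilde{t}_1)$ can exceed $1$ (e.g.\ $\gamma$ near $\tfrac12$, $t_2,t_4$ near $0$, $t_3$ large makes both minor CFs exceed $\tfrac13$), so the match is an identity of CF polynomial systems in the variables $w=\exp(-t')$ rather than a realization by a nonnegative branch length. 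With Lemma \ref{type1} supplied, your plan closes into the paper's proof; without it, it is a correct strategy but not yet a proof.
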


\begin{proof}
  By Lemma \ref{nk}, $CF_{h_i}(\mathcal{N})$ and
  $CF_{h_i}(\mathcal{N}_{h_i})$ are completely defined by 4-taxon
  subsets with at most 2 taxa per subgraph defined by the $k_i$-cycle.
  Let $s=\{a,b,c,d\}$ be one of those 4-taxon subsets. Without loss of
  generality, assume that $a,b$ belong to subgraph $n_k$ and $c,d$
  belong to subgraph $n_{k'}$. Since $\mathcal{N}$ has other
  hybridization events, let's assume that $a,b$ are linked by another
  hybridization inside the $n_k$ subgraph. By the level-1 assumption,
  the hybridization event involving $a,b$ needs to be contained inside
  the $n_k$ subgraph, and thus, it can only be of type 1, 2, 4 or 5 in
  Figure \ref{5quartets} (because type 3 hybridization involves three
  taxa, not only two).  By Lemma \ref{type1}, this hybridization has
  the same CF equations as a properly chosen tree, and thus, we can
  remove all hybridization events in the subgraphs of $\mathcal{N}$
  and replace the subgraphs by subtrees. We obtain $\mathcal{N}_{h_i}$
  from $\mathcal{N}$ in this way, and thus,
  $CF_{h_i}(\mathcal{N}_{h_i})=CF_{h_i}(\mathcal{N})$.
\end{proof}

\begin{myLemma}
  Let $\mathcal{N}$ be a $n$-taxon explicit level-1 semi-directed
  phylogenetic network with $h=1$ hybridization that defines a
  hybridization cycle of $k \geq 5$ nodes. Let
  $CF(\mathcal{N},\bs{t},\bs{\gamma})$ be the system of CF equations
  defined by the coalescent model (Definition \ref{cfeqs}).  Let
  $\mathcal{T}$ be the major tree obtained by removing the minor
  hybrid edge in $\mathcal{N}$.  Proving
  $CF(\mathcal{N},\bs{t},\bs{\gamma})=CF(\mathcal{T},\bs{t}')$
  (hybridization detectability) is equivalent to proving hybridization
  detectability on a 5-cycle network ($\mathcal{\widetilde{N}}$)
  constructed from $\mathcal{N}$ by removing the $k-5$ subgraphs
  connected to the hybridization cycle farthest from the hybrid node
  (compare Figure \ref{k5nettree} to \ref{k8nettree}).
  \label{k5}
\end{myLemma}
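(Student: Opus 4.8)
The plan is to reduce the $k$-cycle detectability question to the $5$-cycle one by a taxon-subsampling argument, resting on the fact that degree-two nodes can be suppressed without changing the coalescent. First I would invoke Lemma~\ref{h1} and Lemma~\ref{nk} to trim the data: detectability of the single hybridization is decided entirely by the CF equations coming from $4$-taxon subsets that place at most two taxa in any one of the $k$ subgraphs attached to the cycle, and each such subgraph may be treated as a subtree. In particular, a quartet can contribute an equation that genuinely depends on the minor hybrid edge only when it samples at least one taxon from the subgraph descending from the hybrid node, so the informative quartets are local to the hybrid node and its neighbours on the cycle.

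Next I would formalize the removal operation. Deleting a pendant subgraph from a cycle node leaves that node with degree two; suppressing it concatenates the two incident cycle edges, which in the coordinates $z_i=\exp(-t_i)$ simply multiplies the corresponding $z$-values. Since a product of numbers in $(0,1)$ again lies in $(0,1)$, assumption~A1 is preserved, and because the suppressed nodes are disjoint from the two hybrid edges, this operation commutes with deleting the minor hybrid edge. Consequently, if $\mathcal{\widetilde{N}}$ is the $5$-cycle obtained by removing the $k-5$ subgraphs farthest from the hybrid node (with major tree $\mathcal{\widetilde{T}}$), then the system $CF(\mathcal{\widetilde{N}},\bs{z},\bs{\gamma})=CF(\mathcal{\widetilde{T}},\bs{w})$ is exactly the subsystem of $CF(\mathcal{N},\bs{z},\bs{\gamma})=CF(\mathcal{T},\bs{w})$ obtained by keeping only the quartets drawn from the five retained subgraphs, with the retained cycle branches read off as the appropriate products of the original $z_i$.

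With this in place the needed implication is immediate and is the crux of the reduction: any feasible parameter assignment matching the CFs of $\mathcal{N}$ and $\mathcal{T}$ restricts, via the branch concatenations above, to a feasible assignment matching the CFs of $\mathcal{\widetilde{N}}$ and $\mathcal{\widetilde{T}}$. Taking contrapositives, detectability of the $5$-cycle $\mathcal{\widetilde{N}}$ forces detectability of the $k$-cycle $\mathcal{N}$, since a subsystem with no common feasible solution can only enlarge to a full system with no common feasible solution. This is precisely the direction used to conclude Theorem~\ref{topId} for $k\ge 5$ from the explicit $5$-cycle computation, and it is what the phrase ``equivalent'' in the statement is meant to deliver.

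The hard part, and the reason the threshold is $5$ rather than $4$, is justifying that the five subgraphs we retain yield a genuinely non-degenerate $5$-cycle, so that the $5$-cycle result actually applies (an implication with a false hypothesis would be useless). Retaining the five subgraphs nearest the hybrid node keeps the hybrid descendant together with neighbours on both sides of the cycle, so that when the internal cycle branches are driven to their ILS-free limits three subgraphs can still meet at the top node---the configuration ($n_2,n_3,n_4$ in Figure~\ref{k4path}) that never arises in a $4$-cycle and that must be present for the limiting matching conditions to collapse correctly onto the already-resolved $3$-cycle and $2$-cycle cases. I would therefore check that, under A1--A2, this canonical choice of retained subgraphs never degenerates the kept structure into a shorter cycle, comparing the resulting picture against Figures~\ref{k5nettree} and~\ref{k8nettree}, which completes the reduction.
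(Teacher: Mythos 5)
Your reduction is correct and lands on the same conclusion as the paper, but the mechanism is genuinely different, and it proves one implication of the stated equivalence rather than both. The paper fixes an $8$-cycle without loss of generality (Figure~\ref{k8nettree}) and argues by \emph{elimination}: it exhibits specific 4-taxon subsets far from the hybrid node whose CF equations are tree-type on both sides of the matching $CF(\mathcal{N},\bs{z},\bs{\gamma})=CF(\mathcal{T},\bs{w})$, forcing the equalities $z_4=w_4$, $z_6=w_6$, $z_7=w_7$, $z_5=w_5$, $z_3=w_3$, and then $z_{4,6}=w_{4,6}$, $z_{6,7}=w_{6,7}$, $z_{5,7}=w_{5,7}$, $z_{3,5}=w_{3,5}$ from two-taxa-per-subgraph choices; after substituting these forced equalities the residual system involves only $z_{1,3},z_{0,1},z_{0,2},z_{2,4},z_0,z_1,z_2$ and is exactly the $5$-cycle system, so the full matching is solvable \emph{if and only if} the $5$-cycle matching is --- a two-way equivalence. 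You instead argue by \emph{restriction}: suppressing the degree-two nodes left by deleting the far pendant subgraphs multiplies the corresponding $z$-values into $\tilde z$, the resulting $5$-cycle system is a subsystem of the full one (this is confirmed by the paper's own Tables~\ref{k5table-1}--\ref{k5table-5}, where the top-path edges appear only through the product $z_{2,4}z_{4,6}z_{6,7}z_{5,7}z_{3,5}$), and since products of values in $(0,1)$ remain in $(0,1)$ and retained subgraphs keep their taxon counts, any A1--A2-feasible solution of the full system restricts to an A1--A2-feasible solution of the $5$-cycle system. Contraposition then gives exactly the direction invoked in Theorems~\ref{topId} and~\ref{singleThm}, so your argument suffices for every use the paper makes of Lemma~\ref{k5}. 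What your write-up omits is the converse needed to literally justify ``equivalent'' (and needed if one wants the explicit characterization of non-detectable parameter sets to transfer back from $k=5$ to general $k$): it is easy in your framework --- given a feasible solution of the $5$-cycle system, split $\tilde t\in(0,\infty)$ into positive summands over the removed cycle edges and copy every far parameter to the tree side, which satisfies all omitted tree-type and mixed type-2 quartet equations --- but you should state it. Finally, your closing paragraph slightly misplaces the difficulty: the non-degeneracy check you worry about is immediate (under A1 the concatenated top edge is a finite positive sum of branch lengths), and the true reason the threshold is $5$ rather than $4$ is structural --- for $k\geq 5$ the quartet systems are identical up to the product substitution, whereas the $4$-cycle has genuinely different equations (e.g.\ the $n=(1,1,1,1)$ type-3 rows in Table~\ref{k4table}) and is analyzed separately; the remark about three subgraphs meeting at the top node belongs to the proof of Theorem~\ref{topId}, not to this lemma.
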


\begin{proof}
  Without loss of generality, we will consider $\mathcal{N}$ to have an 8-cycle hybridization
  (Figure \ref{k8nettree} left) with major tree $\mathcal{T}$ (Figure
  \ref{k8nettree} right). When matching
  $CF(\mathcal{N},\bs{t},\bs{\gamma})=CF(\mathcal{T},\bs{t}')$, we can
  select 4-taxon subsets that would provide equalities for the
  parameters far from the hybrid node. For example, by selecting the
  4-taxon subset with $ n_2=1 $, $ n_4=2 $ and $ n_6=1 $, we get
  directly the equality: $z_4=w_4$. Similarly, we can choose 4-taxon
  subsets to produce the equalities: $ z_6=w_6 $, $ z_7=w_7 $,
  $ z_5=w_5 $ and $ z_3=w_3$.  In addition, if we select the 4-taxon
  subset with $n_4=n_6=2$, we get the equality
  $ z_6z_{4,6}z_4= w_6w_{4,6}w_4 $ which yields $ z_{4,6}=w_{4,6} $
  given the $z_4=w_4$ and $z_6=w_6$ equalities.  Following the same
  procedure, we obtain the equalities related to the other edges in
  the cycle that are far from the hybrid node: $ z_i=w_i $ for
  $ i=(6,7),(5,7),(3,5) $.

  We conclude that when considering the above equalities between
  $\mathcal{N}$ and $\mathcal{T}$, we only need equalities relating
  the parameters $ z_i $ for $ i=(1,3),(0,1)(0,2),(2,4),0,1,2 $, which
  corresponds to the 5-cycle hybridization of
  $\mathcal{\widetilde{N}}$.  Thus, the CF equations for a $k$-cycle
  for $k\geq 5$ are equivalent to a 5-cycle hybridization when
  studying the detectability of the hybridization.

  \begin{figure}[h!]
    \centering
    \begin{minipage}{.5\textwidth}
      \centering
      \includegraphics[scale=.15]{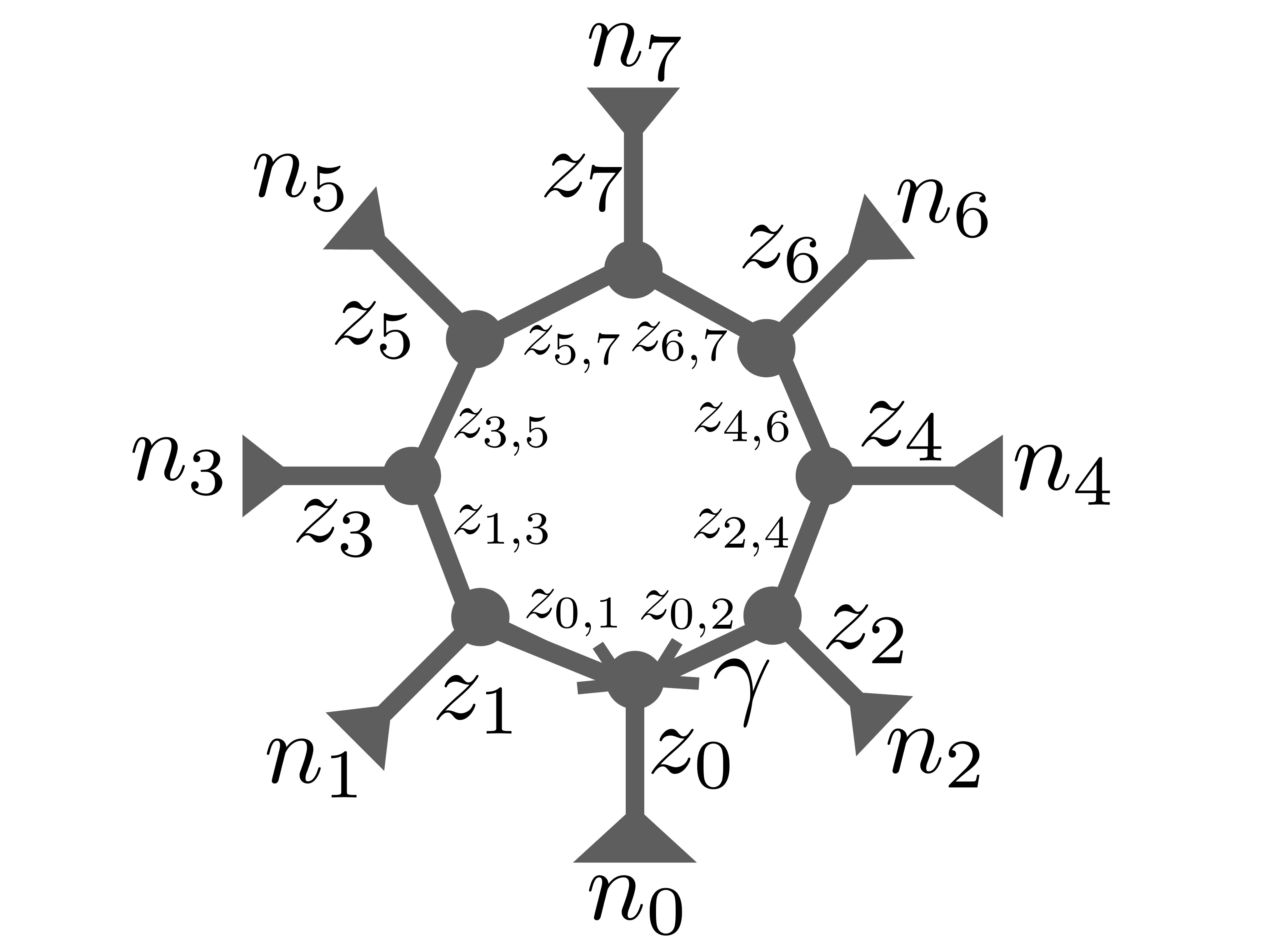}
    \end{minipage}%
    \begin{minipage}{.5\textwidth}
      \centering
      \includegraphics[scale=.15]{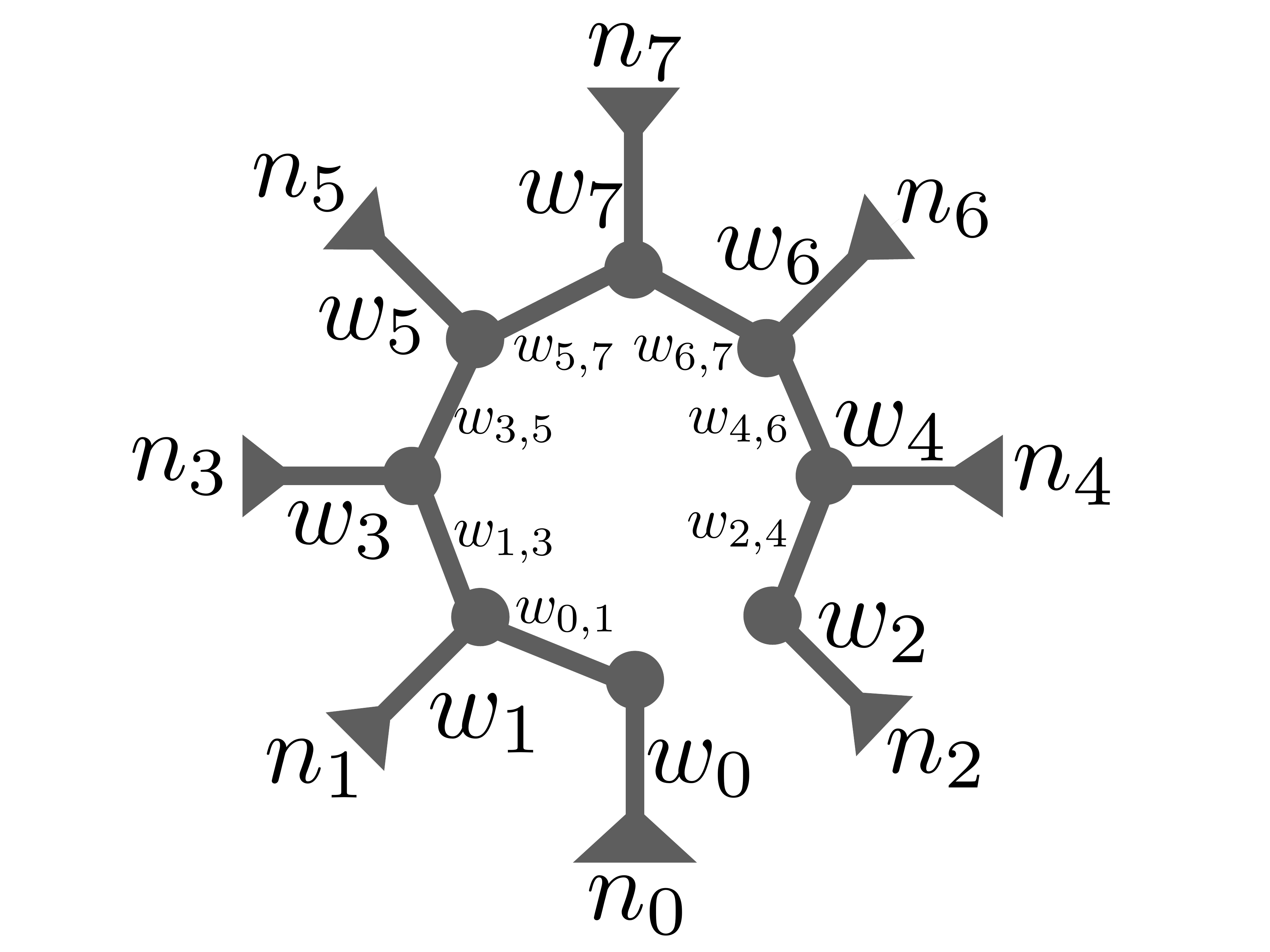}
    \end{minipage}
    \caption{8-cycle network and its corresponding tree
      representation.}
    \label{k8nettree}
  \end{figure}

\end{proof}

\begin{myLemma}
  Let $\mathcal{N}$ be a $n$-taxon explicit level-1 semi-directed
  phylogenetic network with $h=1$ hybridization that defines a
  hybridization cycle of $k \geq 4$ nodes. Let
  $CF(\mathcal{N},\bs{t},\gamma)$ be the system of CF equations
  defined by the coalescent model around the hybridization cycle
  (Definition \ref{cfeqs}).  Then, the polynomial equations in
  $CF(\mathcal{N},\bs{t},\gamma)$ can be decomposed into subsets of
  equations:
  $\{CF_i(\mathcal{N_i},\bs{t}_i,\gamma)\}$ each
  corresponding to a 4-cycle hybridization obtained by ignoring
  certain $n_k$ subgraphs at a time.
  \label{k5par}
\end{myLemma}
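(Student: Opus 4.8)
The plan is to show that every CF equation in $CF(\mathcal{N},\bs{t},\gamma)$ that actually involves the hybridization can be read off from the restriction of $\mathcal{N}$ to four taxa, and that these restrictions are themselves $4$-cycle networks. First I would invoke Lemma \ref{nk} and Lemma \ref{h1}: the equations in $CF(\mathcal{N},\bs{t},\gamma)$ that depend on the hybridization parameters are exactly those coming from $4$-taxon subsets $s=\{a,b,c,d\}$ with at most two taxa per subgraph, and each subgraph may be treated as a subtree. The key structural fact I would then establish is a restriction property of the coalescent CFs: for any such $s$, the three CF equations $CF_s(\mathcal{N})$ coincide with the three CF equations of the quarnet $\mathcal{N}|_s$ obtained by deleting all taxa outside $s$ and suppressing the resulting degree-$2$ nodes. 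This holds because, under the coalescent, each quartet CF is a function only of the internal edges of the topology induced on those four taxa, so taxa in the ignored subgraphs contribute nothing.

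Next I would analyze the induced quarnets. Order the $k$ cycle nodes from the hybrid node $h$ up each side to the top node $r$ of the cycle. A $4$-taxon subset that places one taxon in each of four distinct corner subgraphs straddling the cycle induces, after suppression, a network whose only cycle retains $h$, $r$, and exactly one intermediate node on each side; since a level-$1$ quarnet on four taxa can carry a cycle of at most four nodes, this induced quarnet is precisely a $4$-cycle hybridization as in Figure \ref{k4nettree}. The edges of this diamond are contractions of consecutive cycle edges of $\mathcal{N}$: each diamond branch length is the sum of the $t_i$ it absorbs, equivalently its $z$-parameter is the product of the corresponding $z_i$ of $\mathcal{N}$. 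Crucially, the two hybrid edges incident to $h$ are preserved by the suppression, so the inheritance probability $\gamma$ of the induced diamond equals that of $\mathcal{N}$; thus $\mathcal{N}|_s$ is exactly the $4$-cycle hybridization $\mathcal{N}_i$ obtained from $\mathcal{N}$ by ignoring the $k-4$ subgraphs not meeting $s$ (keeping the descendant subgraph of $h$, the subgraph above $r$, and one side subgraph on each side).

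Finally I would assemble the decomposition by grouping the equations of $CF(\mathcal{N},\bs{t},\gamma)$ according to the four subgraphs their generating $4$-taxon subset occupies. Each group, ranging over the $4$-taxon subsets drawn from a fixed choice of one left-side and one right-side subgraph together with the $h$- and $r$-subgraphs, is by the previous paragraph exactly $CF_i(\mathcal{N}_i,\bs{t}_i,\gamma)$ for the corresponding $4$-cycle hybridization, while subsets that do not straddle the cycle give tree or $3$-cycle quarnets free of $\gamma$. This yields the claimed decomposition $\{CF_i(\mathcal{N}_i,\bs{t}_i,\gamma)\}$. I expect the main obstacle to be the middle step: verifying rigorously that suppressing the unsampled cycle nodes produces a genuine $4$-cycle network with correctly composed edge parameters and unchanged $\gamma$, and that the restriction identity $CF_s(\mathcal{N})=CF(\mathcal{N}|_s)$ holds exactly rather than only up to reparametrization. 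Once the decomposition is in place, combining it with the $k=4$ case of Theorem \ref{parId}---each diamond $\mathcal{N}_i$ being finitely identifiable, and the diamonds jointly exposing every cycle parameter through their overlapping contracted edges---gives finite identifiability for all $k\ge 4$, which is the use made of this lemma.
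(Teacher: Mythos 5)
Your core route is the paper's: by Lemma \ref{nk} two taxa per subgraph suffice, sub-sampling four subgraphs at a time contracts the $k$-cycle to a diamond whose branch lengths are sums of the absorbed $t_i$ (products of the $z_i$) with $\gamma$ unchanged, and varying the choice of retained subgraphs covers every parameter. The paper's proof does exactly this on the 8-cycle of Figure \ref{k8nettree}, keeping the four subgraphs nearest the hybrid node ($n_0=n_1=n_2=n_3=2$, the rest $0$) and aggregating the far cycle edges into a single parameter $\tilde z$; your remark that the aggregation is additive in $t$, hence multiplicative in $z$, is correct and in fact repairs a typo in the paper, where $\tilde z$ is written as a sum rather than a product of the $z$'s. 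Your worry about the restriction identity $CF_s(\mathcal{N})=CF(\mathcal{N}|_s)$ is moot: in the pseudolikelihood model (Section \ref{modelSec}) quartet CFs are \emph{defined} on the induced 4-taxon subnetwork, so that step is definitional rather than something to verify.

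The genuine flaw is in your final grouping step. Your groups always take the $h$-subgraph, the $r$-subgraph, and exactly one subgraph per side of the cycle, and you dismiss the remaining 4-taxon subsets as giving ``tree or 3-cycle quarnets free of $\gamma$.'' That is false: a subset with taxa below the hybrid node and in two subgraphs on the \emph{same} side of the cycle induces a type 1 or type 2 quarnet of Figure \ref{5quartets}, whose equations do involve $\gamma$ and interior cycle edges --- for instance the rows $n=(1,2,0,1,0)$ (type 2, involving $\gamma$ and $z_{1,3}$) and $n=(2,1,0,1,0)$ (type 1) in Table \ref{k5table-1} touch subgraphs $n_0,n_1,n_3$ only, so no group in your family contains them, and your decomposition as stated does not exhaust the system. (For odd cycles your ``$r$-subgraph'' is not even well defined, since the two sides meet along an edge rather than at a subgraph-bearing apex.) The fix is what the paper implicitly does: let a group consist of \emph{all} CF equations from 4-taxon subsets drawn from an arbitrary fixed quadruple of subgraphs containing $n_0$, and observe that the contracted diamond's full CF system (Tables \ref{k4table} and \ref{k4table-2}) already contains the rows with zero taxa in one or two of its subgraphs --- precisely the type 1, type 2, type 4 and tree rows you tried to set aside --- so every equation lands in some diamond group as the quadruple varies. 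With that correction your argument coincides with the paper's, and the concluding appeal to the $k=4$ case of Theorem \ref{parId} goes through as you describe.
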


  \begin{proof}
    It is enough to prove finite identifiability of the parameters of
    a 4-cycle hybridization because any $k$-cycle hybridization
    ($k \geq 4$) can be reduced to a 4-cycle by sub-sampling of
    taxa. For example, in the 8-cycle hybridization in Figure
    \ref{k8nettree}, we can estimate any parameters with a properly
    chosen sub-sample of taxa.

    By Lemma \ref{nk}, we only need to sample 2 individuals per
    subgraph. Let us choose the 4-cycle hybridization with
    $n_0=n_1=n_2=n_3=2, n_4=n_5=n_6=n_7=0$, then the CF equations
    ($CF_{0,1,2,3}(\mathcal{N})$ for short) of this 4-cycle
    hybridization will only depend on the parameters
    $\{z_0,z_1,z_2,z_3\}$ along with the branch lengths inside the
    cycle $\{z_{1,3},z_{0,1},z_{0,2},\tilde{z}\}$ with
    $\tilde{z}=z_{2,4}+z_{4,6}+z_{6,7}+z_{5,7}+z_{3,5}$. Thus, we can
    study the finite identifiability of
    $\{z_0,z_1,z_2,z_3,z_{1,3},z_{0,1},z_{0,2},\tilde{z}\}$ using only
    the equations in $CF_{0,1,2,3}(\mathcal{N})$.  We can repeat this
    process with a different 4-taxon subset covering a different set
    of parameters until every parameter in $(\bs{t},\bs{\gamma})$ is
    included in at least one subset of CF equations.

\end{proof}

\subsection*{Reproducibility}
All Macaulay2 and Mathematica scripts are available in the GitHub
repository
\url{https://github.com/solislemuslab/snaq-identifiability} (link will
be made publicly available after publication).

\bibliography{identifiability.bib}

\newpage

\appendix

\begin{center}
\large{\textbf{Supplementary Material}}\\
On the Identifiability of Phylogenetic Networks under a
  Pseudolikelihood model
\end{center}

\section{Definitions and notations}
\label{parInt}
Our main parameter of interest is the topology $\mathcal{N}$ of a
phylogenetic network along with the numerical parameters of the vector
of branch lengths ($\bs{t}$) and a vector of inheritance probabilities
($\bs{\gamma}$), describing the proportion of genes inherited by a
hybrid node from one of its hybrid parent (see Figure \ref{netEx}).


A rooted explicit phylogenetic network $\mathcal{N}$ on taxon set $X$ is a
connected directed acyclic graph with vertices
$V = \{r\} \cup V_L \cup V_H \cup V_T$ , edges $E = E_H \cup E_T$ and
a bijective leaf-labeling function $f : V_L \rightarrow X$ with the
following characteristics:
\begin{itemize}
\item{The root $r$ has indegree 0 and outdegree 2}
\item{Any leaf $v \in V_L$ has indegree 1 and outdegree 0}
\item{Any tree node $v \in V_T$ has indegree 1 and outdegree 2}
\item{Any hybrid node $v \in V_H$ has indegree 2 and outdegree 1}
\item{A tree edge $e \in E_T$ is an edge whose child is a tree node}
\item{A hybrid edge $e \in E_H$ is an edge whose child is a hybrid
    node}
\item{A hybrid edge $e \in E_H$ has an inheritance probability
    parameter $\gamma_e < 1$ which represents the proportion of the
    genetic material that the child hybrid node received from this
    parent edge. For a tree edge $e$, $\gamma_e=1$}
\end{itemize}

In a rooted explicit network, every internal node represents a
biological mechanism: speciation for tree nodes and hybridization for
hybrid nodes. However, other types of phylogenetic network also exist
in literature, such as unrooted networks \cite{Huson2010} and
semi-directed networks \cite{Solis-Lemus2016}.  Unrooted phylogenetic
networks are typically obtained by suppressing the root node and the
direction of all edges.  In semi-directed networks, on the other hand,
the root node is suppressed and we ignore the direction of all tree
edges, but we maintain the direction of hybrid edges, thus keeping
information on which nodes are hybrids. The placement of the root is
then constrained, because the direction of the two hybrid edges to a
given hybrid node inform the direction of time at this node: the third
edge must be a tree edge directed away from the hybrid node and
leading to all the hybrid’s descendants. Therefore the root cannot be
placed on any descendant of any hybrid node, although it might be
placed on some hybrid edges.  See Figure \ref{netEx} for the example
of a rooted explicit phylogenetic network (center), and its semi-directed
version (right).

Assume that $\mathcal{N}$ has $n$ species and $h$ hybridization events
(that is, $|V_h|=h$). Each hybridization event creates a blob (or
cycle) which represents
a subgraph with at least two nodes and no cut-edges. A cut-edge is any
edge in the network whose removal disconnects the network.
Traditionally, the term \textit{cycle} is used for directed (rooted)
networks, and the term \textit{blob} is used for undirected (unrooted)
networks. We make no distinction here as we will focus on
\textit{semi-directed} networks, and thus, some edges in the cycle
(blob) will be directed, and some will not. Let $k_i$ be the number of
nodes in the $i^{th}$ hybridization cycle (for $i=1,2,\cdots,h$), so
that this hybridization is denoted $k_i$-cycle hybridization. For
example, in Figure \ref{netEx} (right) $n = 6, h = 1, k_1 = 3$.

We further assume that the phylogenetic network is of level-1
\cite{Huson2010}, i.e. any given edge can be part of at most one
cycle. This means that there is no overlap between any two cycles (see
Figure \ref{level1}).

\begin{figure}[ht!]
\centering
\includegraphics[scale=0.1]{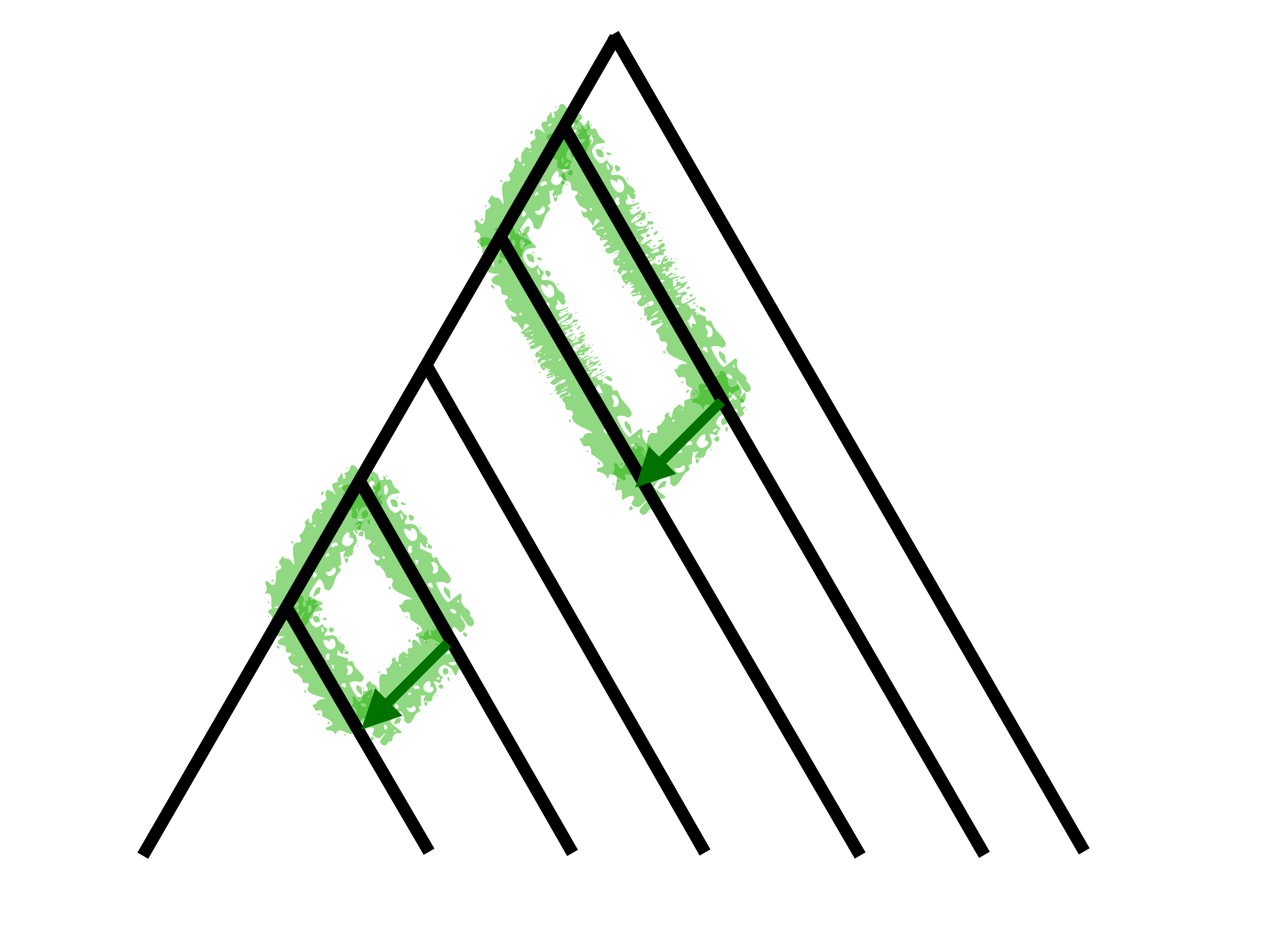}
\includegraphics[scale=0.1]{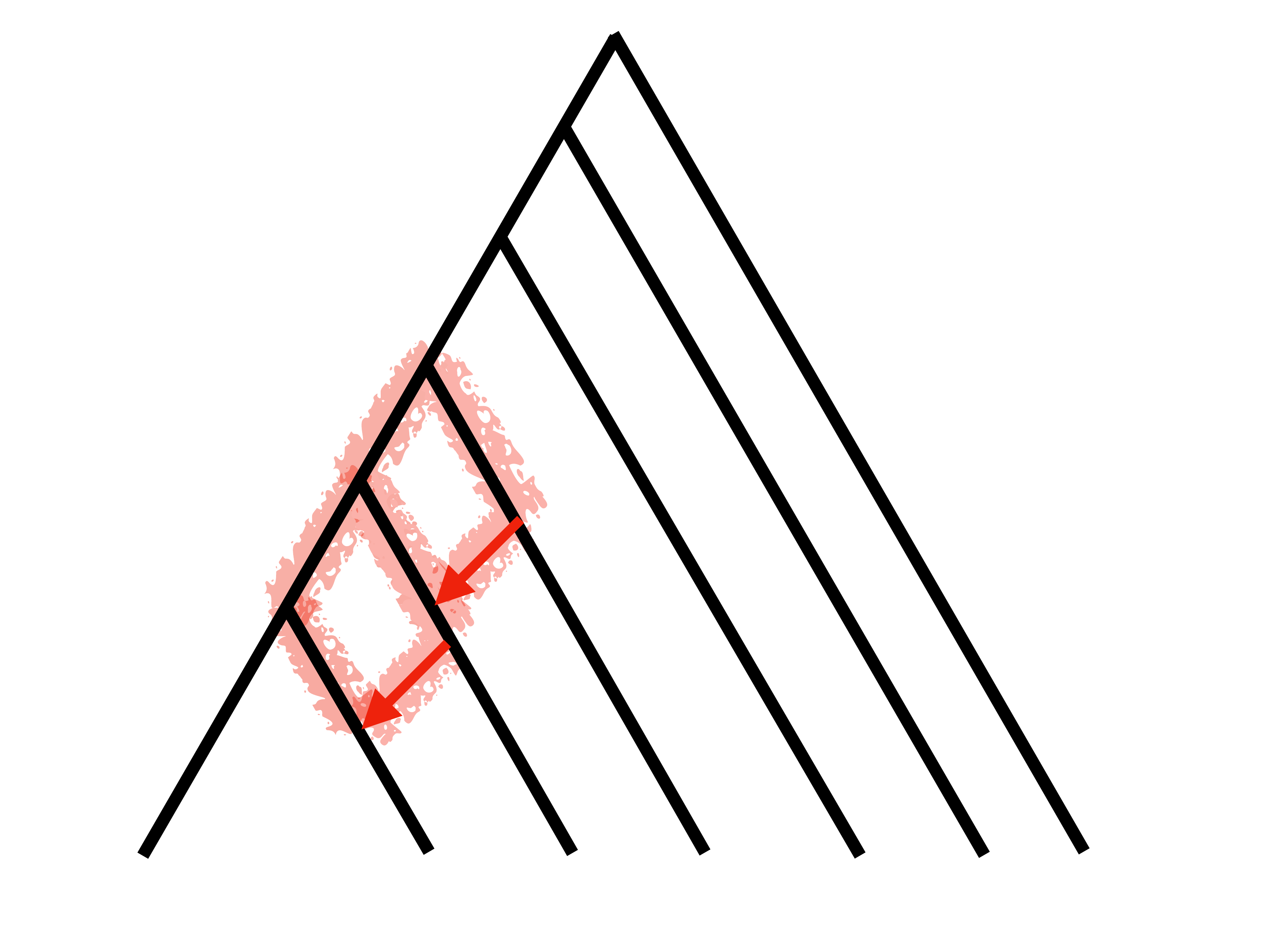}
\caption{Left: Level-1 network vs Right: non-level-1 network}
\label{level1}
\end{figure}

Thus, our parameters of interest are
$(\mathcal{N}, \bs{t}, \bs{\gamma})$ where
\begin{itemize}
\item $\mathcal{N}$ is an explicit semi-directed level-1 phylogenetic
  network that links the $n$ species under study, and has $h$
  hybridization events. This network has two vectors of numerical
  parameters:
\item branch lengths $\bs{t} \in [0,\infty)^{n_e}$ for $n_e$ branches
  in the network, and
\item inheritance probabilities $\bs{\gamma} \in [0,1]^{n_h}$ for
  $n_h$ minor hybrid edges.
\end{itemize}
 
Note that for every hybridization event, there are two parent hybrid
edges connected to the hybrid node: 1) major hybrid edge with
inheritance probability $\gamma > 0.5$, and 2) a minor hybrid edge
with inheritance probability $\gamma < 0.5$. Both edges are
parametrized with the same $\gamma$.

\section{Concordance factor (CF) data} 
\label{data}
The data for our pseudolikelihood estimation method is a collection of
estimated gene trees $\{ G_i\}_{i=1}^{g}$ from $g$ loci (ortholog
region in genome with no recombination). These gene trees are unrooted
and only topologies are considered (no branch lengths).  To account
for estimation error in the gene trees, we do not consider the gene
tree directly as input data, but we summarize them into the
\textit{concordance factors} (CF) \cite{Baum2007} for every possible
quartet. A quartet is a 4-taxon unrooted tree. For example, for taxon
set $s = \{a, b, c, d\}$, there are only three possible quartets,
represented by the splits $q_1 = ab|cd$, $q_2 = ac|bd$ and
$q_3 = ad|bc$.  The CF of a given quartet is the proportion of genes
whose true tree displays that quartet (see Figure
\ref{quartetCFfig}). When these CFs are estimated in BUCKy
\cite{Ane2007}, they represent true genomic discordance, and naturally
account for the estimation error in the gene trees.  Our
pseudolikelihood method (SNaQ \cite{Solis-Lemus2016, SolisLemus2017})
uses these observed CFs as input data to estimate a species
network. SNaQ finds the best network that fits the data by identifying
the network whose theoretical CFs are close to the observed CFs. More
details about the model are presented in the Section \ref{modelSec}.

\begin{figure}[ht!]
    \centering
    \includegraphics[scale=0.25]{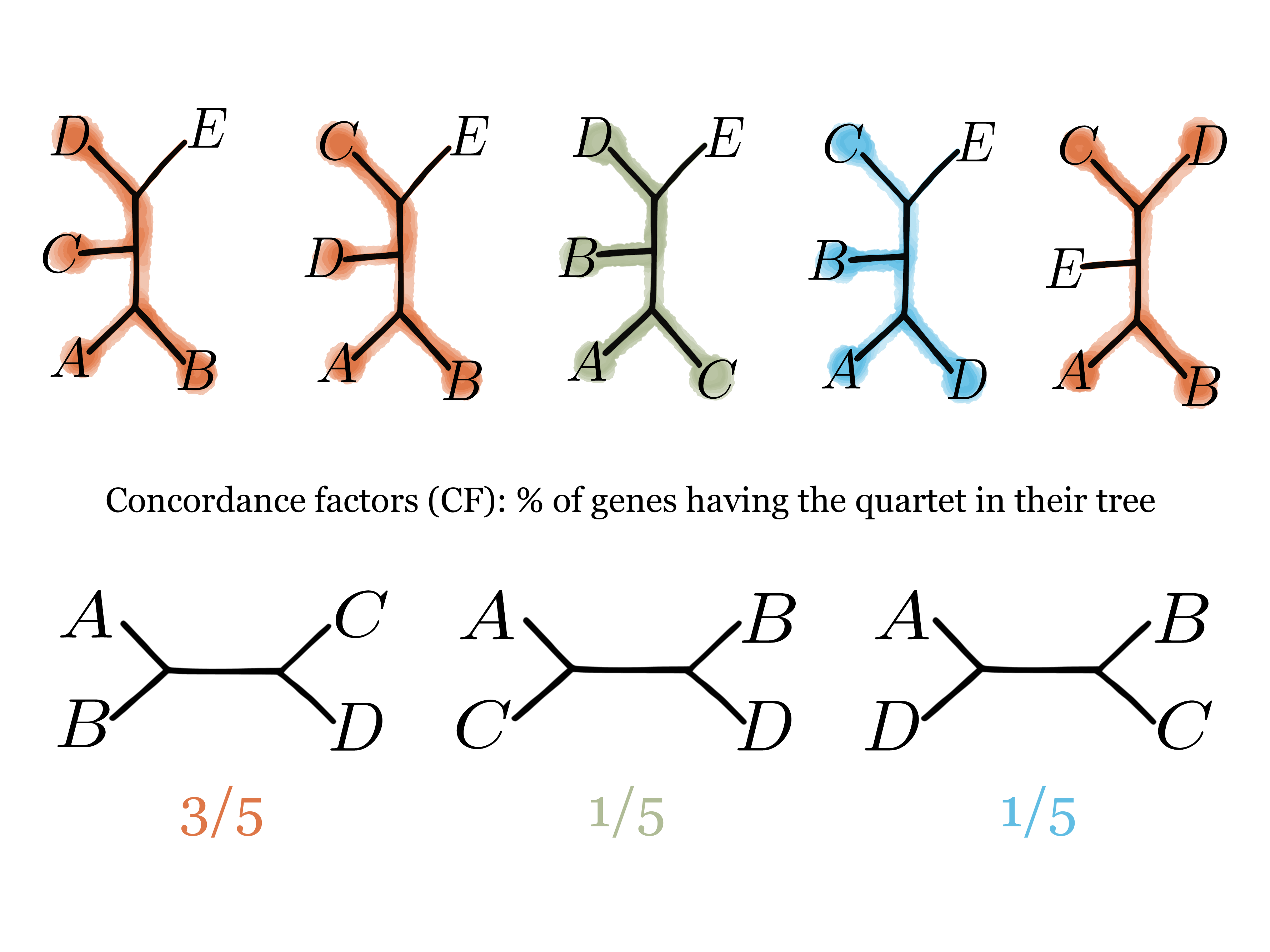}
    \caption{Example of computation of concordance factors from a
      sample of 5 gene trees. For the 4-taxon subset $\{A,B,C,D\}$,
      there are three possible quartets: $AB|CD$, $AC|BD$, $AD|BC$
      with observed CFs of $(3/5, 1/5, 1/5)$. This estimation of CFs
      is not robust to estimation error as we consider the gene trees
      as perfectly reconstructed. Alternatively, we can estimate the
      CFs with BUCKy\cite{Ane2007} which accounts for gene tree
      estimation error.}
    \label{quartetCFfig}
\end{figure}


\section{Probability model for gene trees}
\label{modelSec}

Here, we define the pseudolikelihood function for
$(\mathcal{N}, \bs{t}, \bs{\gamma})$ given the estimated CFs as
data. Most of this section is an extract from \cite{Solis-Lemus2016}
(except for Section \ref{equivalence}) and presented here for the sake
of completeness.

The pseudolikelihood of a network is based on the likelihood of its
4-taxon subnetworks (\textit{quarnets} in \cite{Huber2018}). That is,
for a given network $\mathcal{N}$ with $n \geq 4$ taxa, we consider
all 4-taxon subsets
$\mathcal{S} = \{s = \{a,b,c,d\} : a,b,c,d \in X\}$ and combine the
likelihood of each 4-taxon subnetworks to form the full network
pseudolikelihood:
\begin{equation}
    \tilde{L}(\mathcal{N}) = \prod_{s \in \mathcal{S}} L(s)
\label{pseudolik1}
\end{equation}
where $L(s)$ is the likelihood of the subnetwork of a given 4-taxon
subset $s$. These 4-taxon likelihoods are not independent, which is
why we get a pseudolikelihood when we multiply them, instead of a true
likelihood.

To calculate the likelihood of a given 4-taxon subset $s=\{a,b,c,d\}$,
let $Y=(Y_{q_1}, Y_{q_2}, Y_{q_3})$ denote the number of gene trees
that match each of the three possible quartet resolutions:
$q_1=ab|cd, q_2=ac|bd, q_3=ad|bc$, then $Y$ follows a multinomial
distribution with probabilities $(CF_{q_1}, CF_{q_2}, CF_{q_3})$, the
theoretical CFs expected under the coalescent on the 4-taxon
subnetwork which were derived in \cite{Solis-Lemus2016} under the
multispecies coalescent model on networks \cite{Meng2009, Yu2012,
  Yu2014}.

Thus, the likelihood of a given 4-taxon subset is
\ref{pseudolik1}, we get:
\begin{equation}
    L(s) \propto (CF_{q_1})^{Y_{q_1}} (CF_{q_2})^{Y_{q_2}} (CF_{q_3})^{Y_{q_3}}
\end{equation}

The data are summarized in the $Y$ values through the estimated CFs,
and the candidate network governs the CF values, which we explain
below.

\subsection{The multispecies coalescent network (MSCN) model}

The theoretical CFs expected under the coalescent model are already
derived for a species tree in \cite{Allman2011}, and for a species
network in \cite{Solis-Lemus2016}. In both cases, the CFs do not
depend on the position of the root. For the tree, the major CF is
defined for the quartet that agrees with the species tree. That is, if
the species tree has the split $ab|cd$ with internal edge $t$, then
the major CF would be $CF_{ab|cd}=1-2/3 \exp(-t)$. The CF for the
minor resolutions (in disagreement with the species tree $ab|cd$)
would then be $CF_{ac|bd}=CF_{ad|bc}=1/3 \exp(-t)$.

For the case of a 4-taxon network, the theoretical CFs are weighted
averages of CFs on trees.  For example, Figure \ref{cfs} center shows
a semi-directed 4-taxon and one possible rooting in Figure \ref{cfs}
left. The CFs on this network are given by the weighted averages of CF
on the two trees on the right, with weights given by
$1-\gamma,\gamma$:
\begin{itemize}
\item
  $CF_{ab|cd} = (1-\gamma)(1-2/3\exp(-t_1))+\gamma(1/3
  \exp(-t_1-t_2))$ for the major resolution, and
\item $CF_{ac|bd} = CF_{ad|bc}=(1-\gamma)(1/3\exp(-t_1))+\gamma(1/3
\exp(-t_1-t_2))$ for the minor resolutions.
\end{itemize}

\begin{figure}
    \centering
    \includegraphics[scale=0.1]{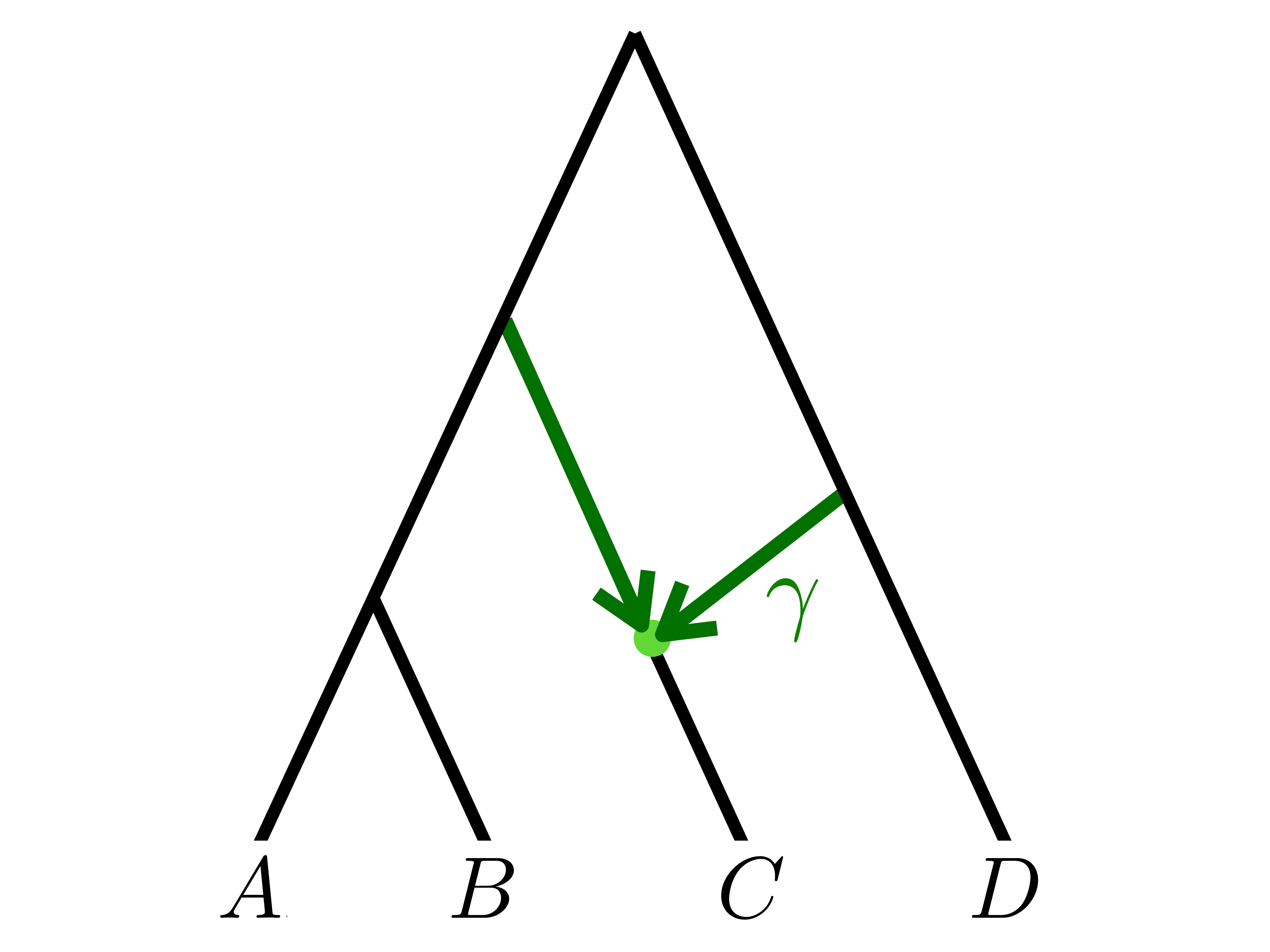}
    \includegraphics[scale=0.1]{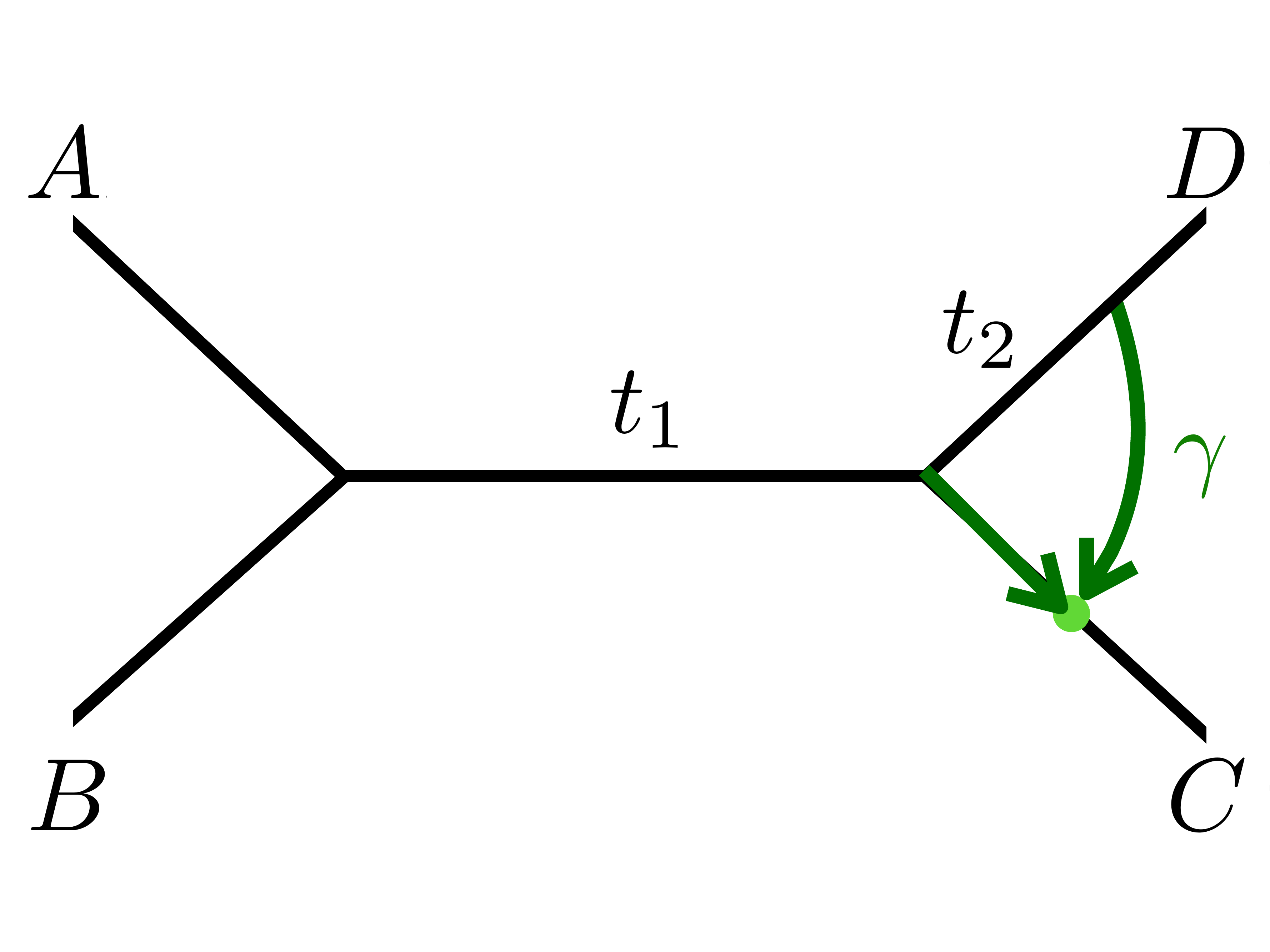}
    \includegraphics[scale=0.1]{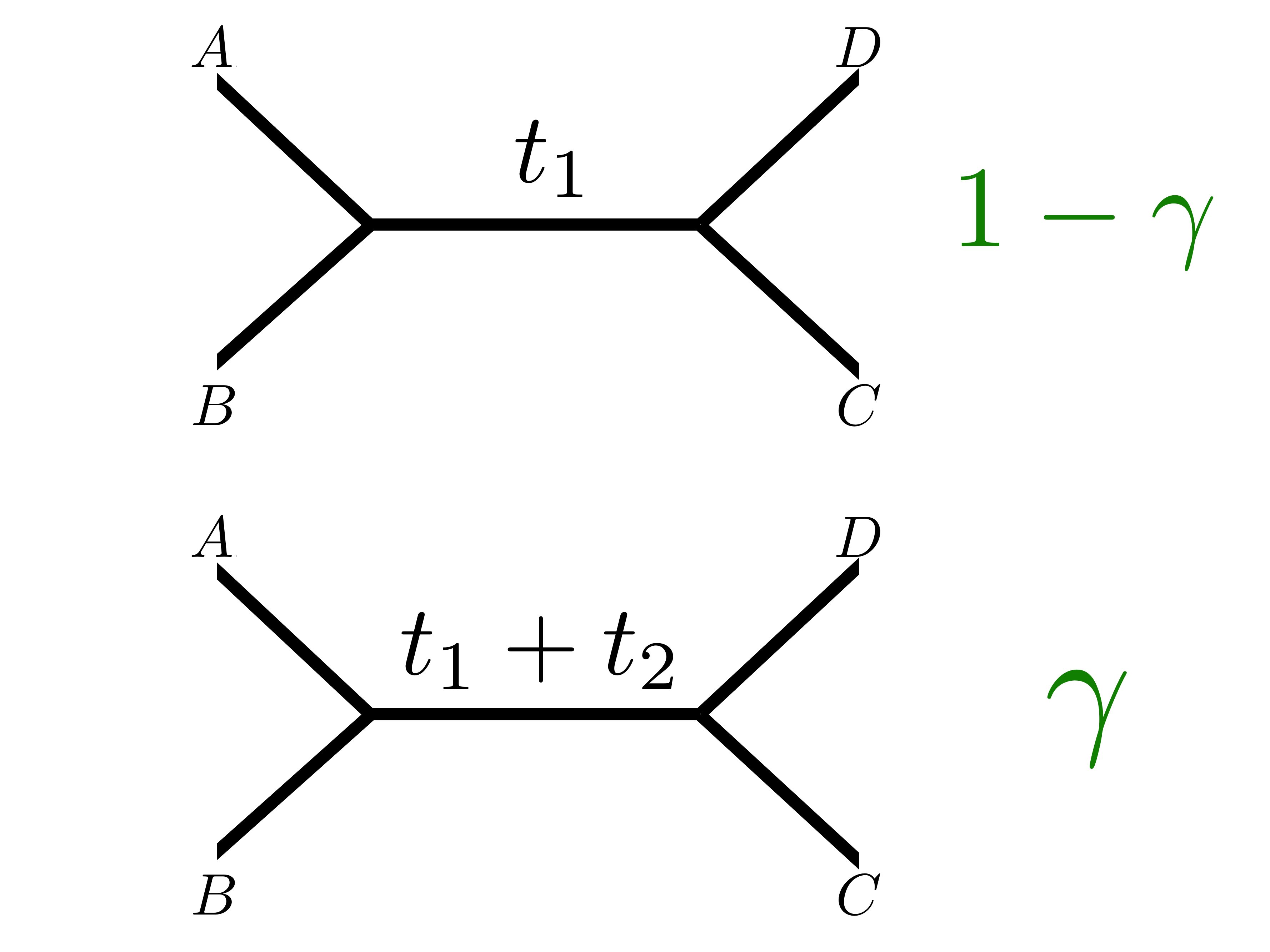}
    \caption{Rooted 4-taxon network (left) and its semi-directed
      version (center). Quartet CFs expected under the network do not
      depend on the root placement, and are weighted averages of
      quartet CFs expected under the unrooted trees (right).}
    \label{cfs}
\end{figure}

\subsection{CF formulas for all possible level-1 4-taxon networks}

Derived in \cite{Solis-Lemus2016}, the theoretical CFs for the five
types of level-1 4-taxon network are used in the computation of the
pseudolikelihood of any level-1 network. These formulas appear in
\cite{Solis-Lemus2016}, but we re-write them here in Figure
\ref{5quartets}.

\begin{figure}
\noindent\begin{minipage}{0.35\textwidth}
\centering
\includegraphics[scale=0.1]{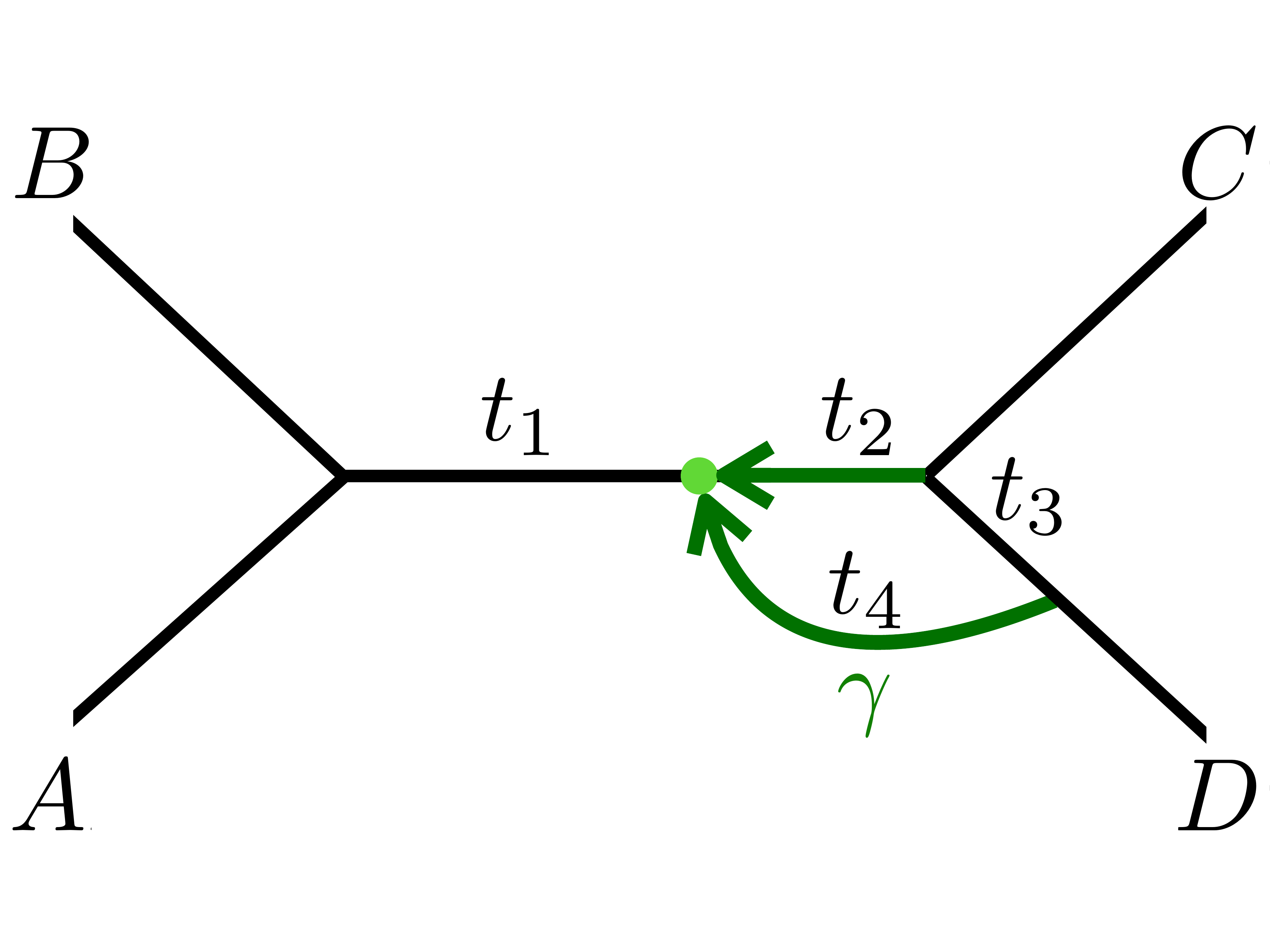}
\\Type 1
\end{minipage}
\begin{minipage}{0.55\textwidth}
\footnotesize
\begin{align*}
CF_{AB|CD}&=(1-\gamma)^2(1-2/3\exp(-t_1-t_2))\\
&+2\gamma(1-\gamma)(1-\exp(-t_1)+1/3\exp(-t_1-t_3))\\
&+\gamma^2(1-2/3\exp(-t_1-t_4))\\
CF_{AC|BD}&=(1-\gamma)^2(1/3\exp(-t_1-t_2))\\
&+\gamma(1-\gamma)\exp(-t_1)(1-1/3\exp(-t_3))\\
&+\gamma^2(1/3\exp(-t_1-t_4))\\
CF_{AD|BC}&=(1-\gamma)^2(1/3\exp(-t_1-t_2))\\
&+\gamma(1-\gamma)\exp(-t_1)(1-1/3\exp(-t_3))\\
&+\gamma^2(1/3\exp(-t_1-t_4))
\end{align*}
\end{minipage}

\noindent\begin{minipage}{0.35\textwidth}
\centering
\includegraphics[scale=0.1]{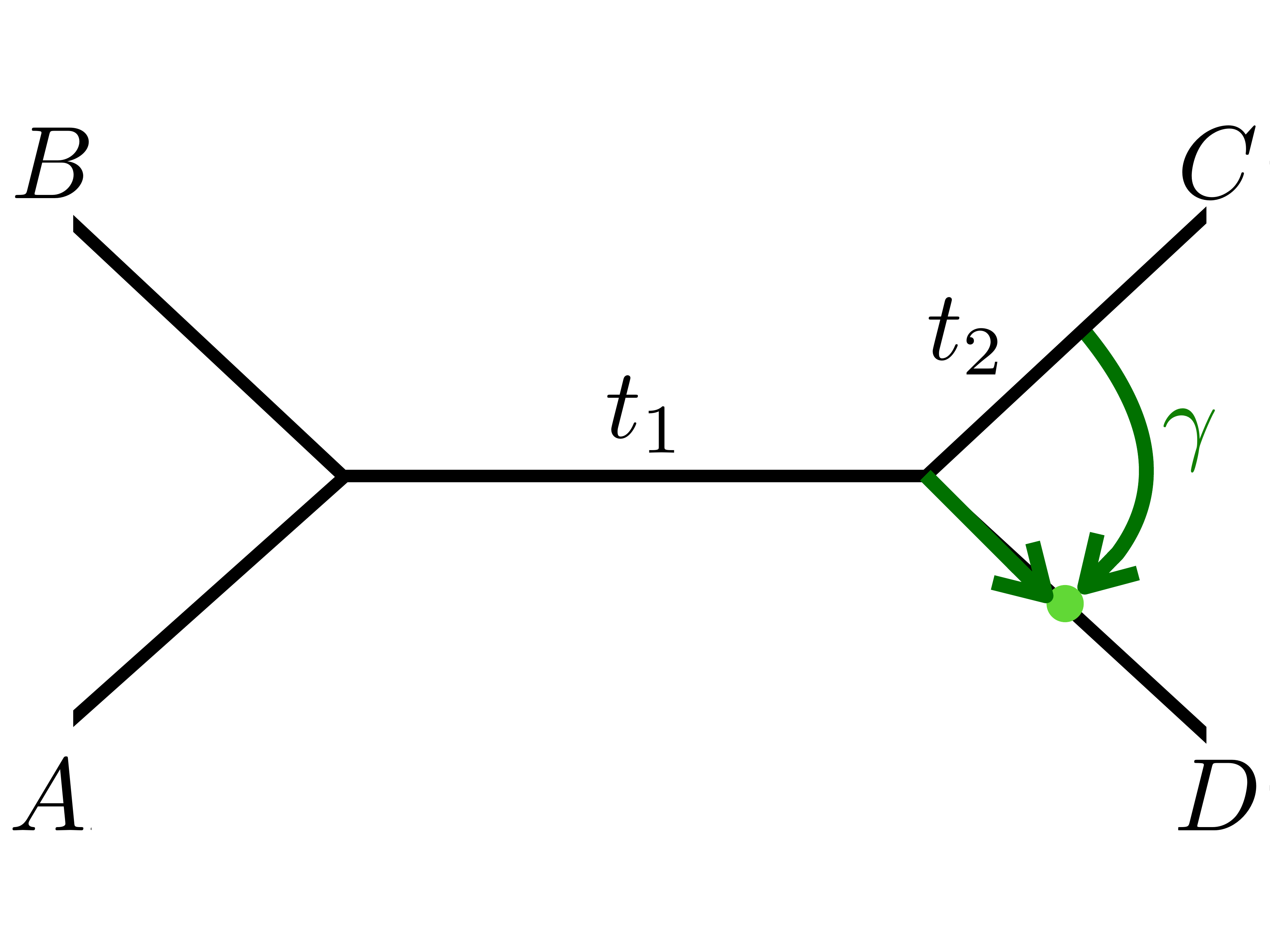}
\\Type 2
\end{minipage}
\begin{minipage}{0.55\textwidth}
\footnotesize
\begin{align*}
CF_{AB|CD}&=(1-\gamma)(1-2/3\exp(-t_1))+\gamma(1-2/3\exp(-t_1-t_2))\\
CF_{AC|BD}&=(1-\gamma)1/3\exp(-t_1)+\gamma1/3\exp(-t_1-t_2)\\
CF_{AD|BC}&=(1-\gamma)1/3\exp(-t_1)+\gamma1/3\exp(-t_1-t_2)
\end{align*}
\end{minipage}

\noindent\begin{minipage}{0.35\textwidth}
\centering
\includegraphics[scale=0.1]{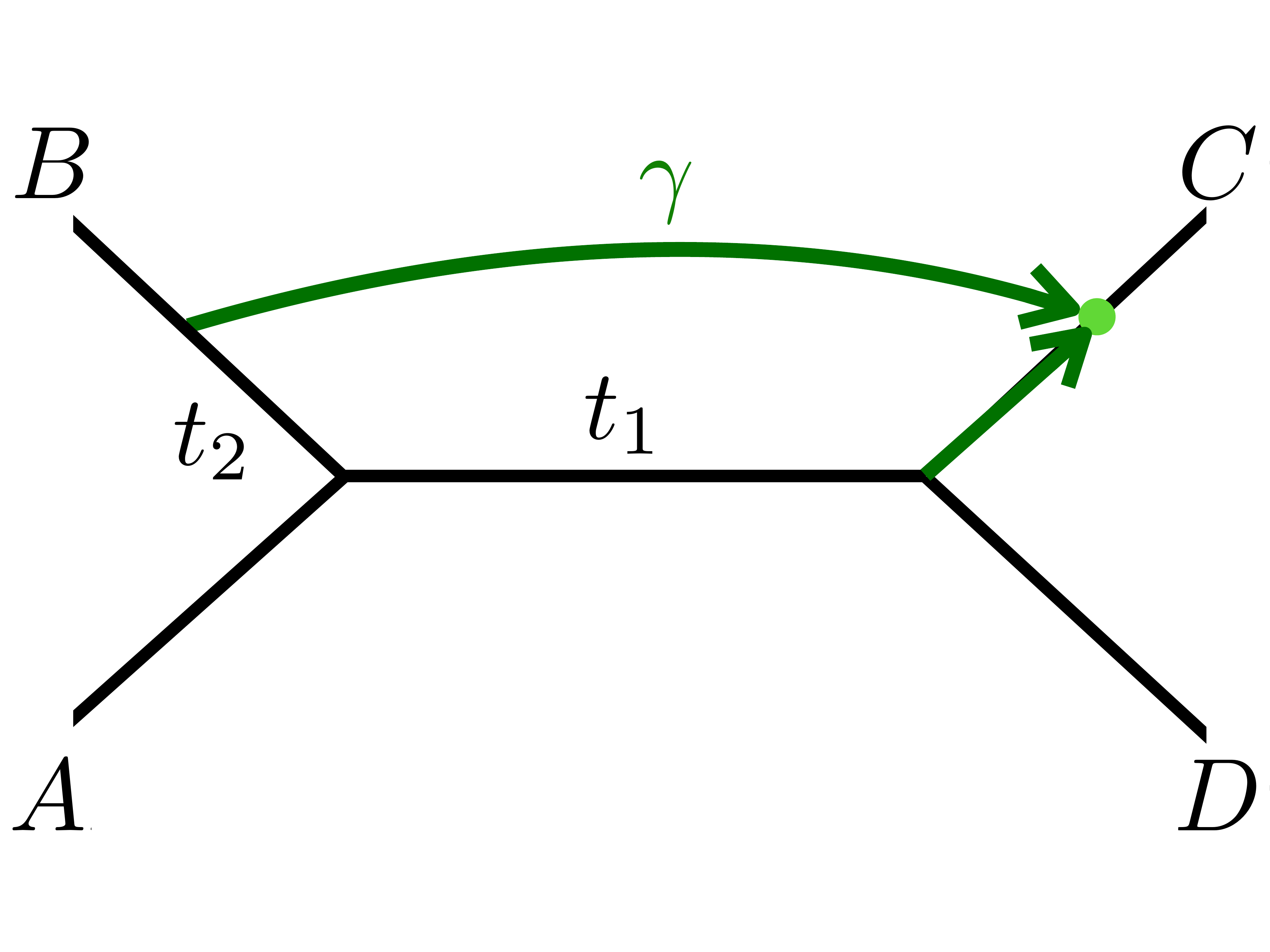}
\\Type 3
\end{minipage}
\begin{minipage}{0.55\textwidth}
\footnotesize
\begin{align*}
CF_{AB|CD}&=(1-\gamma)(1-2/3\exp(-t_1))+\gamma(1/3\exp(-t_2))\\
CF_{AC|BD}&=(1-\gamma)1/3\exp(-t_1)+\gamma(1-2/3\exp(-t_2))\\
CF_{AD|BC}&=(1-\gamma)1/3\exp(-t_1)+\gamma1/3\exp(-t_2)
\end{align*}
\end{minipage}

\noindent\begin{minipage}{0.35\textwidth}
\centering
\includegraphics[scale=0.1]{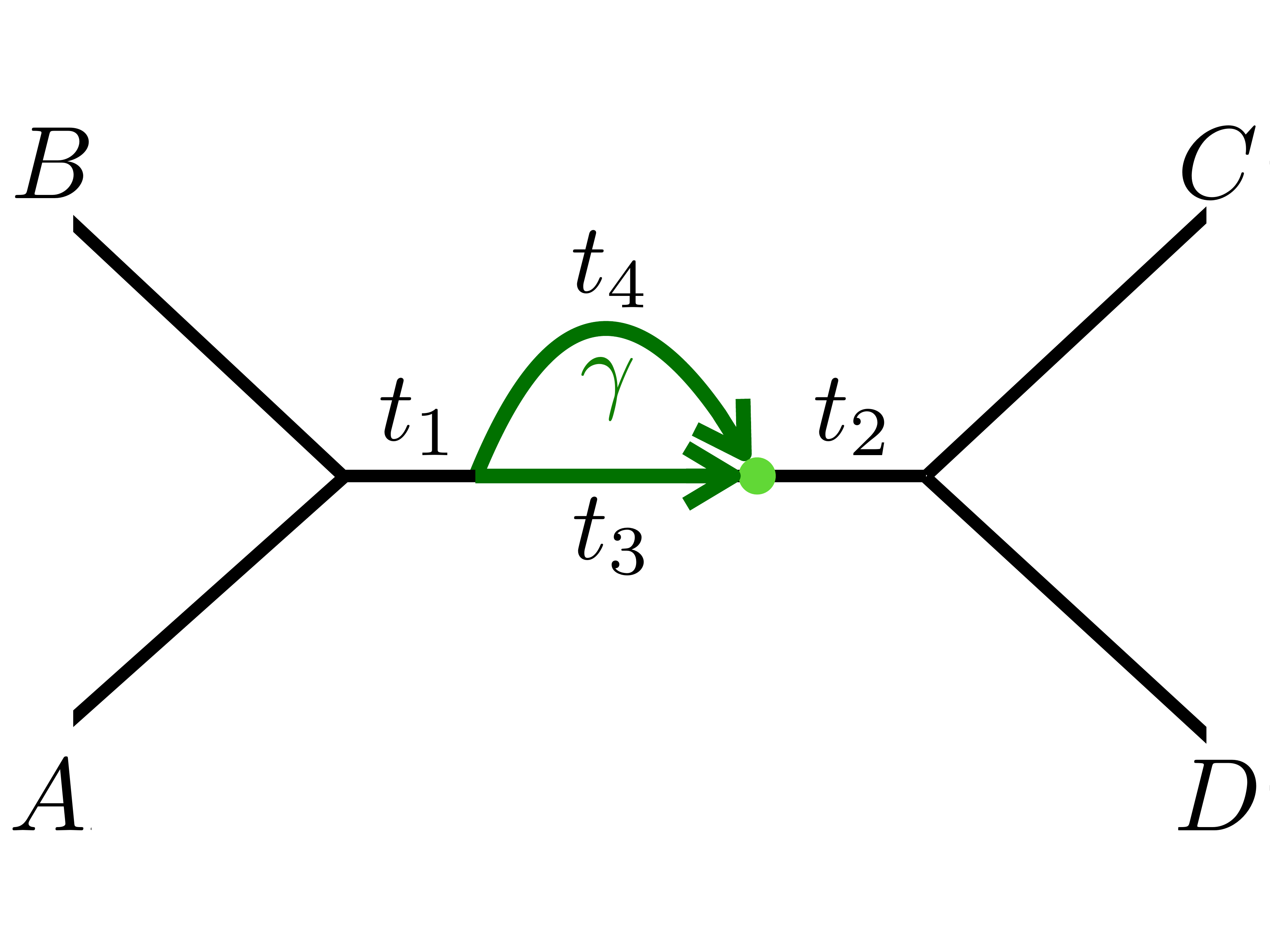}
\\Type 4
\end{minipage}
\begin{minipage}{0.55\textwidth}
\footnotesize
\begin{align*}
CF_{AB|CD}&=(1-\gamma)^2(1-2/3\exp(-t_1-t_2-t_3))\\
&+2\gamma(1-\gamma)(1-2/3\exp(-t_1-t_2))\\
&+\gamma^2(1-2/3\exp(-t_1-t_2-t_4))\\
CF_{AC|BD}&=(1-\gamma)^2(1/3\exp(-t_1-t_2-t_3))\\
&+2\gamma(1-\gamma)(1/3\exp(-t_1-t_2))\\
&+\gamma^2(1/3\exp(-t_1-t_2-t_4))\\
CF_{AD|BC}&=(1-\gamma)^2(1/3\exp(-t_1-t_2-t_3))\\
&+2\gamma(1-\gamma)(1/3\exp(-t_1-t_2))\\
&+\gamma^2(1/3\exp(-t_1-t_2-t_4))
\end{align*}
\end{minipage}

\noindent\begin{minipage}{0.35\textwidth}
\centering
\includegraphics[scale=0.1]{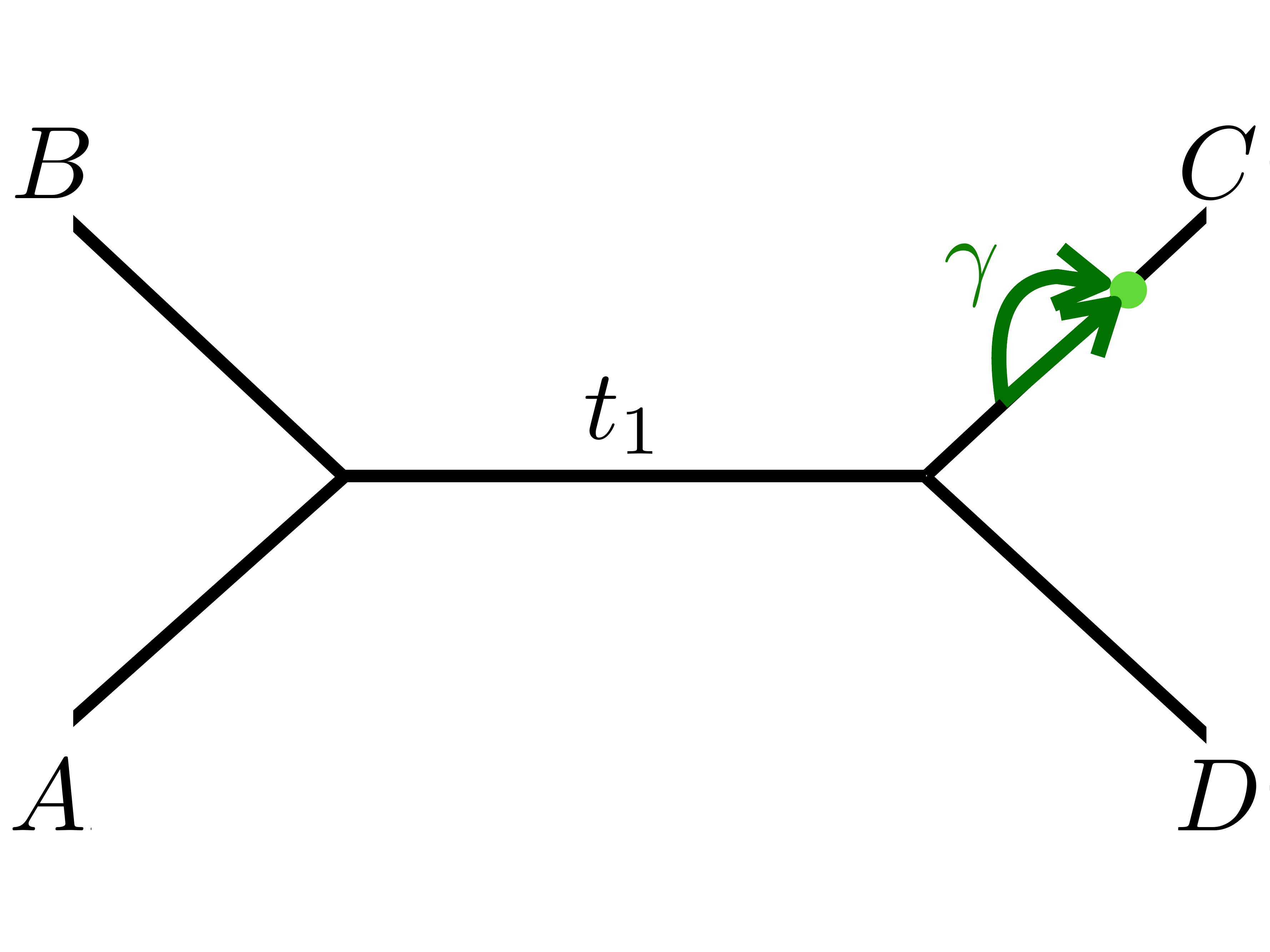}
\\Type 5
\end{minipage}
\begin{minipage}{0.55\textwidth}
\footnotesize
\begin{align*}
CF_{AB|CD}&=1-2/3\exp(-t_1)\\
CF_{AC|BD}&=1/3\exp(-t_1)\\
CF_{AD|BC}&=1/3\exp(-t_1)
\end{align*}
\end{minipage}
\caption{Five different semi-directed level-1 4-taxon networks with one
  hybridization event, up to tip re-labelling.}
\label{5quartets}
\end{figure}

Note that there seems to be one type missing (type 1 with the
direction of the hybrid edge flipped). This network, however, has the
same theoretical CF formulas as the Type 2 network.

\subsubsection{Equivalence to quarnet types in \cite{Huber2018}}
\label{equivalence}
Note that our five quarnet types are related to the four quarnet types
defined in \cite{Huber2018} (their Figure 4).

In their work, \cite{Huber2018} define four types of quarnets:
\begin{itemize}
\item their type 1 corresponds to a quartet, which is equivalent to
  our type 5,
\item their type 2 is the undirected version of our type 1 and type 2,
\item their type 3 is not any of our types as we are restricting to
  the case of only one hybridization,
\item their type 4 corresponds to our type 3.
\end{itemize}
We have one extra type (type 4), since in \cite{Huber2018}, the authors
suppress any parallel edges and thus, our type 4 corresponds to a
quartet in \cite{Huber2018}.





\section{Figures and tables for proof of Theorem \ref{topId}}
\label{figstabs}

  \begin{figure}[h!]
    \centering
    \begin{minipage}{.5\textwidth}
      \centering
      \includegraphics[scale=.15]{k3-net}
    \end{minipage}%
    \begin{minipage}{.5\textwidth}
      \centering
      \includegraphics[scale=.15]{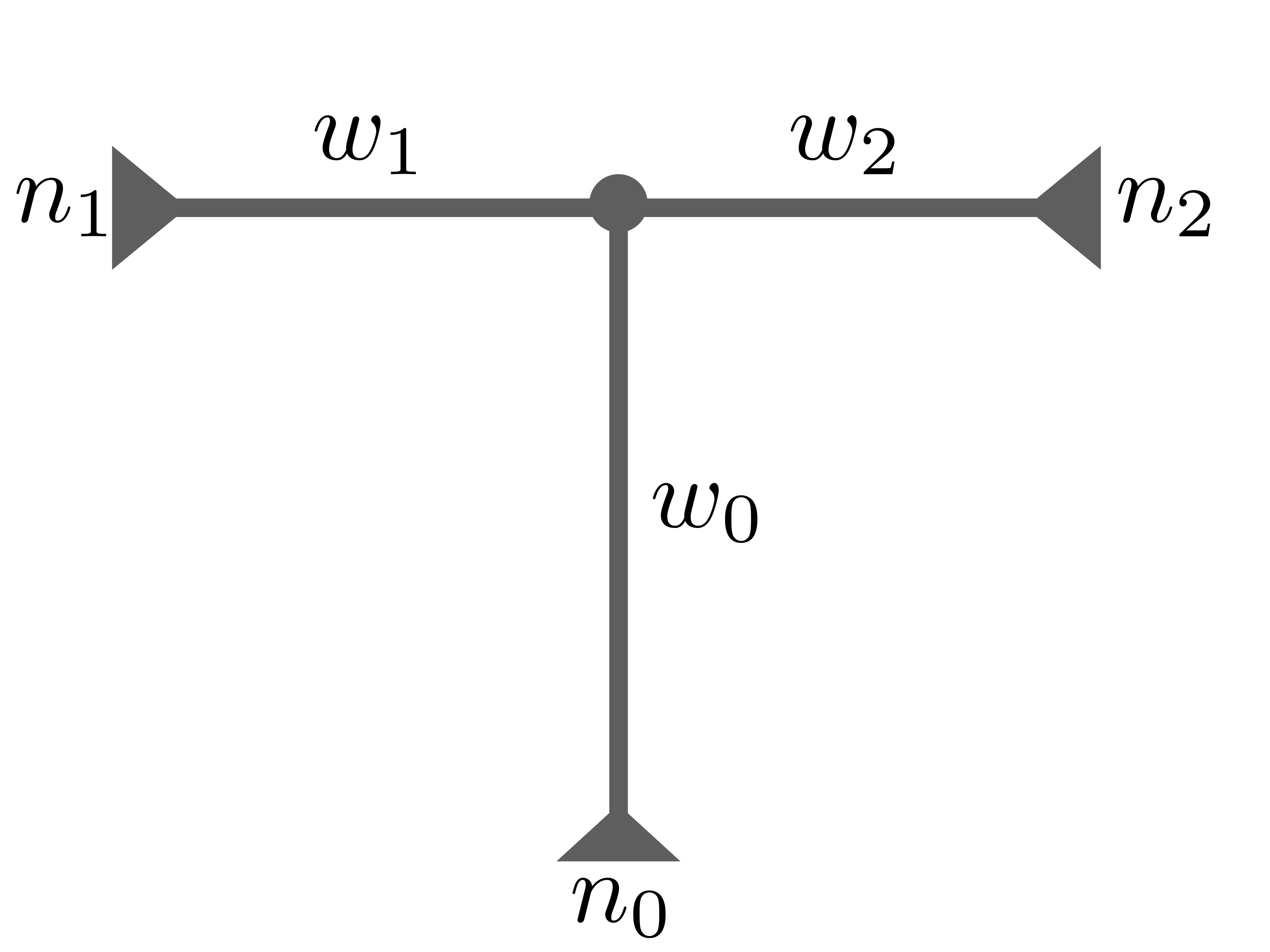}
    \end{minipage}
    \caption{3-cycle network and its corresponding major tree (see
      Lemma \ref{h1} for justification on comparison to the major
      tree). We find solutions to the equations
      $CF(\mathcal{N},\bs{z},\bs{\gamma})=CF(\mathcal{T},\bs{w})$ that
      would imply that the same set of CFs could be produced by both
      the network $\mathcal{N}$ and the tree $\mathcal{T}$. Here
      $z_i=\exp(-t_i)$ for branch length $t_i$ in $\mathcal{N}$, and
      $w_i=\exp(-t'_i)$ for branch length $t'_i$ in $\mathcal{T}$.}
    \label{k3nettree}
  \end{figure}

  \begin{table}[ht!]
    \caption{Systems of CF polynomial equations for the case of
      $k_i=3$. Here, $n=(n_0,n_1,n_2)$ in Figure \ref{k3nettree}, and
      ``Type'' corresponds to the type of quarnet in Figure
      \ref{5quartets}.}
    \label{k3table}
    \centering
    \begin{tabular}{llll}
      \toprule
      $ n $ & Type & $CF(\mathcal{N},\bs{z},\bs{\gamma})$  &
                                                             $CF(\mathcal{T},\bs{w})$ \\
      \midrule
      $ (0,2,2) $ & Tree & $ 1-\frac{2}{3} z_{1}z_{1,2}z_{2} $ & $1-\frac{2}{3}w_{2}w_{1}$
      \\
            && $\frac{1}{3} z_{1}z_{1,2}z_{2} $ & $\frac{1}{3}w_{2}w_{1}$ \\
            && $\frac{1}{3} z_{1}z_{1,2}z_{2} $ & $\frac{1}{3}w_{2}w_{1}$ \\
      \midrule
      $ (1,1,2) $ & $ 2 $ & $ (1- \gamma )\left(1-\frac{2}{3} z_{2}z_{1,2} \right) + \gamma \left(1-\frac{2}{3} z_{2} \right) $ & $ 1-\frac{2}{3} w_{2} $ \\ 
            & & $ (1- \gamma )\frac{1}{3} z_{2}z_{1,2} + \gamma \frac{1}{3}
        z_{2} $ & $ \frac{1}{3} w_{2} $ \\
            & & $ (1- \gamma )\frac{1}{3} z_{2}z_{1,2} + \gamma \frac{1}{3}
        z_{2} $ & $ \frac{1}{3} w_{2} $ \\
      \midrule
      $ (1,2,1) $ & $ 2 $ & $ (1- \gamma )\left(1-\frac{2}{3} z_{1} \right) + \gamma \left(1-\frac{2}{3} z_{1}z_{1,2} \right) $ & $ 1-\frac{2}{3} w_{1} $ \\ 
    & & $ (1- \gamma )\frac{1}{3} z_{1} + \gamma \frac{1}{3}
        z_{1}z_{1,2} $ & $ \frac{1}{3} w_{1} $ \\
    & & $ (1- \gamma )\frac{1}{3} z_{1} + \gamma \frac{1}{3}
        z_{1}z_{1,2} $ & $ \frac{1}{3} w_{1} $ \\
      \midrule
      $ (2,0,2) $ & $ 4 $ & $ (1- \gamma )^2\left(1-\frac{2}{3} z_{2}z_{0}z_{1,2}z_{0,1} \right) +2 \gamma (1- \gamma )\left(1-\frac{2}{3} z_{2}z_{0} \right) + \gamma ^2\left(1-\frac{2}{3} z_{2}z_{0}z_{0,2} \right) $ & $ 1-\frac{2}{3} w_{2}w_{0} $ \\ 
            & & $ (1- \gamma )^2\frac{1}{3} z_{2}z_{0}z_{1,2}z_{0,1}
                +2 \gamma (1- \gamma )\frac{1}{3} z_{2}z_{0} + \gamma
                ^2\frac{1}{3} z_{2}z_{0}z_{0,2} $ & $ \frac{1}{3}
                                                     w_{2}w_{0} $ \\
            & & $ (1- \gamma )^2\frac{1}{3} z_{2}z_{0}z_{1,2}z_{0,1}
                +2 \gamma (1- \gamma )\frac{1}{3} z_{2}z_{0} + \gamma
                ^2\frac{1}{3} z_{2}z_{0}z_{0,2} $ & $ \frac{1}{3}
                                                     w_{2}w_{0} $ \\
      \midrule
      $ (2,1,1) $ & $ 1 $ & $ (1- \gamma )^2\left(1-\frac{2}{3} z_{0}z_{0,1} \right) +2 \gamma (1- \gamma )\left(1- z_{0} +\frac{1}{3} z_{0}z_{1,2} \right) + \gamma ^2\left(1-\frac{2}{3} z_{0}z_{0,2} \right) $ & $ 1-\frac{2}{3} w_{0} $ \\ 
            & & $ (1- \gamma )^2\frac{1}{3} z_{0}z_{0,1} + \gamma (1-
                \gamma ) z_{0} \left(1-\frac{1}{3} z_{1,2} \right) +
                \gamma ^2\frac{1}{3} z_{0}z_{0,2} $ & $ \frac{1}{3}
                                                       w_{0} $ \\
            & & $ (1- \gamma )^2\frac{1}{3} z_{0}z_{0,1} + \gamma (1-
                \gamma ) z_{0} \left(1-\frac{1}{3} z_{1,2} \right) +
                \gamma ^2\frac{1}{3} z_{0}z_{0,2} $ & $ \frac{1}{3}
                                                       w_{0} $ \\
      \midrule
      $ (2,2,0) $ & $ 4 $ & $ (1- \gamma )^2\left(1-\frac{2}{3} z_{1}z_{0}z_{0,1} \right) +2 \gamma (1- \gamma )\left(1-\frac{2}{3} z_{1}z_{0} \right) + \gamma ^2\left(1-\frac{2}{3} z_{1}z_{0}z_{1,2}z_{0,2} \right) $ & $ 1-\frac{2}{3} w_{1}w_{0} $ \\ 
            & & $ (1- \gamma )^2\frac{1}{3} z_{1}z_{0}z_{0,1} +2
                \gamma (1- \gamma )\frac{1}{3} z_{1}z_{0} + \gamma
                ^2\frac{1}{3} z_{1}z_{0}z_{1,2}z_{0,2} $ & $
                                                            \frac{1}{3}
                                                            w_{1}w_{0}
                                                            $ \\
            & & $ (1- \gamma )^2\frac{1}{3} z_{1}z_{0}z_{0,1} +2
                \gamma (1- \gamma )\frac{1}{3} z_{1}z_{0} + \gamma
                ^2\frac{1}{3} z_{1}z_{0}z_{1,2}z_{0,2} $ & $
                                                            \frac{1}{3}
                                                            w_{1}w_{0}
                                                            $ \\
      \bottomrule
    \end{tabular}
    \end{table}

    \begin{figure}
      \centering
    \begin{minipage}{.5\textwidth}
      \centering
      \includegraphics[scale=.15]{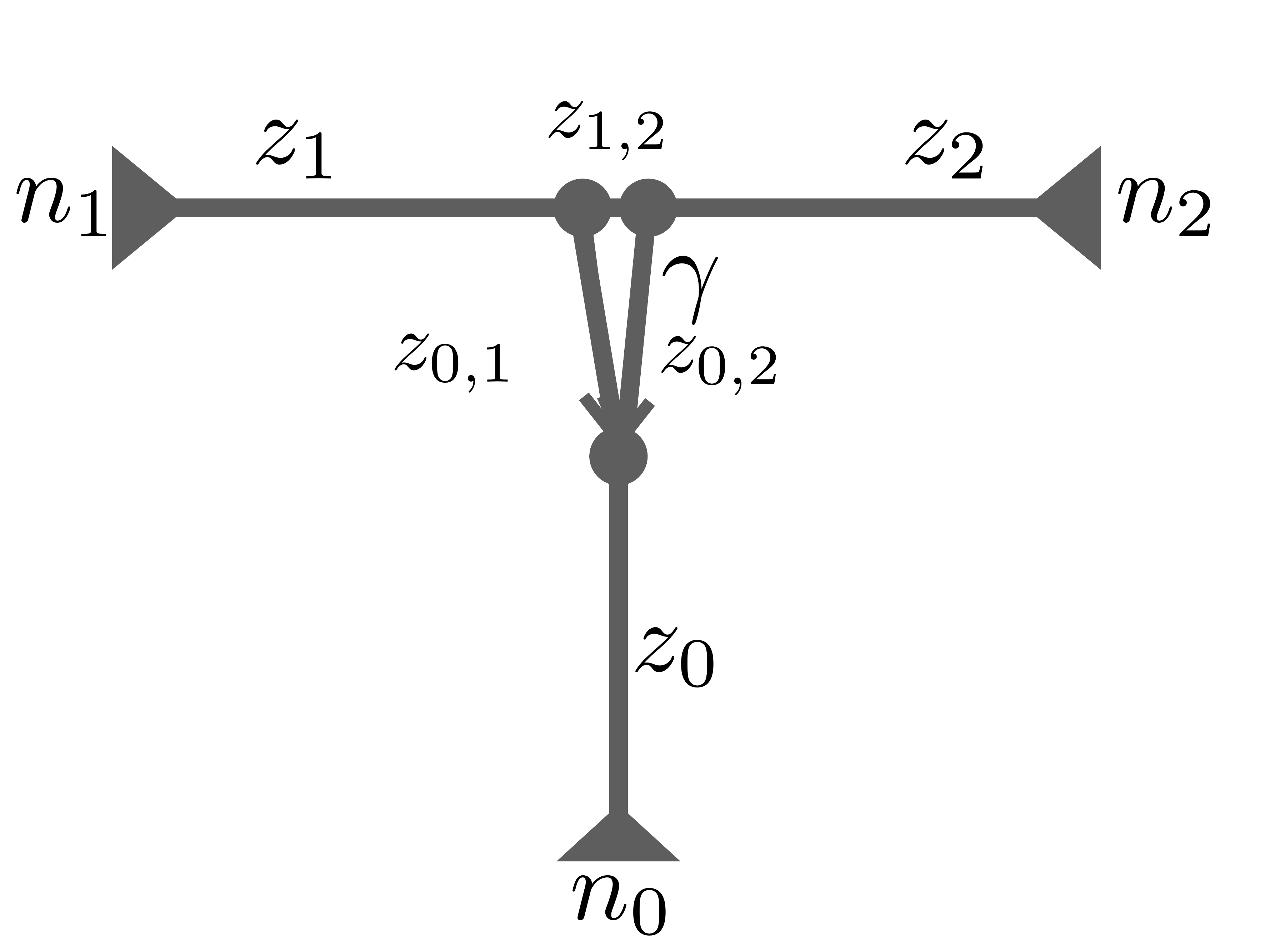} \\ 
      \begin{align*}
        z_{1,2}&=1 \\
        (t_{1,2}&=0)
      \end{align*}
    \end{minipage}%
    \begin{minipage}{.5\textwidth}
      \centering
      \includegraphics[scale=.15]{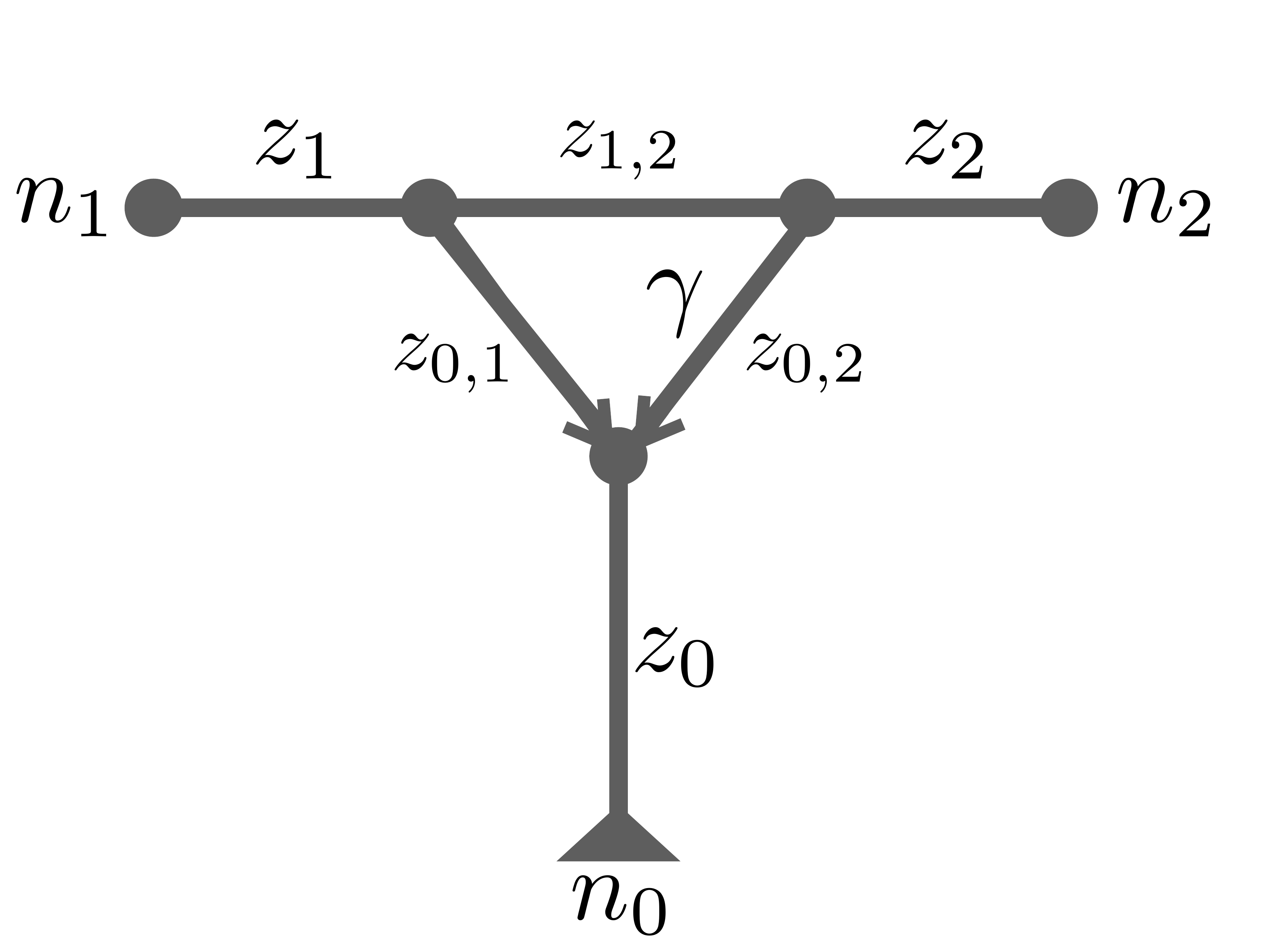} \\
      \begin{align*}
        z_1  =0&, z_2 =0\\
        (t_1  =\infty&, t_2 =\infty)
      \end{align*}
    \end{minipage}

    \begin{minipage}{.5\textwidth}
      \centering
      \includegraphics[scale=.15]{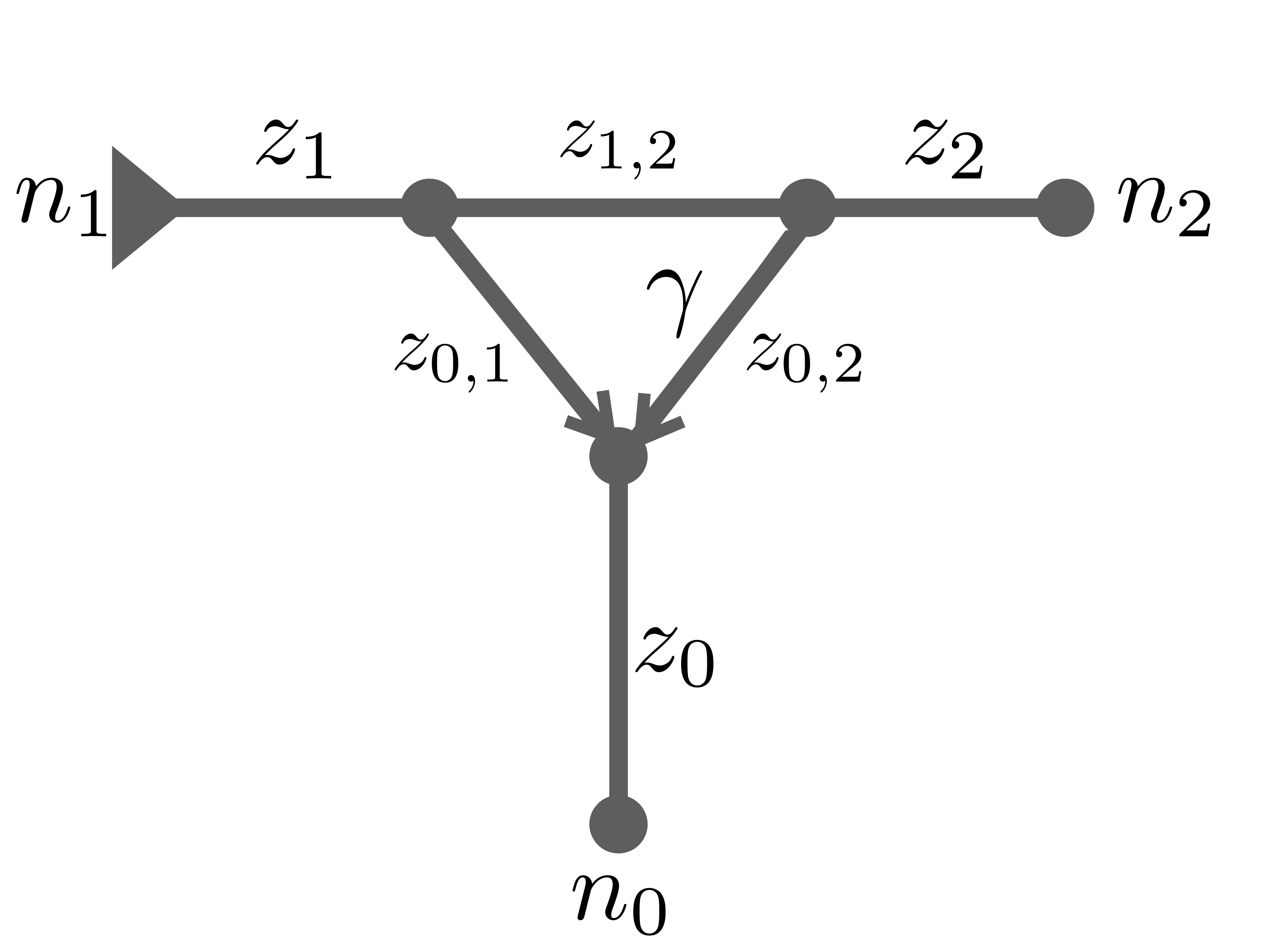} \\
      \begin{align*}
        z_0  =0&, z_2=0\\
        (t_0  =\infty&, t_2=\infty)
      \end{align*}
    \end{minipage}%
    \begin{minipage}{.5\textwidth}
      \centering
      \includegraphics[scale=.15]{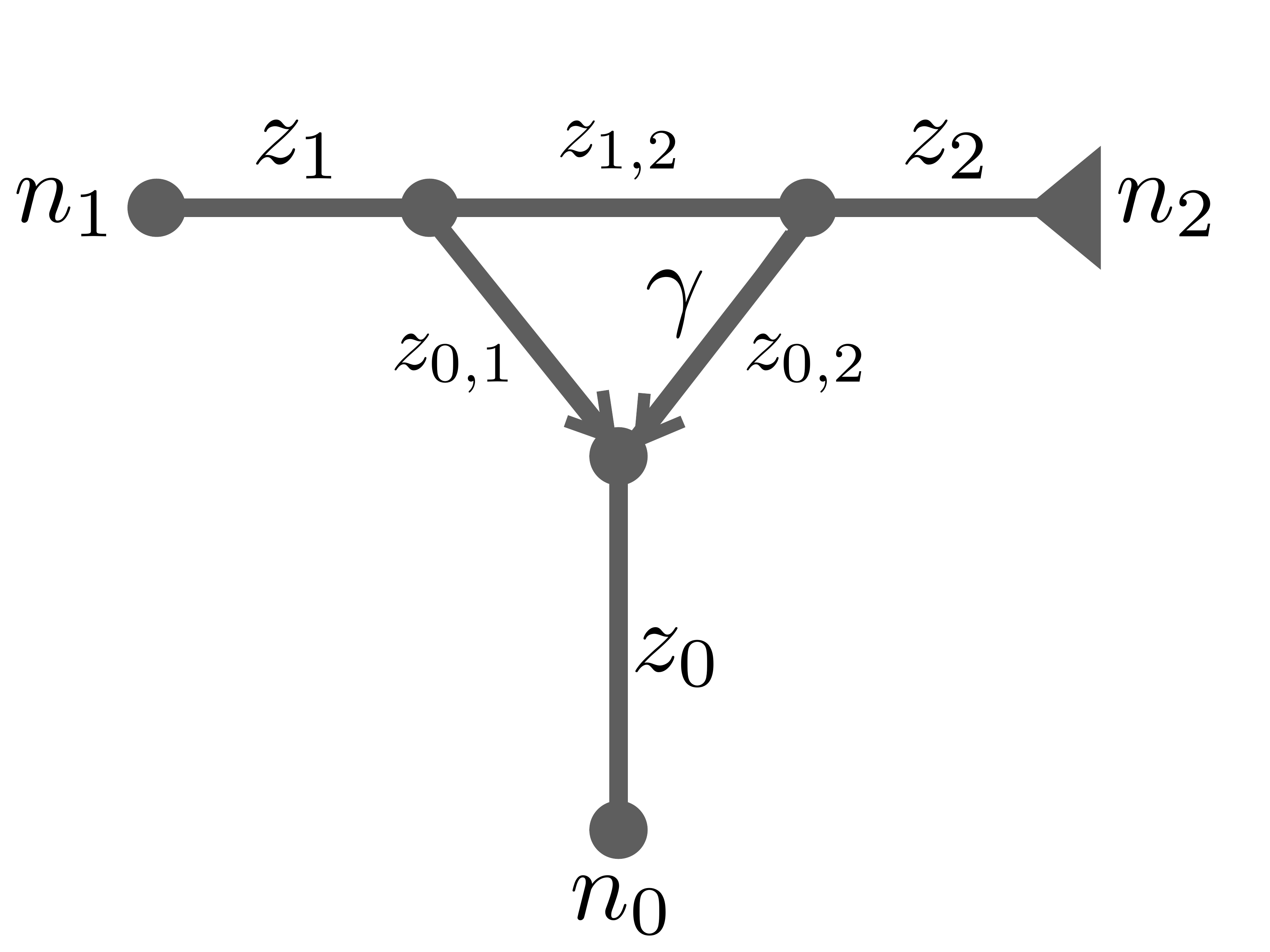} \\
      \begin{align*}
        z_0 =0&, z_1=0\\
        (t_0 =\infty&, t_1=\infty)
      \end{align*}
    \end{minipage}

    \begin{minipage}{.5\textwidth}
      \centering
      \includegraphics[scale=.15]{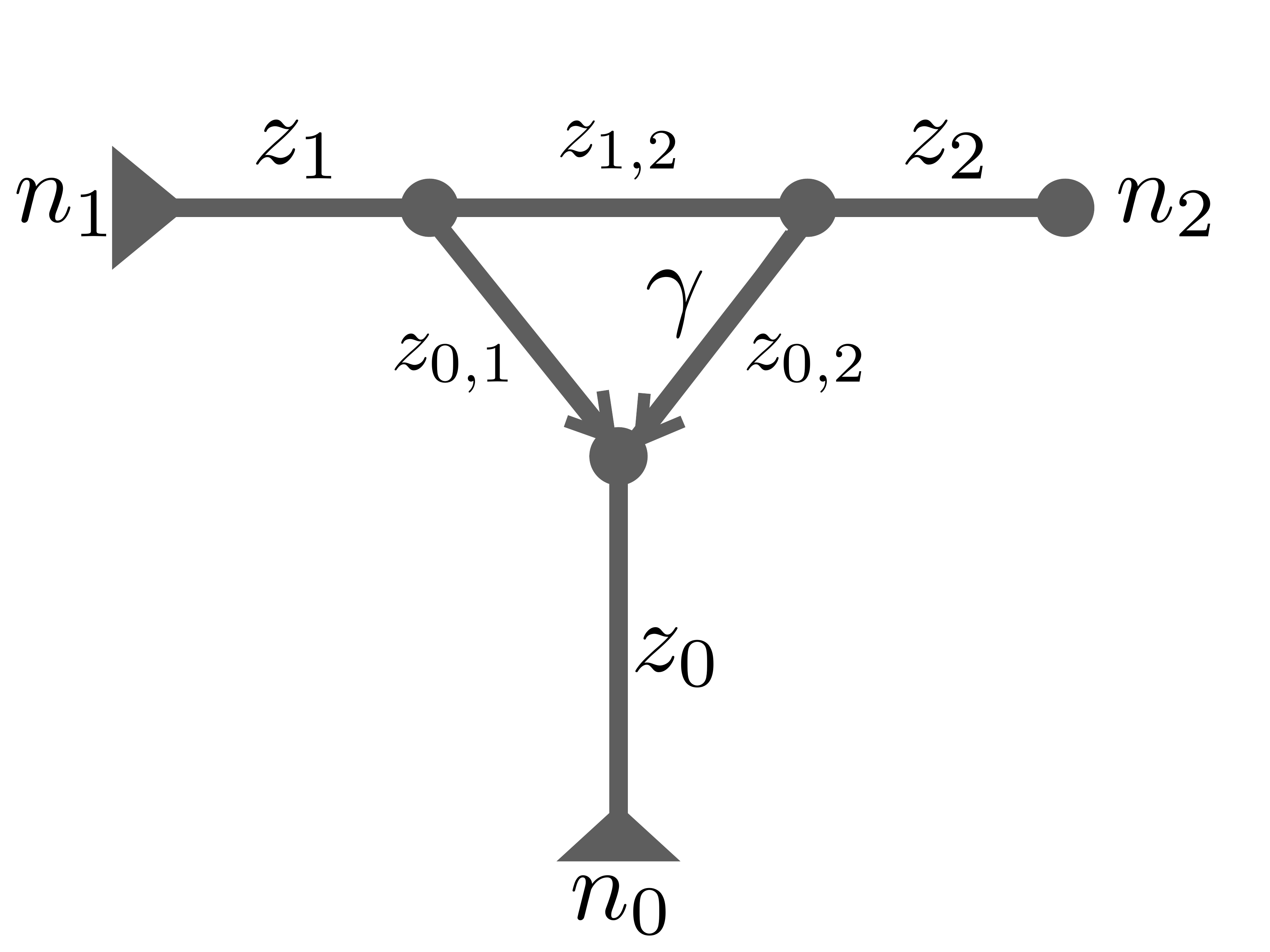} \\
      \begin{align*}
      z_2  &=0 (t_2=\infty) \\
        (1-\gamma)(z_{0,1}-1)&=\gamma(z_{0,2}+z_{1,2}-2)
      \end{align*}
    \end{minipage}%
    \begin{minipage}{.5\textwidth}
      \centering
      \includegraphics[scale=.15]{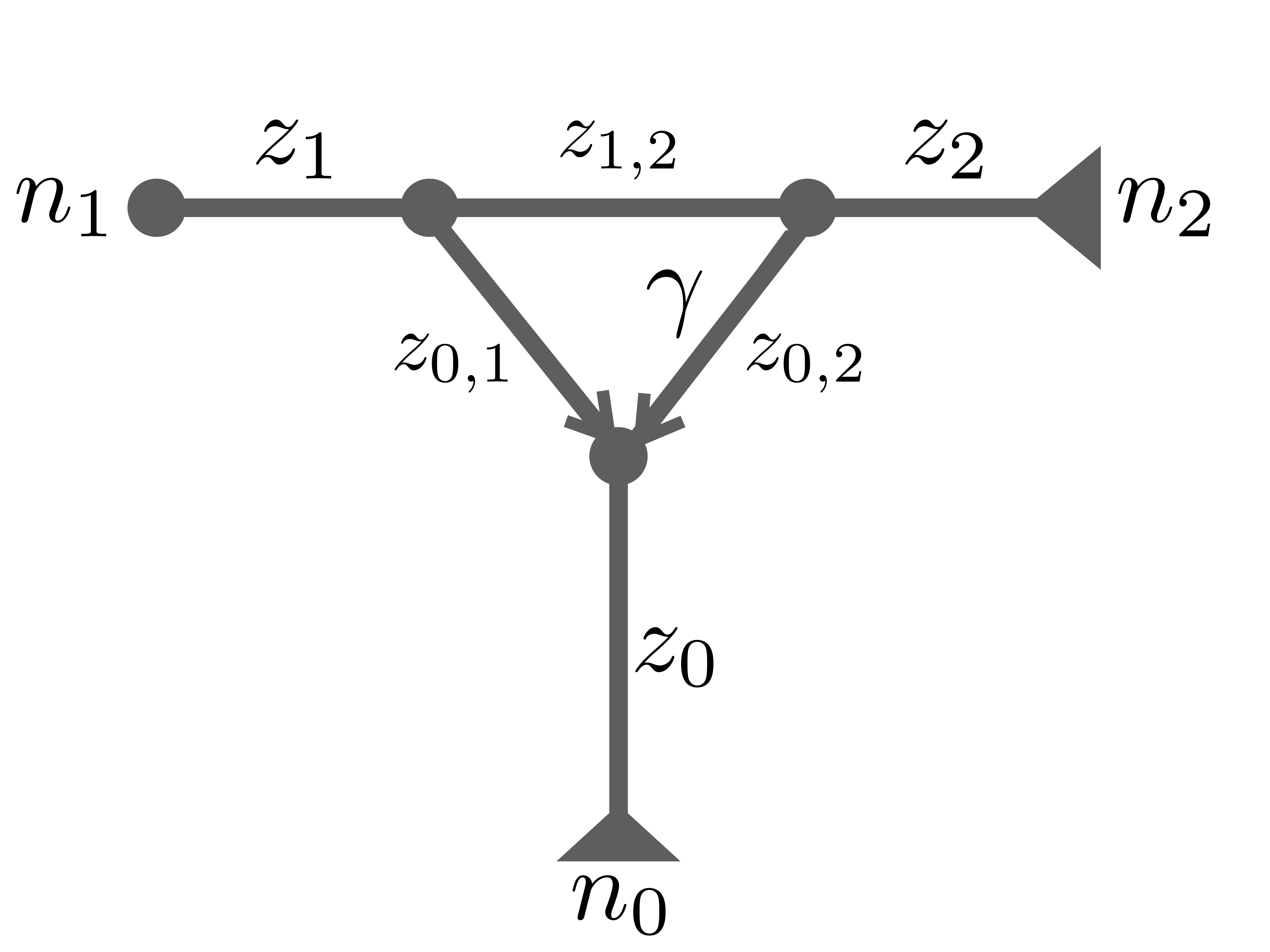} \\
      \begin{align*}
      z_1 &=0 (t_1=\infty) \\
      \gamma (z_{0,2}-1) &= (1-\gamma)(z_{1,2}+z_{0,1}-2)
      \end{align*}
    \end{minipage}%
    \caption{Non-detectable 3-cycles: Cases when the hybridization on
      a 3-cycle network is not detectable for $\gamma \in (0,1)$. Note
      that infinite branch lengths refer to the case of no ILS, so for
      example, on the second network on the first row, all individuals
      from the subgraphs labeled $n_1$ and $n_2$ would have coalesced
      on an infinitely long branch, and thus, the subgraph triangles
      are replaced by nodes which represent $n_1=n_2=1$. These cases
      with infinitely long branches violate the A1-A2 assumptions.}
    \label{k3path}
  \end{figure}

    \begin{figure}[h!]
    \centering
    \begin{minipage}{.5\textwidth}
      \centering
      \includegraphics[scale=.15]{k4-net}
    \end{minipage}%
    \begin{minipage}{.5\textwidth}
      \centering
      \includegraphics[scale=.15]{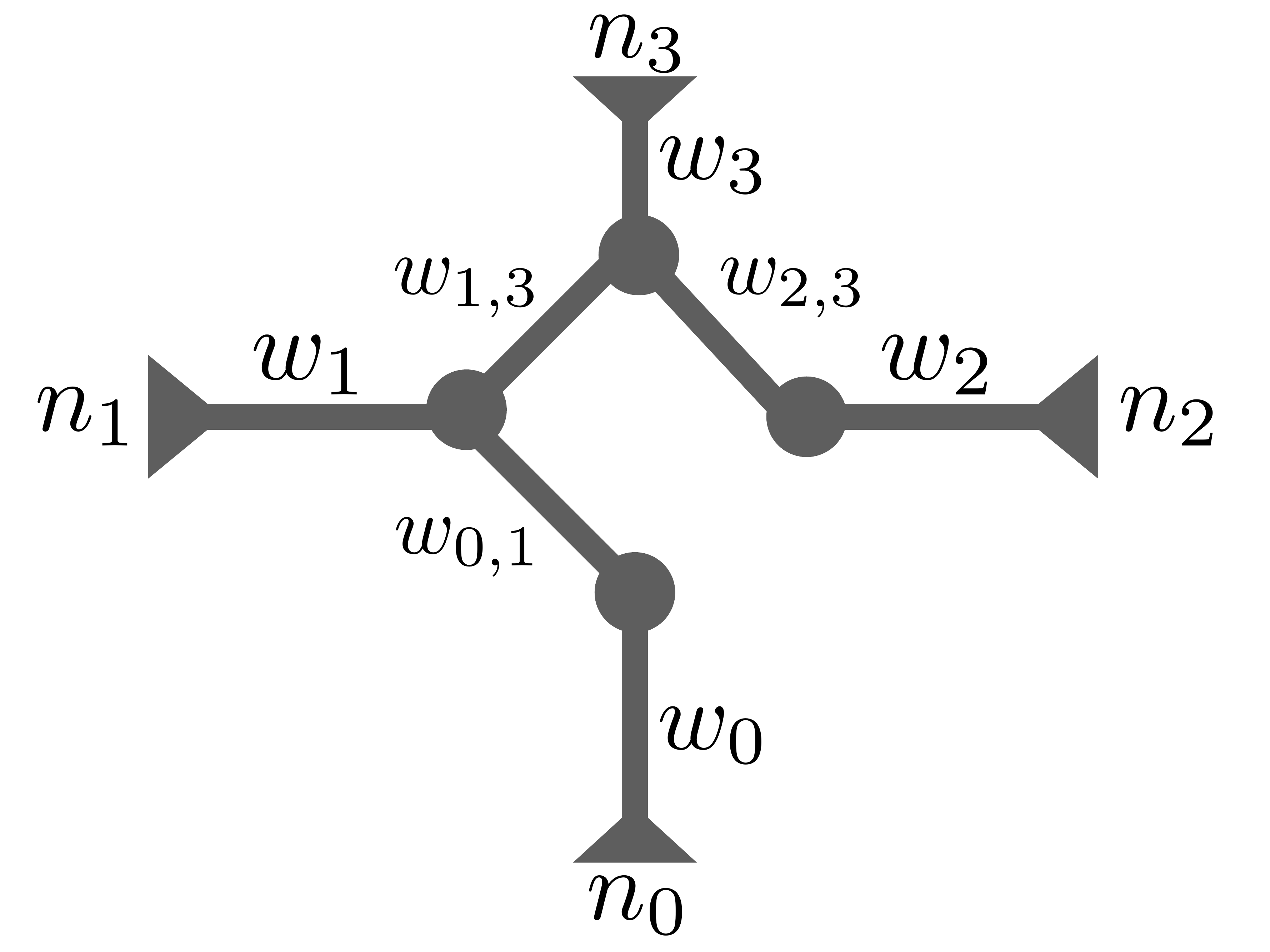}
    \end{minipage}
    \caption{4-cycle network and its corresponding tree representation
      (see Lemma \ref{h1} for justification on comparison to a
      tree). We find solutions to the equations
      $CF(\mathcal{N},\bs{z},\bs{\gamma})=CF(\mathcal{T},\bs{w})$ that
      would imply that the same set of CFs could be produced by both
      the network $\mathcal{N}$ and the tree $\mathcal{T}$. Note that
      we do not eliminate degree-2 nodes in the tree $\mathcal{T}$
      simply for ease of notation and comparison to
      $\mathcal{N}$. Here $z_i=\exp(-t_i)$ for branch length $t_i$ in
      $\mathcal{N}$, and $w_i=\exp(-t'_i)$ for branch length $t'_i$ in
      $\mathcal{T}$.}
    \label{k4nettree}
  \end{figure}

   \begin{figure}[h!]
    \centering
    \begin{minipage}{.5\textwidth}
      \centering
      \includegraphics[scale=.15]{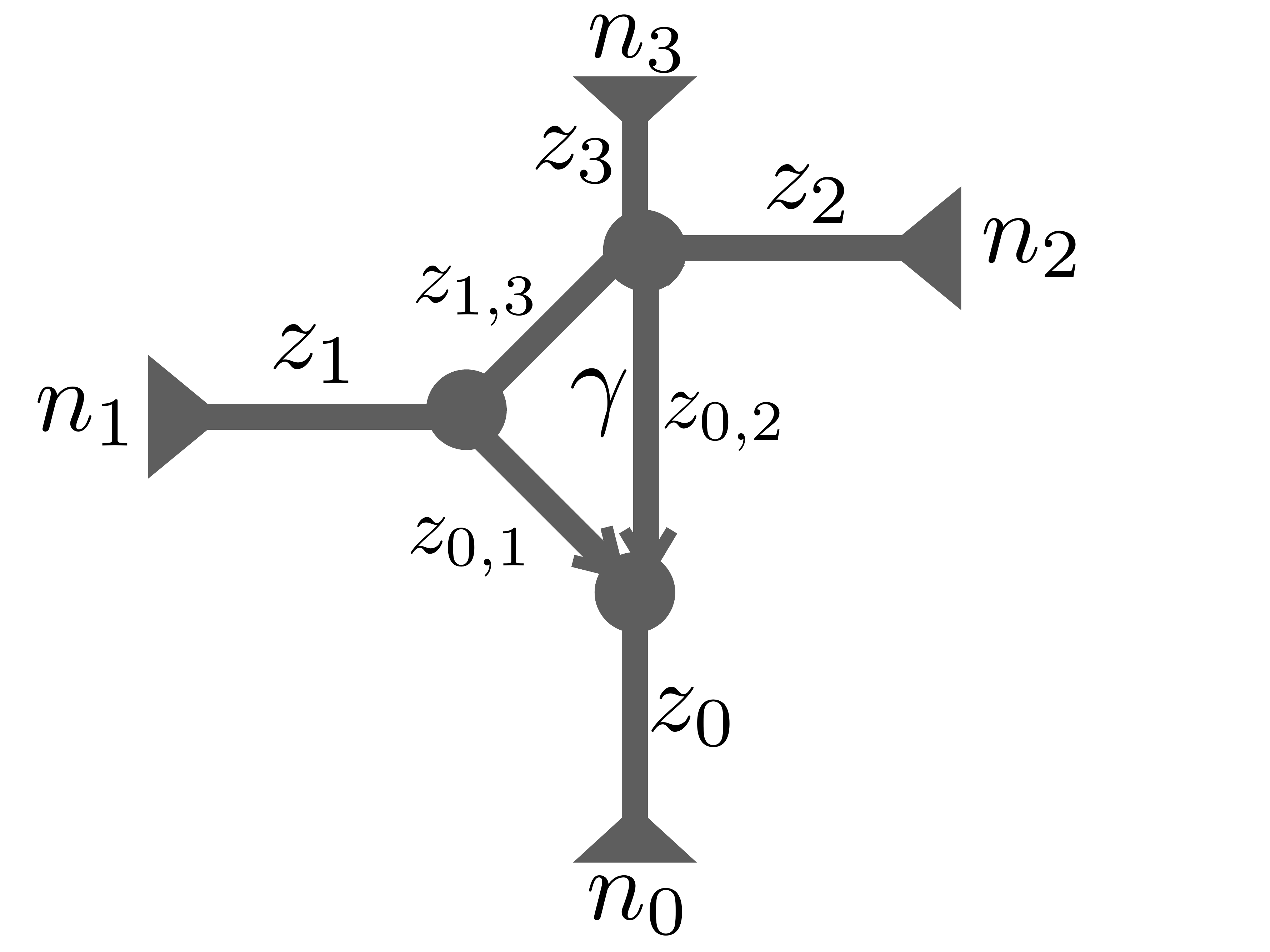}
    \end{minipage}%
    \begin{minipage}{.5\textwidth}
      \centering
      \includegraphics[scale=.15]{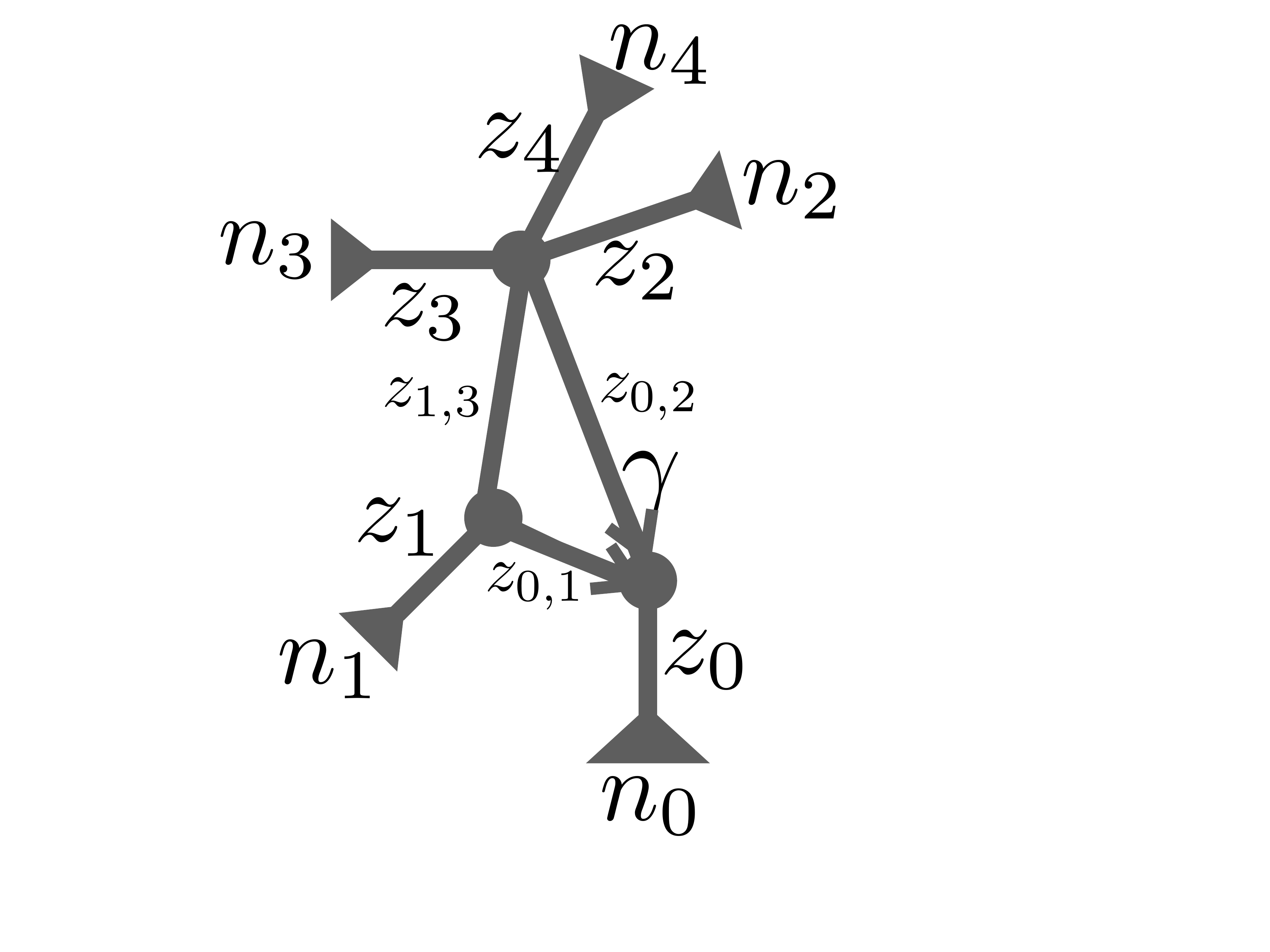}
    \end{minipage}
    \caption{Left: Non-detectable 4-cycle network: Cases when the
      hybridization on a 4-cycle network is not detectable for
      $\gamma \in (0,1)$ correspond to the case when $t_{2,3}=0$ in
      Figure \ref{k4nettree} and thus, the 4-cycle becomes a
      3-cycle. Right: Non-detectable $k$-cycle network ($k \geq 5$):
      Cases when the hybridization on a $k$-cycle network for
      $k \geq 5$ is not detectable for $\gamma \in (0,1)$ correspond
      to the case when $\tilde{t}=t_{2,4}=0$ in Figure \ref{k5nettree}
      and thus, the $k$-cycle becomes a 3-cycle.}
    \label{k4path}
  \end{figure}

    \begin{table}[ht!]
      \caption{Systems of CF polynomial equations for the case of
        $k_i=4$. Here we do not repeat the minor CF equations for the
        cases when there are two equal. Also, $n=(n_0,n_1,n_2,n_3)$ as
        in Figure \ref{k4nettree}, and ``Type'' corresponds to the
        type of quarnet in Figure \ref{5quartets}.}
    \label{k4table}
    \centering
    \begin{tabular}{llll}
      \toprule
      $ n $ & Type & $CF(\mathcal{N},\bs{z},\bs{\gamma})$  &
                                                             $CF(\mathcal{T},\bs{w})$ \\
      \midrule
$ (0,0,2,2) $ &  Tree  & $ 1-\frac{2}{3} z_{2}z_{2,3}z_{3} $ & $
                                                               1-\frac{2}{3}
                                                               w_{2}w_{2,3}w_{3}
                                                               $ \\
    &  & $ \frac{1}{3} z_{2}z_{2,3}z_{3} $ & $ \frac{1}{3}
                                             w_{2}w_{2,3}w_{3} $ \\ \midrule
			$ (0,1,2,1) $ &  Tree  & $ 1-\frac{2}{3}
                                                 z_{2,3}z_{2} $ & $
                                                                  1-\frac{2}{3}
                                                                  w_{2,3}w_{2}
                                                                  $ \\
      && $ \frac{1}{3} z_{2,3}z_{2} $ & $ \frac{1}{3} w_{2,3}w_{2} $ \\      \midrule
			$ (0,1,1,2) $ &  Tree  & $ 1-\frac{2}{3} z_{3}
                                                 $ & $ 1-\frac{2}{3}
                                                     w_{3} $ \\
      && $ \frac{1}{3} z_{3} $ & $ \frac{1}{3} w_{3} $ \\      \midrule
			$ (0,2,2,0) $ &  Tree  & $ 1-\frac{2}{3}
                                                 z_{2}z_{2,3}z_{1,3}z_{1}
                                                 $ & $ 1-\frac{2}{3}
                                                     w_{2}w_{2,3}w_{1,3}w_{1}
                                                     $ \\ 
      && $ \frac{1}{3} z_{2}z_{2,3}z_{1,3}z_{1} $ & $ \frac{1}{3} w_{2}w_{2,3}w_{1,3}w_{1} $ \\ \midrule
			$ (0,2,1,1) $ &  Tree  & $ 1-\frac{2}{3} z_{1}z_{1,3}
                                                 $ & $ 1-\frac{2}{3}
                                                     w_{1,3}w_{1} $ \\
      && $ \frac{1}{3} z_{1}z_{1,3} $ & $ \frac{1}{3} w_{1,3}w_{1} $ \\      \midrule
			$ (0,2,0,2) $ &  Tree  & $ 1-\frac{2}{3}
                                                 z_{3}z_{1,3}z_{1} $ &
                                                                       $ 1-\frac{2}{3} w_{3}w_{1,3}w_{1} $ \\
           & & $ \frac{1}{3} z_{3}z_{1,3}z_{1} $ & $ \frac{1}{3} w_{3}w_{1,3}w_{1} $ \\ \midrule
			$ (1,0,2,1) $ & $ 2 $ & $ (1- \gamma )\left(1-\frac{2}{3} z_{2,3}z_{2} \right) + \gamma \left(1-\frac{2}{3} z_{2} \right) $ & $ 1-\frac{2}{3} w_{2,3}w_{2} $ \\ 
			& & $ (1- \gamma )\frac{1}{3} z_{2,3}z_{2} + \gamma \frac{1}{3} z_{2} $ & $ \frac{1}{3} w_{2,3}w_{2} $ \\ \midrule
			$ (1,0,1,2) $ & $ 2 $ & $ (1- \gamma )\left(1-\frac{2}{3} z_{3} \right) + \gamma \left(1-\frac{2}{3} z_{2,3}z_{3} \right) $ & $ 1-\frac{2}{3} w_{3} $ \\ 
			& & $ (1- \gamma )\frac{1}{3} z_{3} + \gamma \frac{1}{3} z_{2,3}z_{3} $ & $ \frac{1}{3} w_{3} $ \\ \midrule
			$ (1,1,2,0) $ & $ 2 $ & $ (1- \gamma )\left(1-\frac{2}{3} z_{1,3}z_{2,3}z_2 \right) + \gamma \left(1-\frac{2}{3} z_{2} \right) $ & $ 1-\frac{2}{3} w_{1,3}w_{2,3}w_{2} $ \\ 
			& & $ (1- \gamma )\frac{1}{3} z_{1,3}z_{2,3}z_2 + \gamma \frac{1}{3} z_{2} $ & $ \frac{1}{3} w_{1,3}w_{2,3}w_{2} $ \\ \midrule
			$ (1,1,1,1) $ & $ 3 $ & $ (1- \gamma
                                                )\left(1-\frac{2}{3}
                                                z_{1,3} \right) +
                                                \gamma \frac{1}{3}
                                                z_{2,3} $ & $
                                                            1-\frac{2}{3}
                                                            w_{1,3} $
      \\
      			& & $ (1- \gamma )\frac{1}{3} z_{1,3} + \gamma \left(1-\frac{2}{3} z_{2,3} \right) $ & $ \frac{1}{3} w_{1,3} $ \\ 
			& & $ (1- \gamma )\frac{1}{3} z_{1,3} + \gamma \frac{1}{3} z_{2,3} $ & $ \frac{1}{3} w_{1,3} $ \\
      \bottomrule
    \end{tabular}
    \end{table}

\begin{landscape}
    \begin{table}[ht!]
      \caption{Systems of CF polynomial equations for the case of
        $k_i=4$. Here we do not repeat the minor CF equations for the
        cases when there are two equal. Also, $n=(n_0,n_1,n_2,n_3)$ as
        in Figure \ref{k4nettree}, and ``Type'' corresponds to the
        type of quarnet in Figure \ref{5quartets}.}
      \label{k4table-2}
      \centering
      \begin{tabular}{llll}
        \toprule
      $ n $ & Type & $CF(\mathcal{N},\bs{z},\bs{\gamma})$  &
                                                             $CF(\mathcal{T},\bs{w})$ \\
      \midrule
			$ (1,1,0,2) $ & $ 2 $ & $ (1- \gamma )\left(1-\frac{2}{3} z_{1,3}z_{3} \right) + \gamma \left(1-\frac{2}{3} z_{3} \right) $ & $ 1-\frac{2}{3} w_{1,3}w_{3} $ \\ 
			& & $ (1- \gamma )\frac{1}{3} z_{1,3}z_{3} + \gamma \frac{1}{3} z_{3} $ & $ \frac{1}{3} w_{1,3}w_{3} $ \\ \midrule
			$ (1,2,1,0) $ & $ 2 $ & $ (1- \gamma )\left(1-\frac{2}{3} z_{1} \right) + \gamma \left(1-\frac{2}{3} z_{2,3}z_{1,3}z_{1} \right) $ & $ 1-\frac{2}{3} w_{1} $ \\ 
			& & $ (1- \gamma )\frac{1}{3} z_{1} + \gamma \frac{1}{3} z_{2,3}z_{1,3}z_{1} $ & $ \frac{1}{3} w_{1} $ \\ \midrule
			$ (1,2,0,1) $ & $ 2 $ & $ (1- \gamma )\left(1-\frac{2}{3} z_{1} \right) + \gamma \left(1-\frac{2}{3} z_{1,3}z_{1} \right) $ & $ 1-\frac{2}{3} w_{1} $ \\ 
			& & $ (1- \gamma )\frac{1}{3} z_{1} + \gamma \frac{1}{3} z_{1,3}z_{1} $ & $ \frac{1}{3} w_{1} $ \\ \midrule
			$ (2,0,2,0) $ & $ 4 $ & $ (1- \gamma )^2\left(1-\frac{2}{3} z_{2}z_{0}z_{0,1}z_{1,3}z_{2,3} \right) +2 \gamma (1- \gamma )\left(1-\frac{2}{3} z_{2}z_{0} \right) + \gamma ^2\left(1-\frac{2}{3} z_{2}z_{0}z_{0,2} \right) $ & $ 1-\frac{2}{3} w_{0}w_{0,1}w_{1,3}w_{2,3}w_{2} $ \\ 
			& & $ (1- \gamma )^2\frac{1}{3} z_{2}z_{0}z_{0,1}z_{1,3}z_{2,3} +2 \gamma (1- \gamma )\frac{1}{3} z_{2}z_{0} + \gamma ^2\frac{1}{3} z_{2}z_{0}z_{0,2} $ & $ \frac{1}{3} w_{0}w_{0,1}w_{1,3}w_{2,3}w_{2} $ \\ \midrule
			$ (2,0,1,1) $ & $ 1 $ & $ (1- \gamma )^2\left(1-\frac{2}{3} z_{0}z_{1,3}z_{0,1} \right) +2 \gamma (1- \gamma )\left(1- z_{0} +\frac{1}{3} z_{0}z_{2,3} \right) + \gamma ^2\left(1-\frac{2}{3} z_{0}z_{0,2} \right) $ & $ 1-\frac{2}{3} w_{1,3}w_{0,1}w_{0} $ \\ 
			& & $ (1- \gamma )^2\frac{1}{3} z_{0}z_{1,3}z_{0,1} + \gamma (1- \gamma ) z_{0} \left(1-\frac{1}{3} z_{2,3} \right) + \gamma ^2\frac{1}{3} z_{0}z_{0,2} $ & $ \frac{1}{3} w_{1,3}w_{0,1}w_{0} $ \\ \midrule
			$ (2,0,0,2) $ & $ 4 $ & $ (1- \gamma )^2\left(1-\frac{2}{3} z_{3}z_{0}z_{1,3}z_{0,1} \right) +2 \gamma (1- \gamma )\left(1-\frac{2}{3} z_{3}z_{0} \right) + \gamma ^2\left(1-\frac{2}{3} z_{3}z_{0}z_{2,3}z_{0,2} \right) $ & $ 1-\frac{2}{3} w_{3}w_{1,3}w_{0,1}w_{0} $ \\ 
			& & $ (1- \gamma )^2\frac{1}{3} z_{3}z_{0}z_{1,3}z_{0,1} +2 \gamma (1- \gamma )\frac{1}{3} z_{3}z_{0} + \gamma ^2\frac{1}{3} z_{3}z_{0}z_{2,3}z_{0,2} $ & $ \frac{1}{3} w_{3}w_{1,3}w_{0,1}w_{0} $ \\ \midrule
			$ (2,1,1,0) $ & $ 1 $ & $ (1- \gamma )^2\left(1-\frac{2}{3} z_{0}z_{0,1} \right) +2 \gamma (1- \gamma )\left(1- z_{0} +\frac{1}{3} z_{0}z_{2,3}z_{1,3} \right) + \gamma ^2\left(1-\frac{2}{3} z_{0}z_{0,2} \right) $ & $ 1-\frac{2}{3} w_{0,1}w_{0} $ \\ 
			& & $ (1- \gamma )^2\frac{1}{3} z_{0}z_{0,1} + \gamma (1- \gamma ) z_{0} \left(1-\frac{1}{3} z_{2,3}z_{1,3} \right) + \gamma ^2\frac{1}{3} z_{0}z_{0,2} $ & $ \frac{1}{3} w_{0,1}w_{0} $ \\ \midrule
			$ (2,1,0,1) $ & $ 1 $ & $ (1- \gamma )^2\left(1-\frac{2}{3} z_{0}z_{0,1} \right) +2 \gamma (1- \gamma )\left(1- z_{0} +\frac{1}{3} z_{0}z_{1,3} \right) + \gamma ^2\left(1-\frac{2}{3} z_{0}z_{0,2}z_{2,3} \right) $ & $ 1-\frac{2}{3} w_{0,1}w_{0} $ \\ 
			& & $ (1- \gamma )^2\frac{1}{3} z_{0}z_{0,1} + \gamma (1- \gamma ) z_{0} \left(1-\frac{1}{3} z_{1,3} \right) + \gamma ^2\frac{1}{3} z_{0}z_{0,2}z_{2,3} $ & $ \frac{1}{3} w_{0,1}w_{0} $ \\ \midrule
			$ (2,2,0,0) $ & $ 4 $ & $ (1- \gamma )^2\left(1-\frac{2}{3} z_{1}z_{0}z_{0,1} \right) +2 \gamma (1- \gamma )\left(1-\frac{2}{3} z_{1}z_{0} \right) + \gamma ^2\left(1-\frac{2}{3} z_{1}z_{0}z_{0,2}z_{2,3}z_{1,3} \right) $ & $ 1-\frac{2}{3} w_{1}w_{0,1}w_{0} $ \\ 
			& & $ (1- \gamma )^2\frac{1}{3} z_{1}z_{0}z_{0,1} +2 \gamma (1- \gamma )\frac{1}{3} z_{1}z_{0} + \gamma ^2\frac{1}{3} z_{1}z_{0}z_{0,2}z_{2,3}z_{1,3} $ & $ \frac{1}{3} w_{1}w_{0,1}w_{0} $ \\
      \bottomrule
    \end{tabular}
    \end{table}
  \end{landscape}

    \begin{figure}[h!]
    \centering
    \begin{minipage}{.5\textwidth}
      \centering
      \includegraphics[scale=.15]{k5-net}
    \end{minipage}%
    \begin{minipage}{.5\textwidth}
      \centering
      \includegraphics[scale=.15]{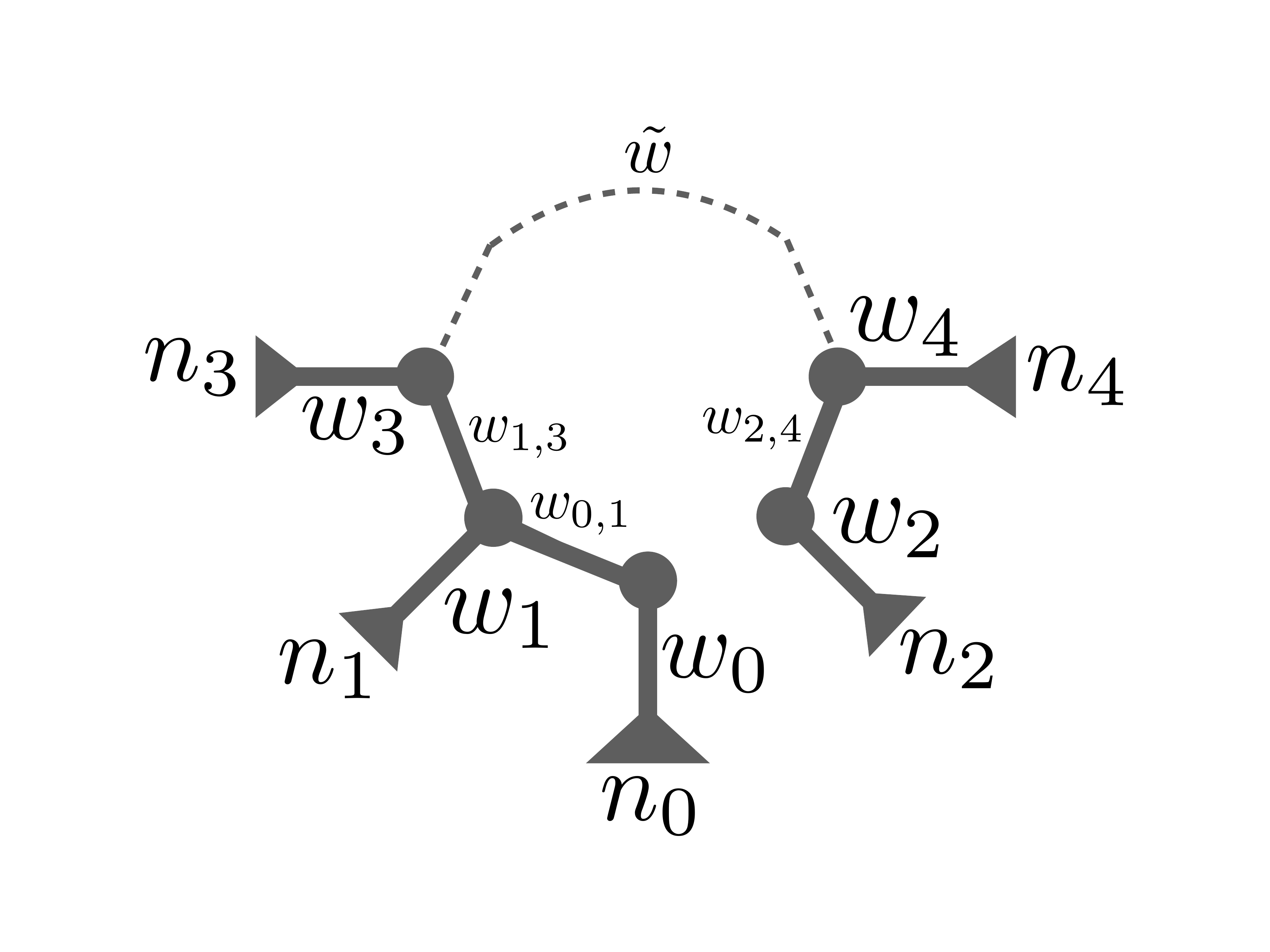}
    \end{minipage}
    \caption{5-cycle network and its corresponding tree representation
      (see Lemma \ref{h1} for justification on comparison to tree). We
      find solutions to the equations
      $CF(\mathcal{N},\bs{z},\bs{\gamma})=CF(\mathcal{T},\bs{w})$ that
      would imply that the same set of CFs could be produced by both
      the network $\mathcal{N}$ and the tree $\mathcal{T}$. The dotted
      path represent the possibility of a large cycle ($k>5$) which by
      Lemma \ref{k5} has the same hybridization detectability
      properties. Note that we do not eliminate degree-2 nodes in the
      tree $\mathcal{T}$ simply for ease of notation and comparison to
      $\mathcal{N}$. Here $z_i=\exp(-t_i)$ for branch length $t_i$ in
      $\mathcal{N}$, and $w_i=\exp(-t'_i)$ for branch length $t'_i$ in
      $\mathcal{T}$.}
    \label{k5nettree}
  \end{figure}

\begin{landscape}
    \begin{table}[ht!]
      \caption{Systems of CF polynomial equations for the case of
        $k_i\geq 5$. Here we do not repeat the minor CF equations for
        the cases when there are two equal. Also,
        $n=(n_0,n_1,n_2,n_3,n_4)$ as in Figure \ref{k5nettree}, and
        ``Type'' corresponds to the type of quarnet in Figure
        \ref{5quartets}.}
      \label{k5table-1}
      \centering
      \begin{tabular}{llll}
        \toprule
$ (0,2,0,2,0) $ & Tree & $ 1-\frac{2}{3} z_{3}z_{1,3}z_{1} $ & $
                                                                1-\frac{2}{3}
                                                                w_{3}w_{1,3}w_{1}
                                                                $ \\
        & & $ \frac{1}{3} z_{3}z_{1,3}z_{1} $ & $ \frac{1}{3} w_{3}w_{1,3}w_{1} $ \\ \midrule
			$ (1,1,0,2,0) $ & $ 2 $ & $ (1- \gamma )\left(1-\frac{2}{3} z_{1,3}z_{3} \right) + \gamma \left(1-\frac{2}{3} z_{3} \right) $ & $ 1-\frac{2}{3} w_{3}w_{1,3} $ \\ 
			& & $ (1- \gamma )\frac{1}{3} z_{1,3}z_{3} + \gamma \frac{1}{3} z_{3} $ & $ \frac{1}{3} w_{3}w_{1,3} $ \\ \midrule
			$ (1,2,0,1,0) $ & $ 2 $ & $ (1- \gamma )\left(1-\frac{2}{3} z_{1} \right) + \gamma \left(1-\frac{2}{3} z_{1,3}z_{1} \right) $ & $ 1-\frac{2}{3} w_{1} $ \\ 
			& & $ (1- \gamma )\frac{1}{3} z_{1} + \gamma \frac{1}{3} z_{1,3}z_{1} $ & $ \frac{1}{3} w_{1} $ \\ \midrule
			$ (2,0,0,2,0) $ & $ 4 $ & $ (1- \gamma )^2\left(1-\frac{2}{3} z_{3}z_{0}z_{0,1}z_{1,3} \right) +2 \gamma (1- \gamma )\left(1-\frac{2}{3} z_{3}z_{0} \right) + \gamma ^2\left(1-\frac{2}{3} z_{3}z_{0}z_{0,2}z_{2,4}z_{4,6}z_{6,7}z_{5,7}z_{3,5} \right) $ & $ 1-\frac{2}{3} w_{3}w_{1,3}w_{0,1}w_{0} $ \\ 
			& & $ (1- \gamma )^2\frac{1}{3} z_{3}z_{0}z_{0,1}z_{1,3} +2 \gamma (1- \gamma )\frac{1}{3} z_{3}z_{0} + \gamma ^2\frac{1}{3} z_{3}z_{0}z_{0,2}z_{2,4}z_{4,6}z_{6,7}z_{5,7}z_{3,5} $ & $ \frac{1}{3} w_{3}w_{1,3}w_{0,1}w_{0} $ \\ \midrule
			$ (2,1,0,1,0) $ & $ 1 $ & $ (1- \gamma )^2\left(1-\frac{2}{3} z_{0}z_{0,1} \right) +2 \gamma (1- \gamma )\left(1- z_{0} +\frac{1}{3} z_{0}z_{1,3} \right) + \gamma ^2\left(1-\frac{2}{3} z_{0}z_{0,2}z_{2,4}z_{4,6}z_{6,7}z_{5,7}z_{3,5} \right) $ & $ 1-\frac{2}{3} w_{0,1}w_{0} $ \\ 
			& & $ (1- \gamma )^2\frac{1}{3} z_{0}z_{0,1} + \gamma (1- \gamma ) z_{0} \left(1-\frac{1}{3} z_{1,3} \right) + \gamma ^2\frac{1}{3} z_{0}z_{0,2}z_{2,4}z_{4,6}z_{6,7}z_{5,7}z_{3,5} $ & $ \frac{1}{3} w_{0,1}w_{0} $ \\ \midrule
			$ (2,2,0,0,0) $ & $ 4 $ & $ (1- \gamma )^2\left(1-\frac{2}{3} z_{1}z_{0}z_{0,1} \right) +2 \gamma (1- \gamma )\left(1-\frac{2}{3} z_{1}z_{0} \right) + \gamma ^2\left(1-\frac{2}{3} z_{1}z_{0}z_{0,2}z_{2,4}z_{4,6}z_{6,7}z_{5,7}z_{3,5}z_{1,3} \right) $ & $ 1-\frac{2}{3} w_{1}w_{0,1}w_{0} $ \\ 
			& & $ (1- \gamma )^2\frac{1}{3} z_{1}z_{0}z_{0,1} +2 \gamma (1- \gamma )\frac{1}{3} z_{1}z_{0} + \gamma ^2\frac{1}{3} z_{1}z_{0}z_{0,2}z_{2,4}z_{4,6}z_{6,7}z_{5,7}z_{3,5}z_{1,3} $ & $ \frac{1}{3} w_{1}w_{0,1}w_{0} $ \\ \midrule
			$ (0,1,1,2,0) $ & Tree & $ 1-\frac{2}{3}
                                                 z_{3} $ & $
                                                           1-\frac{2}{3}
                                                           w_{3} $ \\
        			& & $ \frac{1}{3} z_{3} $ & $ \frac{1}{3} w_{3} $ \\ \midrule
			$ (0,2,1,1,0) $ & Tree & $ 1-\frac{2}{3}
                                                 z_{1,3}z_{1} $ & $
                                                                  1-\frac{2}{3}
                                                                  w_{1,3}w_{1}
                                                                  $ \\

        & & $ \frac{1}{3} z_{1,3}z_{1} $ & $ \frac{1}{3} w_{1,3}w_{1} $ \\ \midrule
			$ (1,0,1,2,0) $ & $ 2 $ & $ (1- \gamma )\left(1-\frac{2}{3} z_{3} \right) + \gamma \left(1-\frac{2}{3} z_{2,4}z_{4,6}z_{6,7}z_{5,7}z_{3,5}z_{3} \right) $ & $ 1-\frac{2}{3} w_{3} $ \\ 
			& & $ (1- \gamma )\frac{1}{3} z_{3} + \gamma
                            \frac{1}{3}
                            z_{2,4}z_{4,6}z_{6,7}z_{5,7}z_{3,5}z_{3} $
                                                              & $
                                                                \frac{1}{3}
                                                                w_{3}
                                                                $ \\
        \midrule
        $ (1,1,1,1,0) $ & $ 3 $ & $ (1- \gamma )\left(1-\frac{2}{3} z_{1,3} \right) + \gamma \frac{1}{3} z_{2,4}z_{4,6}z_{6,7}z_{5,7}z_{3,5} $ & $ 1-\frac{2}{3} w_{1,3} $ \\ 
			& & $ (1- \gamma )\frac{1}{3} z_{1,3} + \gamma \left(1-\frac{2}{3} z_{2,4}z_{4,6}z_{6,7}z_{5,7}z_{3,5} \right) $ & $ \frac{1}{3} w_{1,3} $ \\ 
			& & $ (1- \gamma )\frac{1}{3} z_{1,3} + \gamma
                            \frac{1}{3}
                            z_{2,4}z_{4,6}z_{6,7}z_{5,7}z_{3,5} $ & $
                                                                    \frac{1}{3}
                                                                    w_{1,3}
                                                                    $
        \\ 
                \bottomrule
    \end{tabular}
    \end{table}

  \begin{table}[ht!]
      \caption{Systems of CF polynomial equations for the case of
        $k_i\geq 5$. Here we do not repeat the minor CF equations for the
        cases when there are two equal. Also,
        $n=(n_0,n_1,n_2,n_3,n_4)$ as in Figure \ref{k5nettree}, and
        ``Type'' corresponds to the type of quarnet in Figure
        \ref{5quartets}.}
      \label{k5table-2}
      \centering
      \begin{tabular}{llll}
        \toprule
			$ (1,2,1,0,0) $ & $ 2 $ & $ (1- \gamma )\left(1-\frac{2}{3} z_{1} \right) + \gamma \left(1-\frac{2}{3} z_{2,4}z_{4,6}z_{6,7}z_{5,7}z_{3,5}z_{1,3}z_{1} \right) $ & $ 1-\frac{2}{3} w_{1} $ \\ 
			& & $ (1- \gamma )\frac{1}{3} z_{1} + \gamma \frac{1}{3} z_{2,4}z_{4,6}z_{6,7}z_{5,7}z_{3,5}z_{1,3}z_{1} $ & $ \frac{1}{3} w_{1} $ \\ \midrule
			$ (2,0,1,1,0) $ & $ 1 $ & $ (1- \gamma )^2\left(1-\frac{2}{3} z_{0}z_{1,3}z_{0,1} \right) +2 \gamma (1- \gamma )\left(1- z_{0} +\frac{1}{3} z_{0}z_{2,4}z_{4,6}z_{6,7}z_{5,7}z_{3,5} \right) + \gamma ^2\left(1-\frac{2}{3} z_{0}z_{0,2} \right) $ & $ 1-\frac{2}{3} w_{1,3}w_{0,1}w_{0} $ \\ 
			& & $ (1- \gamma )^2\frac{1}{3} z_{0}z_{1,3}z_{0,1} + \gamma (1- \gamma ) z_{0} \left(1-\frac{1}{3} z_{2,4}z_{4,6}z_{6,7}z_{5,7}z_{3,5} \right) + \gamma ^2\frac{1}{3} z_{0}z_{0,2} $ & $ \frac{1}{3} w_{1,3}w_{0,1}w_{0} $ \\ \midrule
			$ (2,1,1,0,0) $ & $ 1 $ & $ (1- \gamma )^2\left(1-\frac{2}{3} z_{0}z_{0,1} \right) +2 \gamma (1- \gamma )\left(1- z_{0} +\frac{1}{3} z_{0}z_{2,4}z_{4,6}z_{6,7}z_{5,7}z_{3,5}z_{1,3} \right) + \gamma ^2\left(1-\frac{2}{3} z_{0}z_{0,2} \right) $ & $ 1-\frac{2}{3} w_{0,1}w_{0} $ \\ 
			& & $ (1- \gamma )^2\frac{1}{3} z_{0}z_{0,1} + \gamma (1- \gamma ) z_{0} \left(1-\frac{1}{3} z_{2,4}z_{4,6}z_{6,7}z_{5,7}z_{3,5}z_{1,3} \right) + \gamma ^2\frac{1}{3} z_{0}z_{0,2} $ & $ \frac{1}{3} w_{0,1}w_{0} $ \\ \midrule
			$ (0,0,2,2,0) $ & Tree & $ 1-\frac{2}{3}
                                                 z_{2}z_{2,4}z_{4,6}z_{6,7}z_{5,7}z_{3,5}z_{3}
                                                 $ & $ 1-\frac{2}{3}
                                                     w_{2}w_{2,4}w_{4,6}w_{6,7}w_{5,7}w_{3,5}w_{3}
                                                     $ \\ 
        & & $ \frac{1}{3} z_{2}z_{2,4}z_{4,6}z_{6,7}z_{5,7}z_{3,5}z_{3} $ & $ \frac{1}{3} w_{2}w_{2,4}w_{4,6}w_{6,7}w_{5,7}w_{3,5}w_{3} $ \\ \midrule
			$ (0,1,2,1,0) $ & Tree & $ 1-\frac{2}{3} z_{3,5}z_{5,7}z_{6,7}z_{4,6}z_{2,4}z_{2} $ & $ 1-\frac{2}{3} w_{3,5}w_{5,7}w_{6,7}w_{4,6}w_{2,4}w_{2} $ \\
			& & $ \frac{1}{3} z_{3,5}z_{5,7}z_{6,7}z_{4,6}z_{2,4}z_{2} $ & $ \frac{1}{3} w_{3,5}w_{5,7}w_{6,7}w_{4,6}w_{2,4}w_{2} $ \\ \midrule
			$ (0,2,2,0,0) $ & Tree & $ 1-\frac{2}{3} z_{2}z_{2,4}z_{4,6}z_{6,7}z_{5,7}z_{3,5}z_{1,3}z_{1} $ & $ 1-\frac{2}{3} w_{2}w_{2,4}w_{4,6}w_{6,7}w_{5,7}w_{3,5}w_{1,3}w_{1} $ \\ 
			& & $ \frac{1}{3} z_{2}z_{2,4}z_{4,6}z_{6,7}z_{5,7}z_{3,5}z_{1,3}z_{1} $ & $ \frac{1}{3} w_{2}w_{2,4}w_{4,6}w_{6,7}w_{5,7}w_{3,5}w_{1,3}w_{1} $ \\ \midrule
			$ (1,0,2,1,0) $ & $ 2 $ & $ (1- \gamma )\left(1-\frac{2}{3} z_{3,5}z_{5,7}z_{6,7}z_{4,6}z_{2,4}z_{2} \right) + \gamma \left(1-\frac{2}{3} z_{2} \right) $ & $ 1-\frac{2}{3} w_{3,5}w_{5,7}w_{6,7}w_{4,6}w_{2,4}w_{2} $ \\ 
			& & $ (1- \gamma )\frac{1}{3} z_{3,5}z_{5,7}z_{6,7}z_{4,6}z_{2,4}z_{2} + \gamma \frac{1}{3} z_{2} $ & $ \frac{1}{3} w_{3,5}w_{5,7}w_{6,7}w_{4,6}w_{2,4}w_{2} $ \\ \midrule
			$ (1,1,2,0,0) $ & $ 2 $ & $ (1- \gamma )\left(1-\frac{2}{3} z_{1,3}z_{3,5}z_{5,7}z_{6,7}z_{4,6}z_{2,4}z_{2} \right) + \gamma \left(1-\frac{2}{3} z_{2} \right) $ & $ 1-\frac{2}{3} w_{1,3}w_{3,5}w_{5,7}w_{6,7}w_{4,6}w_{2,4}w_{2} $ \\ 
			& & $ (1- \gamma )\frac{1}{3} z_{1,3}z_{3,5}z_{5,7}z_{6,7}z_{4,6}z_{2,4}z_{2} + \gamma \frac{1}{3} z_{2} $ & $ \frac{1}{3} w_{1,3}w_{3,5}w_{5,7}w_{6,7}w_{4,6}w_{2,4}w_{2} $ \\ \midrule
			$ (2,0,2,0,0) $ & $ 4 $ & $ (1- \gamma )^2\left(1-\frac{2}{3} z_{2}z_{0}z_{0,1}z_{1,3}z_{3,5}z_{5,7}z_{6,7}z_{4,6}z_{2,4} \right) +2 \gamma (1- \gamma )\left(1-\frac{2}{3} z_{2}z_{0} \right) + \gamma ^2\left(1-\frac{2}{3} z_{2}z_{0}z_{0,2} \right) $ & $ 1-\frac{2}{3} w_{2}w_{2,4}w_{4,6}w_{6,7}w_{5,7}w_{3,5}w_{1,3}w_{0,1}w_{0} $ \\ 
			& & $ (1- \gamma )^2\frac{1}{3} z_{2}z_{0}z_{0,1}z_{1,3}z_{3,5}z_{5,7}z_{6,7}z_{4,6}z_{2,4} +2 \gamma (1- \gamma )\frac{1}{3} z_{2}z_{0} + \gamma ^2\frac{1}{3} z_{2}z_{0}z_{0,2} $ & $ \frac{1}{3} w_{2}w_{2,4}w_{4,6}w_{6,7}w_{5,7}w_{3,5}w_{1,3}w_{0,1}w_{0} $ \\ \midrule
			$ (0,1,0,2,1) $ & Tree & $ 1-\frac{2}{3} z_{3}
                                                 $ & $ 1-\frac{2}{3}
                                                     w_{3} $ \\
        			& & $ \frac{1}{3} z_{3} $ & $ \frac{1}{3} w_{3} $ \\ \midrule
			$ (0,2,0,1,1) $ & Tree & $ 1-\frac{2}{3}
                                                 z_{1,3}z_{1} $ & $
                                                                  1-\frac{2}{3}
                                                                  w_{1,3}w_{1}
                                                                  $ \\
        & & $ \frac{1}{3} z_{1,3}z_{1} $ & $ \frac{1}{3} w_{1,3}w_{1}
                                           $ \\
                        \bottomrule
    \end{tabular}
    \end{table}

  \begin{table}[ht!]
      \caption{Systems of CF polynomial equations for the case of
        $k_i\geq 5$. Here we do not repeat the minor CF equations for the
        cases when there are two equal. Also,
        $n=(n_0,n_1,n_2,n_3,n_4)$ as in Figure \ref{k5nettree}, and
        ``Type'' corresponds to the type of quarnet in Figure
        \ref{5quartets}.}
      \label{k5table-3}
      \centering
      \begin{tabular}{llll}
        \toprule
			$ (1,0,0,2,1) $ & $ 2 $ & $ (1- \gamma )\left(1-\frac{2}{3} z_{3} \right) + \gamma \left(1-\frac{2}{3} z_{4,6}z_{6,7}z_{5,7}z_{3,5}z_{3} \right) $ & $ 1-\frac{2}{3} w_{3} $ \\ 
			& & $ (1- \gamma )\frac{1}{3} z_{3} + \gamma \frac{1}{3} z_{4,6}z_{6,7}z_{5,7}z_{3,5}z_{3} $ & $ \frac{1}{3} w_{3} $ \\ \midrule
			$ (1,1,0,1,1) $ & $ 3 $ & $ (1- \gamma )\left(1-\frac{2}{3} z_{1,3} \right) + \gamma \frac{1}{3} z_{2,4} $ & $ 1-\frac{2}{3} w_{1,3} $ \\ 
			& & $ (1- \gamma )\frac{1}{3} z_{1,3} + \gamma \left(1-\frac{2}{3} z_{2,4} \right) $ & $ \frac{1}{3} w_{1,3} $ \\ 
			& & $ (1- \gamma )\frac{1}{3} z_{1,3} + \gamma \frac{1}{3} z_{2,4} $ & $ \frac{1}{3} w_{1,3} $ \\ \midrule
			$ (1,2,0,0,1) $ & $ 2 $ & $ (1- \gamma )\left(1-\frac{2}{3} z_{1} \right) + \gamma \left(1-\frac{2}{3} z_{4,6}z_{6,7}z_{5,7}z_{3,5}z_{1,3}z_{1} \right) $ & $ 1-\frac{2}{3} w_{1} $ \\ 
			& & $ (1- \gamma )\frac{1}{3} z_{1} + \gamma \frac{1}{3} z_{4,6}z_{6,7}z_{5,7}z_{3,5}z_{1,3}z_{1} $ & $ \frac{1}{3} w_{1} $ \\ \midrule
			$ (2,0,0,1,1) $ & $ 1 $ & $ (1- \gamma )^2\left(1-\frac{2}{3} z_{0}z_{0,1}z_{1,3} \right) +2 \gamma (1- \gamma )\left(1- z_{0} +\frac{1}{3} z_{0}z_{4,6}z_{6,7}z_{5,7}z_{3,5} \right) + \gamma ^2\left(1-\frac{2}{3} z_{0}z_{0,2}z_{2,4} \right) $ & $ 1-\frac{2}{3} w_{1,3}w_{0,1}w_{0} $ \\ 
			& & $ (1- \gamma )^2\frac{1}{3} z_{0}z_{0,1}z_{1,3} + \gamma (1- \gamma ) z_{0} \left(1-\frac{1}{3} z_{4,6}z_{6,7}z_{5,7}z_{3,5} \right) + \gamma ^2\frac{1}{3} z_{0}z_{0,2}z_{2,4} $ & $ \frac{1}{3} w_{1,3}w_{0,1}w_{0} $ \\ \midrule
			$ (2,1,0,0,1) $ & $ 1 $ & $ (1- \gamma )^2\left(1-\frac{2}{3} z_{0}z_{0,1} \right) +2 \gamma (1- \gamma )\left(1- z_{0} +\frac{1}{3} z_{0}z_{4,6}z_{6,7}z_{5,7}z_{3,5}z_{1,3} \right) + \gamma ^2\left(1-\frac{2}{3} z_{0}z_{0,2}z_{2,4} \right) $ & $ 1-\frac{2}{3} w_{0,1}w_{0} $ \\ 
			& & $ (1- \gamma )^2\frac{1}{3} z_{0}z_{0,1} + \gamma (1- \gamma ) z_{0} \left(1-\frac{1}{3} z_{4,6}z_{6,7}z_{5,7}z_{3,5}z_{1,3} \right) + \gamma ^2\frac{1}{3} z_{0}z_{0,2}z_{2,4} $ & $ \frac{1}{3} w_{0,1}w_{0} $ \\ \midrule
			$ (0,0,1,2,1) $ & Tree & $ 1-\frac{2}{3} z_{4,6}z_{6,7}z_{5,7}z_{3,5}z_{3} $ & $ 1-\frac{2}{3} w_{4,6}w_{6,7}w_{5,7}w_{3,5}w_{3} $ \\
			& & $ \frac{1}{3} z_{4,6}z_{6,7}z_{5,7}z_{3,5}z_{3} $ & $ \frac{1}{3} w_{4,6}w_{6,7}w_{5,7}w_{3,5}w_{3} $ \\ \midrule
			$ (0,1,1,1,1) $ & Tree & $ 1-\frac{2}{3} z_{4,6}z_{6,7}z_{5,7}z_{3,5} $ & $ 1-\frac{2}{3} w_{4,6}w_{6,7}w_{5,7}w_{3,5} $ \\
			& & $ \frac{1}{3} z_{4,6}z_{6,7}z_{5,7}z_{3,5} $ & $ \frac{1}{3} w_{4,6}w_{6,7}w_{5,7}w_{3,5} $ \\ \midrule
			$ (0,2,1,0,1) $ & Tree & $ 1-\frac{2}{3} z_{4,6}z_{6,7}z_{5,7}z_{3,5}z_{1,3}z_{1} $ & $ 1-\frac{2}{3} w_{4,6}w_{6,7}w_{5,7}w_{3,5}w_{1,3}w_{1} $ \\
			& & $ \frac{1}{3} z_{4,6}z_{6,7}z_{5,7}z_{3,5}z_{1,3}z_{1} $ & $ \frac{1}{3} w_{4,6}w_{6,7}w_{5,7}w_{3,5}w_{1,3}w_{1} $ \\ \midrule
			$ (1,0,1,1,1) $ & $ 3 $ & $ (1- \gamma )\left(1-\frac{2}{3} z_{4,6}z_{6,7}z_{5,7}z_{3,5} \right) + \gamma \frac{1}{3} z_{2,4} $ & $ 1-\frac{2}{3} w_{4,6}w_{6,7}w_{5,7}w_{3,5} $ \\ 
			& & $ (1- \gamma )\frac{1}{3} z_{4,6}z_{6,7}z_{5,7}z_{3,5} + \gamma \left(1-\frac{2}{3} z_{2,4} \right) $ & $ \frac{1}{3} w_{4,6}w_{6,7}w_{5,7}w_{3,5} $ \\ 
			& & $ (1- \gamma )\frac{1}{3}
                            z_{4,6}z_{6,7}z_{5,7}z_{3,5} + \gamma
                            \frac{1}{3} z_{2,4} $ & $ \frac{1}{3}
                                                    w_{4,6}w_{6,7}w_{5,7}w_{3,5}
                                                    $ \\ \midrule
			$ (1,1,1,0,1) $ & $ 3 $ & $ (1- \gamma )\left(1-\frac{2}{3} z_{4,6}z_{6,7}z_{5,7}z_{3,5}z_{1,3} \right) + \gamma \frac{1}{3} z_{2,4} $ & $ 1-\frac{2}{3} w_{4,6}w_{6,7}w_{5,7}w_{3,5}w_{1,3} $ \\ 
			& & $ (1- \gamma )\frac{1}{3} z_{4,6}z_{6,7}z_{5,7}z_{3,5}z_{1,3} + \gamma \left(1-\frac{2}{3} z_{2,4} \right) $ & $ \frac{1}{3} w_{4,6}w_{6,7}w_{5,7}w_{3,5}w_{1,3} $ \\ 
			& & $ (1- \gamma )\frac{1}{3} z_{4,6}z_{6,7}z_{5,7}z_{3,5}z_{1,3} + \gamma \frac{1}{3} z_{2,4} $ & $ \frac{1}{3} w_{4,6}w_{6,7}w_{5,7}w_{3,5}w_{1,3} $ \\
                                \bottomrule
    \end{tabular}
    \end{table}

  \begin{table}[ht!]
      \caption{Systems of CF polynomial equations for the case of
        $k_i\geq 5$. Here we do not repeat the minor CF equations for the
        cases when there are two equal. Also,
        $n=(n_0,n_1,n_2,n_3,n_4)$ as in Figure \ref{k5nettree}, and
        ``Type'' corresponds to the type of quarnet in Figure
        \ref{5quartets}.}
      \label{k5table-4}
      \centering
      \begin{tabular}{llll}
        \toprule
			$ (2,0,1,0,1) $ & $ 1 $ & $ (1- \gamma )^2\left(1-\frac{2}{3} z_{0}z_{0,1}z_{1,3}z_{3,5}z_{5,7}z_{6,7}z_{4,6} \right) +2 \gamma (1- \gamma )\left(1- z_{0} +\frac{1}{3} z_{0}z_{2,4} \right) + \gamma ^2\left(1-\frac{2}{3} z_{0}z_{0,2} \right) $ & $ 1-\frac{2}{3} w_{4,6}w_{6,7}w_{5,7}w_{3,5}w_{1,3}w_{0,1}w_{0} $ \\ 
			& & $ (1- \gamma )^2\frac{1}{3} z_{0}z_{0,1}z_{1,3}z_{3,5}z_{5,7}z_{6,7}z_{4,6} + \gamma (1- \gamma ) z_{0} \left(1-\frac{1}{3} z_{2,4} \right) + \gamma ^2\frac{1}{3} z_{0}z_{0,2} $ & $ \frac{1}{3} w_{4,6}w_{6,7}w_{5,7}w_{3,5}w_{1,3}w_{0,1}w_{0} $ \\ \midrule
			$ (0,0,2,1,1) $ & Tree & $ 1-\frac{2}{3}
                                                 z_{2,4}z_{2} $ & $
                                                                  1-\frac{2}{3}
                                                                  w_{2,4}w_{2}
                                                                  $ \\
       & & $ \frac{1}{3} z_{2,4}z_{2} $ & $ \frac{1}{3} w_{2,4}w_{2} $ \\ \midrule
			$ (0,1,2,0,1) $ & Tree & $ 1-\frac{2}{3}
                                                 z_{2,4}z_{2} $ & $
                                                                  1-\frac{2}{3}
                                                                  w_{2,4}w_{2}
                                                                  $ \\
        		& & $ \frac{1}{3} z_{2,4}z_{2} $ & $ \frac{1}{3} w_{2,4}w_{2} $ \\ \midrule
			$ (1,0,2,0,1) $ & $ 2 $ & $ (1- \gamma )\left(1-\frac{2}{3} z_{2,4}z_{2} \right) + \gamma \left(1-\frac{2}{3} z_{2} \right) $ & $ 1-\frac{2}{3} w_{2,4}w_{2} $ \\ 
			& & $ (1- \gamma )\frac{1}{3} z_{2,4}z_{2} + \gamma \frac{1}{3} z_{2} $ & $ \frac{1}{3} w_{2,4}w_{2} $ \\ \midrule
			$ (0,0,0,2,2) $ & Tree & $ 1-\frac{2}{3} z_{4}z_{4,6}z_{6,7}z_{5,7}z_{3,5}z_{3} $ & $ 1-\frac{2}{3} w_{4}w_{4,6}w_{6,7}w_{5,7}w_{3,5}w_{3} $ \\
			& & $ \frac{1}{3} z_{4}z_{4,6}z_{6,7}z_{5,7}z_{3,5}z_{3} $ & $ \frac{1}{3} w_{4}w_{4,6}w_{6,7}w_{5,7}w_{3,5}w_{3} $ \\ \midrule
			$ (0,1,0,1,2) $ & Tree & $ 1-\frac{2}{3} z_{3,5}z_{5,7}z_{6,7}z_{4,6}z_{4} $ & $ 1-\frac{2}{3} w_{3,5}w_{5,7}w_{6,7}w_{4,6}w_{4} $ \\ 
			& & $ \frac{1}{3} z_{3,5}z_{5,7}z_{6,7}z_{4,6}z_{4} $ & $ \frac{1}{3} w_{3,5}w_{5,7}w_{6,7}w_{4,6}w_{4} $ \\ \midrule
			$ (0,2,0,0,2) $ & Tree & $ 1-\frac{2}{3} z_{4}z_{4,6}z_{6,7}z_{5,7}z_{3,5}z_{1,3}z_{1} $ & $ 1-\frac{2}{3} w_{4}w_{4,6}w_{6,7}w_{5,7}w_{3,5}w_{1,3}w_{1} $ \\ 
			& & $ \frac{1}{3} z_{4}z_{4,6}z_{6,7}z_{5,7}z_{3,5}z_{1,3}z_{1} $ & $ \frac{1}{3} w_{4}w_{4,6}w_{6,7}w_{5,7}w_{3,5}w_{1,3}w_{1} $ \\ \midrule
			$ (1,0,0,1,2) $ & $ 2 $ & $ (1- \gamma )\left(1-\frac{2}{3} z_{3,5}z_{5,7}z_{6,7}z_{4,6}z_{4} \right) + \gamma \left(1-\frac{2}{3} z_{4} \right) $ & $ 1-\frac{2}{3} w_{3,5}w_{5,7}w_{6,7}w_{4,6}w_{4} $ \\ 
			& & $ (1- \gamma )\frac{1}{3} z_{3,5}z_{5,7}z_{6,7}z_{4,6}z_{4} + \gamma \frac{1}{3} z_{4} $ & $ \frac{1}{3} w_{3,5}w_{5,7}w_{6,7}w_{4,6}w_{4} $ \\ \midrule
			$ (1,1,0,0,2) $ & $ 2 $ & $ (1- \gamma )\left(1-\frac{2}{3} z_{1,3}z_{3,5}z_{5,7}z_{6,7}z_{4,6}z_{4} \right) + \gamma \left(1-\frac{2}{3} z_{4} \right) $ & $ 1-\frac{2}{3} w_{4}w_{4,6}w_{6,7}w_{5,7}w_{3,5}w_{1,3} $ \\ 
			& & $ (1- \gamma )\frac{1}{3} z_{1,3}z_{3,5}z_{5,7}z_{6,7}z_{4,6}z_{4} + \gamma \frac{1}{3} z_{4} $ & $ \frac{1}{3} w_{4}w_{4,6}w_{6,7}w_{5,7}w_{3,5}w_{1,3} $ \\ \midrule
			$ (2,0,0,0,2) $ & $ 4 $ & $ (1- \gamma )^2\left(1-\frac{2}{3} z_{4}z_{0}z_{0,1}z_{1,3}z_{3,5}z_{5,7}z_{6,7}z_{4,6} \right) +2 \gamma (1- \gamma )\left(1-\frac{2}{3} z_{4}z_{0} \right) + \gamma ^2\left(1-\frac{2}{3} z_{4}z_{0}z_{0,2}z_{2,4} \right) $ & $ 1-\frac{2}{3} w_{4}w_{4,6}w_{6,7}w_{5,7}w_{3,5}w_{1,3}w_{0,1}w_{0} $ \\ 
			& & $ (1- \gamma )^2\frac{1}{3} z_{4}z_{0}z_{0,1}z_{1,3}z_{3,5}z_{5,7}z_{6,7}z_{4,6} +2 \gamma (1- \gamma )\frac{1}{3} z_{4}z_{0} + \gamma ^2\frac{1}{3} z_{4}z_{0}z_{0,2}z_{2,4} $ & $ \frac{1}{3} w_{4}w_{4,6}w_{6,7}w_{5,7}w_{3,5}w_{1,3}w_{0,1}w_{0} $ \\ \midrule
			$ (0,0,1,1,2) $ & Tree & $ 1-\frac{2}{3} z_{4}
                                                 $ & $ 1-\frac{2}{3}
                                                     w_{4} $ \\
                                        & & $ \frac{1}{3} z_{4} $ & $ \frac{1}{3} w_{4} $ \\
        \bottomrule
    \end{tabular}
    \end{table}
  \end{landscape}

  \begin{table}[ht!]
      \caption{Systems of CF polynomial equations for the case of
        $k_i\geq 5$. Here we do not repeat the minor CF equations for the
        cases when there are two equal. Also,
        $n=(n_0,n_1,n_2,n_3,n_4)$ as in Figure \ref{k5nettree}, and
        ``Type'' corresponds to the type of quarnet in Figure
        \ref{5quartets}.}
      \label{k5table-5}
      \centering
      \begin{tabular}{llll}
        \toprule
			$ (0,1,1,0,2) $ & Tree & $ 1-\frac{2}{3} z_{4}
                                                 $ & $ 1-\frac{2}{3}
                                                     w_{4} $ \\
        & & $ \frac{1}{3} z_{4} $ & $ \frac{1}{3} w_{4} $ \\ \midrule
			$ (1,0,1,0,2) $ & $ 2 $ & $ (1- \gamma )\left(1-\frac{2}{3} z_{4} \right) + \gamma \left(1-\frac{2}{3} z_{2,4}z_{4} \right) $ & $ 1-\frac{2}{3} w_{4} $ \\ 
			& & $ (1- \gamma )\frac{1}{3} z_{4} + \gamma \frac{1}{3} z_{2,4}z_{4} $ & $ \frac{1}{3} w_{4} $ \\ \midrule
			$ (0,0,2,0,2) $ & Tree & $ 1-\frac{2}{3}
                                                 z_{2}z_{2,4}z_{4} $ &
                                                                       $ 1-\frac{2}{3} w_{2}w_{2,4}w_{4} $ \\
        & & $ \frac{1}{3} z_{2}z_{2,4}z_{4} $ & $ \frac{1}{3} w_{2}w_{2,4}w_{4} $ \\
        \bottomrule
    \end{tabular}
    \end{table}

\end{document}